\documentclass[letterpaper,12pt,titlepage,oneside,final]{book}
 

\newcommand{\href}[1]{#1} 

\usepackage{ifthen}
\newboolean{PrintVersion}
\setboolean{PrintVersion}{false} 

\usepackage{amsmath,amssymb,amstext,enumitem} 
\usepackage[pdftex]{graphicx} 

\usepackage[numbers, sort&compress]{natbib}
\usepackage[all]{xy}
\usepackage[pdftex,letterpaper=true,pagebackref=false]{hyperref} 
\hypersetup{
    plainpages=false,       
    pdfpagelabels=true,     
    bookmarks=true,         
    unicode=false,          
    pdftoolbar=true,        
    pdfmenubar=true,        
    pdffitwindow=false,     
    pdfstartview={FitH},    
    pdftitle={A Classification of (2+1)D Topological Phases with Symmetries},    
    pdfauthor={Tian Lan},    
    pdfnewwindow=true,      
    colorlinks=true,        
    linkcolor=blue,         
    citecolor=green,        
    filecolor=magenta,      
    urlcolor=cyan           
}
\ifthenelse{\boolean{PrintVersion}}{   
\hypersetup{	
    citecolor=black,%
    filecolor=black,%
    linkcolor=black,%
    urlcolor=black}
}{} 

\setlength{\marginparwidth}{0pt} 
\setlength{\marginparsep}{0pt} 
\setlength{\evensidemargin}{0.125in} 
\setlength{\oddsidemargin}{0.125in} 
\setlength{\textwidth}{6.375in} 
\raggedbottom

\setlength{\parskip}{\medskipamount}


\usepackage{amsthm,bm}
\usepackage{pdfpages}

\newcommand{\stck}[2]{{#1\atop #2}}

\DeclareMathOperator{\Tr}{Tr}

\DeclareMathOperator{\Hom}{Hom}
\DeclareMathOperator{\Rep}{Rep}
\DeclareMathOperator{\sRep}{sRep}
\DeclareMathOperator{\Rp}{Rep}
\DeclareMathOperator{\sRp}{sRep}
\DeclareMathOperator{\id}{id}

\newcommand{\ob}{\mathrm{Ob}}
\newcommand{\xt}{\times}
\newcommand{\ot}{\otimes}
\newcommand{\bt}{\boxtimes}

\newcommand{\ket}{\rangle}
\newcommand{\inj}{\hookrightarrow}

\newcommand{\Isb}{\overline{\text{Ising}}}
\newcommand{\Is}{{\text{Ising}}}
\newcommand{\cen}[2]{{#1}^\mathrm{cen}_{#2}}
\newcommand{\Ab}[1]{{#1}_{Ab}}

\newcommand{\one}{\mathbf{1}}

\newcommand{\Z}{\mathbb{Z}}

\newcommand{\C}{\mathbb{C}}
\newcommand{\Q}{\mathbb{Q}}

\newcommand{\ii}{\mathrm{i}}
\newcommand{\ee}{\mathrm{e}}

\newcommand{\<}{\langle} 
\renewcommand{\>}{\rangle} 

\newcommand{\Ref}[1]{Ref.~\cite{#1}}

\newcommand{\sgn}{{\rm sgn}}

\newcommand{\ie}{{i.e., }} 

\newcommand{\cA}{ {\cal A} } 
\newcommand{\cB}{ {\cal B} }
\newcommand{\cC}{ {\cal C} } 
\newcommand{\cD}{ {\cal D} } 
\newcommand{\cE}{ {\cal E} } 
\newcommand{\cF}{ {\cal F} }

\newcommand{\cI}{{\cal I}}

\newcommand{\cK}{ {\cal K} } 
 
\newcommand{\cM}{ {\cal M} } 
\newcommand{\cN}{ {\cal N} }

\newcommand{\cV}{ {\cal V} } 
\newcommand{\cW}{ {\cal W} } 
\newcommand{\cZ}{ {\cal Z} }

\newcommand{\Th}{\Theta}

\newcommand{\bpm}{\begin{pmatrix}}
\newcommand{\epm}{\end{pmatrix}}
\newcommand{\bmm}{\begin{matrix}}
\newcommand{\emm}{\end{matrix}}

\newcommand{\mce}[1]{\text{UMTC}_{/#1}}
\newcommand{\aute}{\mathrm{Aut}}

\theoremstyle{definition}
\newtheorem{thm}{Theorem}
\newtheorem{cor}[thm]{Corollary}

\newtheorem{dfn}{Definition}
\newtheorem{rmk}{Remark}
\newtheorem{lem}{Lemma}
\newtheorem{eg}{Example}
\newcommand{\mext}{\mathcal{M}_{ext}}
\newcommand{\ov}{\overline}
\newcommand{\Ve}{\mathbf{Vec}}
\newcommand{\Hilb}{\mathbf{Hilb}}
\newcommand{\sHilb}{\mathbf{sHilb}}
\newcommand{\Fun}{\mathrm{Fun}}

\let\origdoublepage\cleardoublepage
\newcommand{\clearemptydoublepage}{%
  \clearpage{\pagestyle{empty}\origdoublepage}}
\let\cleardoublepage\clearemptydoublepage

\begin{document}

\pagestyle{empty}
\pagenumbering{roman}

\begin{titlepage}
        \begin{center}
        \vspace*{1.0cm}

        \Huge
        {\bf A Classification of (2+1)D Topological Phases with Symmetries }

        \vspace*{1.0cm}

        \normalsize
        by \\

        \vspace*{1.0cm}

        \Large
        Tian Lan \\

        \vspace*{3.0cm}

        \normalsize
        A thesis \\
        presented to the University of Waterloo \\ 
        in fulfillment of the \\
        thesis requirement for the degree of \\
        Doctor of Philosophy \\
        in \\
        Physics \\

        \vspace*{2.0cm}

        Waterloo, Ontario, Canada, 2017 \\

        \vspace*{1.0cm}

        \copyright\ Tian Lan 2017 \\
        \end{center}
\end{titlepage}

\pagestyle{plain}
\setcounter{page}{2}

\cleardoublepage 
 
\includepdf{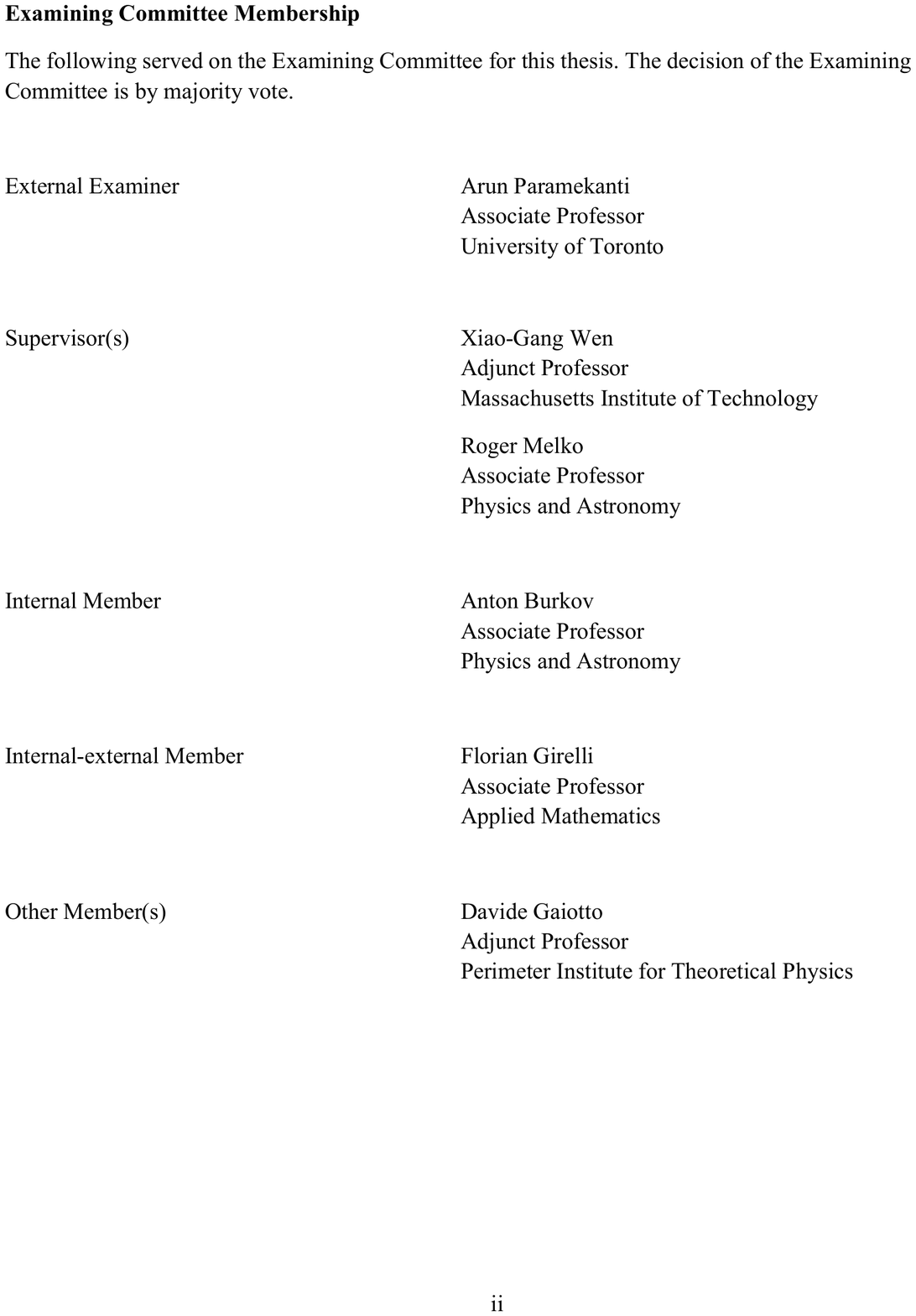}

  \noindent
I hereby declare that I am the sole author of this thesis. This is a true copy of the thesis, including any required final revisions, as accepted by my examiners.

  \bigskip
  
  \noindent
I understand that my thesis may be made electronically available to the public.

\cleardoublepage

\begin{center}\textbf{Abstract}\end{center}

This thesis aims at concluding the classification results for topological
phases with symmetry in 2+1 dimensions.
First, we know that ``trivial'' (i.e., not topological) phases with symmetry
can be classified by Landau symmetry breaking theory. If the Hamiltonian of the
system has symmetry group $G_H$, the symmetry of the ground state, however, can be
spontaneously broken and thus a smaller group $G$. In other words, different
symmetry breaking patterns are classified by $G\subset G_H$.

For topological phases, symmetry breaking is always a possibility. In this
thesis, for simplicity we assume that there is no symmetry breaking;
equivalently we always work with the symmetry group $G$ of the ground states.
We also restrict to the case that $G$ is finite and on-site.

The classification of topological phases is far beyond symmetry breaking
theory. There are two main exotic features in
topological phases: (1) protected chiral, or non-chiral but still gapless, edge
states; (2) fractional, or (even more wild) anyonic, quasiparticle
excitations that can have non-integer internal degrees of freedom,
fractional charges or spins and
non-Abelian braiding statistics.
In this thesis we achieved a full classification by studying the properties
of these exotic quasiparticle excitations. 

Firstly, we want to distinguish the exotic excitations with the ordinary ones.
Here the criteria is whether excitations can be created or annihilated by
\emph{local operators}. The ordinary ones can be created by local operators,
such as a spin flip in the Ising model, and will be referred to as {local
excitations}. The exotic ones can not be created by local operators, for
example a quasi-hole excitation with 1/3 charge in the $\nu=1/3$ Laughlin
state. Local operators can only create quasi-hole/quasi-electron pairs but never
a single quasi-hole. They will be referred to as \emph{topological excitations}.

Secondly, we know that local excitations always carries the representations of
the symmetry group $G$. This constitutes the first layer of our classification, a symmetric fusion category, $\cE=\Rep(G)$ for boson systems
or $\cE=\sRep(G^f)$ for fermion systems,
consisting of the representations of the symmetry group and describing the local
excitations with symmetry.

Thirdly, when we combine local excitations and topological excitations
together, all the excitations in the phase must form a consistent anyon
model. This constitutes the second layer of our classification, a unitary
braided fusion category $\cC$ describing all the
quasiparticle excitations in the bulk. It is clear that $\cE\subset\cC$. Due to braiding non-degeneracy, the
subset of 
excitations that have trivial mutual statistics with all excitations (namely the
M\"uger center) must coincide with the local excitations $\cE$. Thus, $\cC$ is a
non-degenerate unitary braided fusion category over $\cE$, or a $\mce{\cE}$.

However, it turns out that only the information of excitations in the original phase
is not enough. Most importantly, we miss the information of the protected edge
states. To fix this weak point, we consider the extrinsic symmetry defects, and
promote them to dynamical excitations, a.k.a., ``gauge the symmetry''.
We fully gauge the symmetry such that the gauged theory is a bosonic topological phase
with no symmetry, described by a unitary modular tensor category $\cM$, which
constitutes the third layer of our classification. It is clear that $\cM$
contains all excitations in the original phase, $\cC\subset\cM$, plus additional
excitations coming from symmetry defects. It is a minimal modular extension of $\cC$.
$\cM$ captures most information of the edge states and in particular fixes the chiral central
charge of the edge states modulo 8.

We believe that the only thing missing is the $E_8$ state which has no bulk
topological excitations but non-trivial edge states with chiral central charge
$c=8$. So in addition we add the central charge to complete the classification.
Thus, topological phases with symmetry are classified by
$(\cE\subset\cC\subset\cM,c)$.

We want to emphasize that, the UBFCs $\cE,\cC,\cM$ consist of large sets of
data describing the excitations, and large sets of consistent conditions
between these data. The data and conditions are complete and rigid in the sense that the solutions
are discrete and finite at a fixed rank.

As a first application, we use a subset of data (gauge-invariant physical
observables) and conditions between them to numerically search for possible
topological orders and tabulate them.

We also study the stacking of topological phases with symmetry based on such
classification. We recovered the known classification $H^3(G,U(1))$ for bosonic
SPT phases from a different perspective, via the stacking of modular extensions
of $\cE=\Rep(G)$. Moreover, we predict the classification of invertible
fermionic phases with symmetry, by the
modular extensions of $\cE=\sRep(G^f)$.
We also show that the $\mce{\cE}$ $\cC$ determines the topological phase with
symmetry up to invertible ones.

A special kind of anyon condensation is used in the study of stacking
operations. We then study other kinds of anyon condensations. They allow us to
group topological phases into equivalence classes and simplifies the
classification. More importantly, anyon condensations reveal more
relations between topological phases and correspond to certain topological phase
transitions.

\cleardoublepage


\begin{center}\textbf{Acknowledgements}\end{center}

I would like to thank Xiao-Gang Wen for his supervision and constant support. I
have benefited a lot from his deep physical intuitions, enthusiasm and
encouragement. Also many thanks to my undergraduate supervisor Liang
Kong who has brought me into this exciting research field. He is also a main
collaborator of this work and complemented many mathematical details.

I would like to thank Zheng-Cheng Gu, Pavel Etingof, Dmitri Nikshych, Chenjie Wang and
Zhenghan Wang for helpful discussions. Besides, I would like
to thank my committee members Roger Melko, Davide Giotto and Anton Burkov for
their valuable advice and feedback.

I am also grateful to Perimeter Institute for the wonderful work environment.
\cleardoublepage


\begin{center}\textbf{Dedication}\end{center}

\begin{center}To my parents.\end{center}
\cleardoublepage

\renewcommand\contentsname{Table of Contents}
\tableofcontents
\cleardoublepage
\phantomsection

\addcontentsline{toc}{chapter}{List of Tables}
\listoftables
\cleardoublepage
\phantomsection		


\chapter*{List of Abbreviations}
\addcontentsline{toc}{chapter}{List of Abbreviations}
\begin{description}[labelsep=*,
    align=right,labelwidth=!]
  \item[SPT] Symmetry protected topological phases.
  \item[SET] Symmetry enriched topological phases.
    Topological
    phases with symmetry.
  \item[UBFC] Unitary braided fusion category.
  \item[UMTC] Unitary modular tensor category.
    Non-degenerate unitary braided fusion category.
  \item[$\mce{\cE}$] UMTC over $\cE$. Non-degenerate unitary braided fusion category over $\cE$. 
\end{description}
\chapter*{List of Symbols}
\addcontentsline{toc}{chapter}{List of Symbols}
\begin{description}[labelsep=*,
    align=right,labelwidth=!]
  \item[$\cE$] Symmetric fusion category. $\cE=\Rep(G)$ or $\cE=\sRep(G^f)$.
  \item[$\Rep(G)$] Representation category of bosonic symmetry group $G$. 
  \item[$\sRep(G^f)$] Fermionic variant of $\Rep(G)$.
  \item[$\cC$] $\mce{\cE}$, the main label of topological phases with
    symmetry $\cE$.
  \item[$\cM$] Modular extension of $\cC$.
  \item[$c$] The chiral central charge of the edge state.
  \item[$(\cC,\cM,c)$] The complete label of topological phases with symmetry
    $\cE$.
  \item[$\bt$] Deligne tensor product.
  \item[$\bt_\cE$] Stacking of topological phases.
  \item[$\ot$] Fusion of anyons. Tensor product in UBFCs.
  \item[$\oplus$] Direct sum of anyons.
  \item[$\Z$] Integers.
  \item[$\C$] Complex numbers.
  \item[$\ot_\C$] Tensor product of vector spaces, linear operators or matrices
    over $\C$.
  \item[$N$] Rank of $\cC$. Number of simple anyon types.
  \item[$N^{ij}_k$] Fusion rules.
  \item[$s_i$] Topological spin.
  \item[$d_i$] Quantum dimension.
  \item[$c_{A,B}$] Braiding.
  \item[$X^*$]  Anti-particle of anyon $X$. Dual object of $X$.
  \item[$\ov\cC$] Mirror conjugate of $\cC$.
  \item[$\ov{S_{ij}}$] Complex conjugate of $S_{ij}$.
  \item[$H^n(G,M)$] $n$-th cohomology group of $G$ with coefficients in $M$. 
  \item[$\Hom_\cC(A,B)$] The morphisms from $A$ to $B$ in $\cC$. $\cC$ may be
    omitted.
  \item[$\cen{\cD}{\cC}$] The centralizer of $\cD$ in $\cC$.
  \item[$\mext(\cC)$]  The set of equivalence classes of
modular extensions of $\cC$.
\item[$\Ab{\cC}$] The full subcategory of Abelian anyons in $\cC$ (maximal
  pointed subcategory). 
\end{description}

\cleardoublepage

\pagenumbering{arabic}

\chapter{Introduction}
Gapped quantum phases, or ``insulators'', used to be considered boring, whose
classification was solved by Landau symmetry breaking theory~\cite{Landau37}. However,  over the last few
decades, many exotic phases with ``topological'' nature have be discovered,
which are beyond the scope of symmetry breaking theory.
There are two typical examples of such exotic topological phases:
\begin{itemize}
  \item Fractional quantum Hall (FQH) states. They have anyonic
    quasiparticle excitations with fractional charges and fractional
    statistics. Also their edge states are protected gapless, i.e.,
    not gappable by any boundary interactions. FQH states are
    considered to present intrinsic topological order~\cite{Wen89,Wen90},
    which leads to different phases of matter without requiring any symmetry. 
  \item Topological insulators. Similarly they have gapless edge
    states, which are protected by the symmetry, i.e., not
    gappable if the symmetry is not broken. Topological
    insulator is a special case of another large class of phases
    -- symmetry protected topological (SPT) phases~\cite{CGLW1106.4772}. They have no
    intrinsic topological orders.
\end{itemize}

Topological phases of matter have drawn more and more research interest.
A complete classification has been achieved in 1+1
dimensions~\cite{CGW1103.3323}. It was proved that in 1+1D there is no
topological order, thus the only two ingredients in the classification are
symmetry breaking and SPT. Mathematically, symmetry breaking is described by a
pair of groups $G\in G_H$, where $G_H$ is the symmetry group of the
Hamiltonian, and $G$ is the unbroken subgroup of the ground state; 1+1D SPT
phases are classified by projective representations of the symmetry group, or
by the second cohomology group $H^2(G,U(1))$. The combination of the two
ingredients in 1+1D is obvious.

Intrinsic topological order begins to appear from 2+1D. We need to combine
symmetry and topological order into a unified framework, such that SPT phases
and intrinsic topological orders with no symmetry are just two extreme
special cases. 
This thesis aims at giving such a 
classification of (2+1)D topological phases of matter. The final goal is to
create a ``table'' for all topological phases. There are two main
questions for a good classification:
\begin{enumerate}
  \item How to efficiently describe topological phases? 

    It turns out the universal properties of topological phases, such as
    symmetry, ground state degeneracy, quasiparticle statistics, can be well
    organized into the mathematical framework of unitary braided fusion
    category. Although category theory is quite abstract, obscure and
    unfamiliar to most physicists, it is the right and precise language for
    topological phase. Alternatively, we can use a subset of universal
    properties to name topological phases. We expect to use as few properties
    as possible, as long as they are enough to distinguish different phases.

  \item How to relate different topological phases?

    Apparently, one way to answer the above question is to study phase
    transitions between topological phases. However, a general theory for
    topological phase
    transitions is still beyond our scope.

    A simple construction to relate different phases may be stacking several
    layers of topological phases to obtain a new one. We will discuss such
    stacking operations when symmetry is taken into account.

    We will also discuss phase transitions that are driven by certain anyon
    condensations. Hopefully, the theory of anyon condensations can provide us
    a general framework for topological phase transitions.
\end{enumerate}

\section{Topological Phases and Symmetry}
We take topological phase as a synonym of gapped quantum phase. Consider a
physical system described by the Hilbert space $\cV$ and Hamiltonian $H$.
First, we require that the system admits local structures, namely, the
total Hilbert space is the tensor product of local Hilbert spaces on each
site, $\cV=\ot_i V_i$, and the Hamiltonian is the sum of local terms, $H=\sum_i
H_i$, where $H_i$ acts on only several neighbouring sites around $i$.
Second, we require a finite energy gap $\Delta>0$ under the thermodynamic
limit. States below the gap are the ground states of the system.
If the energy gap closes $\Delta\to 0$ when deforming the Hamiltonian, there is
a \emph{phase transition}. In other words, two systems belong to the same
gapped quantum phase, or topological phase, if they can be deformed into each
other without closing the energy gap. Equivalently, their ground states can be
related by \emph{local unitary transformations}~\cite{CGW1004.3835}.

A physical system may have some global symmetry, described by a group $G$.
In other words, $G$ has a faithful action on the Hilbert space $g\mapsto U_g\in
GL(\cV)$, and the Hamiltonian remains invariant under such action $U_g H
U_g^{-1}=H$. For rigorousness, in this thesis we will restrict to finite
on-site symmetries, which means that $G$ is finite and $U_g=\otimes_i U_{g,i}$,
where $U_{g,i}$ is a local operator acting around the site $i$.

When a symmetry is present, the definition of topological phases must be
modified a little bit. The deformation of Hamiltonian must respect the
symmetry. Topological phases with symmetry are thus equivalence classes under
\emph{symmetric} local unitary transformations.

For fermion systems, we consider the fermion number parity, denoted by $z$, as
a special element in the global symmetry group $G$, which is involutive and
central, $z^2=1,zg=gz, \forall g\in G$. To distinguish with purely bosonic
symmetries, we denote fermionic symmetries by $(G,z)$ or $G^f$ if there is no
ambiguity.\footnote{On the other hand, bosonic symmetry $G$ can be viewed
as $(G,z=1)$.} 

In this thesis, the categorical description of symmetry is more common. Roughly
speaking, we use the representation category, $\cE=\Rep(G)$ for boson systems,
or $\cE=\sRep(G^f)$\footnote{$\sRep(G^f)$ consists of the
  same representations as $\Rep(G)$, but the irreducible representations with $z$ acting as
$-1$ are regarded as fermions. Braiding two fermions has an extra phase factor
$-1$.} for fermion systems,
instead of the group, to describe the symmetry.

There are universal properties, which are invariant under symmetric local unitary
transformations, that can serve as ``labels'' of topological phases. 
In this chapter, topological phases will be denoted by $\cC,\cD,\dots$, which
can be regarded as the collection of universal properties.

\section{Stacking Topological Phases}
We can \emph{stack} two existing topological phases to obtain a
third phase, which is better visualized in (2+1)D by just constructing a
two-layer system. The stacking operation is the easiest way to construct new topological phases
from old ones, and is a main topic of the thesis. 

The most simple case is when there is no symmetry, and we allow any local interactions between layers.
We denote such stacking operation by $\bt$. Obviously, it is commutative and
associative,
$\cC\bt\cD=\cD\bt\cC,(\cC_1\bt\cC_2)\bt\cC_3=\cC_1\bt(\cC_2\bt\cC_3)$. The trivial phase $\cI$ (tensor product states) is the
identity, $\cC\bt\cI=\cI\bt\cC$. Therefore, topological phases form a
commutative monoid
(a ``group'' that requires the existence of identity but not inverse) under
stacking.

When stacking two systems $\cC,\cD$ with the symmetries $G_1,G_2$,
there is a choice for the new symmetry of the two-layer system, that puts
restrictions on what \emph{symmetric} interactions between layers can be added.
One natural choice is $G_1\xt G_2$, denoted by $\cC\bt\cD$, that is to preserve the symmetry of each
layer respectively.

When $G_1=G_2=G$, another natural choice for the new symmetry is $G$, denoted by $\cC\bt_\cE\cD$ (recall
that $\cE$ is the representation category of $G$) where $G$
is viewed as a subgroup of $G\xt G$ via the embedding $g\mapsto (g,g)$. In other
words, for the stacking $\bt_\cE$ we allow the inter-layer interactions that
preserve only the subgroup $G$. $\cC$, $\cD$ and $\cC\bt_\cE\cD$ share the same
symmetry $G$. Therefore, topological phases with symmetry $G$ again form a
commutative
monoid under the stacking $\bt_\cE$ which preserves the symmetry.

A topological phase $\cC$ with symmetry $\cE$ is called invertible if there exists another phase
$\cD$ such that $\cC\bt_\cE\cD=\cI$. In this case $\cC$ and $\cD$ are
time-reversal conjugates. All invertible topological phases form an Abelian
group $\mathbf{Inv}$
under stacking. The chiral central charges of the edge states add up under stacking,
so taking the central charge is a group homomorphism from invertible
phases $\mathbf{Inv}$ to $\Q$. Its image is $c_\text{min}\Z$, where $c_\text{min}$ is the smallest positive
central charge.
From this point of view, the non-chiral invertible phases (the kernel of the above group homomorphism)
are the symmetry protected
topological (SPT) phases:
\begin{align*}
  0\to \text{SPT}\to \mathbf{Inv} \to c_\text{min}\Z\to 0,
\end{align*}
Since $H^2(\Z,M)=0$ for any abelian group $M$, the above must be a trivial
extension, namely
\begin{align*}
  \text{Invertible topological phases with symmetry}\cong\text{SPT}\times
  c_\text{min}\Z,
\end{align*}
 For boson systems, $c_\text{min}=8$ corresponding to the $E_8$
state. For fermion systems with symmetry $G^f=G_b\times \Z_2^f$, $c_\text{min}=1/2$ corresponding to the $p+\ii p$
superconducting state. But for other fermionic symmetries it is not totally clear
what $c_\text{min}$ should be.

Invertible phases do not support any non-trivial quasiparticle statistics.
For the non-invertible topological phases, we have to seriously study
their quasiparticle excitations.
\section{Quasiparticle Excitations}

The properties of excitations play a central role in the study of topological
phases. In (2+1)D, excitations whose energy density is non-zero in a local area
can always be viewed as a particle-like excitation, and will be referred to as
quasiparticles or anyons.

We would like to group quasiparticles into several \emph{topological types} depending on whether
they can be related by local operators. It is easy to think about
the trivial topological type, \ie quasiparticles that can be
created from or annihilated to the ground state, by local operators. We also
call the trivial type \emph{local excitations}. At first glance, one may
wonder, ``are there exotic excitations beyond the local ones?'' The answer is
``Yes'', and the very
existence of non-trivial topological types is exactly the most important signature of
\emph{topological orders}~\cite{Wen89,Wen90}.
We call the non-trivial types \emph{topological excitations}. They can not
be created or annihilated by local operators. Different topological types may
also be referred to as carrying different topological charges or in different
superselection sectors.

When dealing with topological phases with symmetry, we need to regroup
quasiparticles into different types; each type is related by \emph{symmetric}
local operators. In particular, if the degeneracy of a quasiparticle can not
be lifted by any symmetric local operators, for example a local excitation
carrying an irreducible representation of the symmetry group, we say that it is of a
\emph{simple type}. A generic quasiparticle of a \emph{composite type}, for
example a local excitation carrying a reducible representation, is the direct
sum of several simple types. The total number of simple types in a
topological phase with symmetry is called its \emph{rank}, denoted by $N$.

It is natural to ask ``what properties of the excitations are invariant
under (symmetric) local operators''. It turns out these properties can be well
organized in terms of two kinds of processes, fusion and braiding.
Moreover, it seems that the properties of excitations can determine all the
universal properties of topological phases up to invertible ones, even those
non-local properties such as the ground state degeneracies on arbitrary
manifolds.
This
constitutes the main approach of this thesis: classify topological phases by
the fusion and braiding properties of quasiparticles, or mathematically, by
unitary braided fusion categories (UBFC). We find that topological phases with
symmetry are classified by a sequence of UBFCs, $\cE\subset\cC\subset\cM$, plus a central
charge $c$. They correspond to \{local excitations\}$\subset$\{all
excitations\}=\{local excitations plus topological
excitations\}$\subset$\{excitations in the gauged theory\}=\{all excitations plus gauged symmetry defects\}
respectively.

On the other hand, unitary braided fusion category is a very rigid structure.
Physically this means that the fusion of braiding of quasiparticles must
satisfy a series of consistent conditions, such that for a fixed rank, there
are only finite solutions~\cite{BNRW1310.7050}. This allows us to numerically
search for topological phases with symmetries, using a subset of these conditions, and
tabulate them~\cite{Wen1506.05768,LKW1507.04673,LKW1602.05946}.

We will also study the condensation of anyons which relates different
topological phases. This includes two variants:
\begin{enumerate}
  \item Condense a (self-)boson into the trivial state. This will produce a new
    topological phase with the same central charge, and a gapped domain wall
    between the old and the new phases. Symmetry breaking is a special case of
    such condensations.
  \item Condense Abelian anyons into a Laughlin state. This will produce a
    new topological phase with the same symmetry and similar non-Abelian content.
\end{enumerate}

\chapter{Categorical Description of Symmetry and Excitations}
To describe specific physical phenomena, it is important to identify the
corresponding mathematical languages. Newton invented calculus to describe the
gravity and dynamics of classical world. In later developments, physics seems
to fall a
little behind mathematics: linear algebra for quantum mechanics, Riemann
geometry for general relativity, group theory for symmetry, and so on. Now for
topological phases of matter, one of the most important mathematical languages
is the theory of tensor categories.

In this chapter we try to give an introduction to category theory. But, instead
of listing a series of mathematical definitions, we aim at building a mathematics-physics dictionary.

\section{The Categorical Viewpoint}
 Categories have two levels of structures: objects and morphisms.
 Morphisms are the ``arrows'', or operations, on the objects.
 It is not harmful to interpret objects as ``physical objects'', and morphisms
 as ``physical measurements''.
 The most
 important philosophy of category theory is to take objects as black boxes and focus
 on the morphisms, which is like measuring the unknown physical objects.

 We denote the set of morphisms from object $A$ to object $B$ by
 $\Hom(A,B)$. As most valuable intuitions come from the ``trivial''
 example, let's elaborate the idea in the simplest tensor category, the
 category of finite-dimensional vector spaces $\Ve$. In $\Ve$, objects are
vector spaces, while morphisms are linear operators. Now if we pretend to
know nothing about the objects in $\Ve$, how can we recover their vector space
structures via the morphisms, \ie the linear operators? It is an easy observation, that any
vector space $V$ is the same as the vector space of linear operators from
the base field\footnote{In this thesis we always take the base field to be
complex numbers $\C$} to itself, $V\cong \Hom(\C,V)$, by identifying a vector
$|x\rangle$ with the linear map $f(\alpha)=\alpha|x\rangle,\alpha\in\C$. Thus,
linear operators (morphisms) already tell us everything about $\Ve$; the vector space structures of the
objects are redundant information.

Why is this categorical point of view important to physicists? Because it is
exactly the physical point of view. In more general cases, objects may be
interpreted as particles or other physical systems, and morphisms as
evolutions, interactions or other physical operations. We have to admit that
all the physical systems are indeed black boxes to us; we can not know anything
about them unless we try to measure or observe them. All our information about
the physical world comes from the interactions between various systems and
ourselves, \ie from the ``morphisms''. This is just like the $\Ve$ example. We
are like the ``base field $\C$'', and all we know is what we observe, $\Hom(\C,V)$.
\section{Unitary Braided Fusion Category}

Now let us restrict to the case of topological phases of matter and try to give
a specific dictionary. We interpret as follows:
\begin{itemize}
  \item{\textbf{Object}} Quasiparticle excitation, anyon
  \item{\textbf{Morphism}} Operator acting on quasiparticles, but up to symmetric
    local operators
\end{itemize}
If we collect all possible quasiparticle excitations in a topological phase,
they naturally form a category with additional structures, which is a unitary
braided fusion category. Again we interpret the additional structures term by
term:
\begin{itemize}
  \item\textbf{Unitary} \ A structure inherited from the inner product of Hilbert
    spaces (physical measurements). For the morphisms, it means operators have
    Hermitian conjugates.
    As the simplest example, $\Hilb$, the category of Hilbert spaces, is just the
    unitary version of $\Ve$, the category of vector spaces.
  \item\textbf{Fusion} \ Bring several quasiparticles together and view them as one
    quasiparticle. In practice, we need to fuse quasiparticles one by one along
    certain direction; thus fusion is represented by a two-to-one tensor
    product.
     We denote the tensor product by $A\ot B$.
     The trivial anyon (vacuum), denoted by $\one$, is the unit of the tensor
     product, $\one\otimes A\cong A\cong A\otimes \one$. Also it must be
     associative $(A\otimes B)\otimes C\cong A\otimes (B\otimes C)$. This
     associative condition
     only holds up to some local operators, which can
     be represented by the $F$-matrix.
  \item\textbf{Braiding} \ In (2+1)D, the result of tensor product should not depend
    on the direction we choose. In particular, $A\ot B$ should be naturally
    isomorphic to $B\ot A$. The natural isomorphism can be generated by the
    braiding process: move $A$ to the other side of $B$. There are two
    different paths to do this, clockwise or counter-clockwise. We denote the
    clockwise path by $c_{A,B}:  A\ot B \cong B\ot A$, the counter-clockwise path
    is then given by $c_{B,A}^{-1}: A\ot B\cong B\ot A$. Braiding can be
    represented by the $R$-matrix. Detailed definition of
    $F,R$-matrices, also known as $6j$-symbols, $F,R$-symbols, can be found in,
    for example, Refs.~\cite{Kitaev0506438,Wang10,EGNO15}.
  \item \textbf{Simple objects} \ Quasiparticles whose
    degeneracy can not be lifted by local operators, namely of simple types. These simple anyons are
    labeled by lowercase letters $i,j,k,\dots$.
  \item {\textbf{Direct sum}} \ A generic (composite) quasiparticle
    excitation can be the degenerate states of several simple quasiparticles.
    Such degeneracy is ``accidental'' and can be lifted by local operators. We
    can not avoid such cases because the fusion of two (even simple) anyons is in general a
    direct sum of simple ones. This is denoted by $i\otimes j \cong \bigoplus_k
    N^{ij}_k k$, where $N^{ij}_k$ is a non-negative integer describing the
    multiplicity of $k$ in $i\ot j$. A classical example is that two spin 1/2 fuse into the direct sum
    of spin 0 and spin 1: $1/2\otimes 1/2=0\oplus 1$.
  \item {\textbf{Dual objects}} \ Anti-particle $X^*$ of anyon
    $X$. $X\ot X^*$ contains a single copy of the trivial anyon $\one$, $X\ot
    X^*\cong \one\oplus \cdots$. 
  \item {\textbf{Quantum Trace}} \ For an operator $f$ acting on
    $X$, its quantum trace $\Tr f$ is the expectation value of the following
    process up to proper normalization: create a pair $X,X^*$, act $f$ on $X$, and then annihilate the
    pair. We will pick the normalization such that $\Tr\id_\one=1$.
    Then the quantum dimension is $d_i=\Tr \id_i$, and total quantum dimension
    is
    $D^2=\sum_i d_i^2$. The topological \mbox{$S,T$-matrix} are given by the quantum trace
    $T_{ij}=\delta_{ij}\Tr c_{i,i}/d_i $, $S_{ij}=\Tr c_{j^*,i}c_{i,j^*}/D$.
\end{itemize}

\section{Local Excitations and Group Representations}
Next we study the local excitations and show their relations to the symmetry
group $G$.

Let the unitary braided fusion category of all quasiparticle excitations
be $\cC$. Take the category of all local excitations, they form a full subcategory
of $\cC$, denoted by $\cE$.
Local excitations can be created by acting
local operators $O$ on the ground state $|\psi\ket$.  For any group action
$U_g$, $U_g O|\psi\ket=U_g O U_g^\dag U_g|\psi\ket=U_g O U_g^\dag |\psi\ket$ is
an excited state with the same energy as $O|\psi\ket$. Since we assume the
symmetry to be on-site, $U_g OU_g^\dag$ is also a local operator.  Therefore,
$U_g OU_g^\dag|\psi\ket$ and $O|\psi\ket$ correspond to the degenerate local
excitations. Thus, we should group the states $\{U_gO|\psi\ket,\forall g\in G\}$ as a whole and
view them as the same type of excitation. The local Hilbert space spanned by
$\{U_gO|\psi\ket,\forall g\in G\}$, is then a representation of $G$. It is in
general a reducible representation but can be reduced to irreducible ones by
symmetric local operators.\footnote{Symmetric local operators commute with
  group actions $U_g$, thus are the \emph{intertwiners} between
group representations.} We see that local excitations ``locally'' carry group
representations. Simple types of local excitations are
labeled by irreducible representations of the symmetry group $G$. 

As a fusion category (forget the braidings) we have $\cE\cong\Rep(G)$. Further
considering the braidings we find that there are two possibilities, as fermions
braid with each other with an extra $-1$. For boson systems,
$\cE=\Rep(G)$ with the usual braiding for vector spaces, $c_{A,B}(a\ot_\C
b)=b\ot_\C
a$, $a\in A,b\in B$. For fermion
systems $\cE=\sRep(G^f)$, with a modified braiding, $c_{A,B}(a\ot_\C
b)=-b\ot_\C
a$ when $a\in A,b\in B$ are both fermionic (fermion parity $z$ acts as $-1$, $za=-a,zb=-b$).

The most important thing is that the category $\cE$ can recover the symmetry
group $G$ via Tannaka Duality. In other words, the information of morphisms in
$\cE$
alone (the data
of fusion and braiding of local excitations), are enough for us to recognize
the group. An easy example is that for an Abelian group, its irreducible
representations form the same group under tensor product (fusion). For more
general cases it is also true, by Deligne's theorem~\cite{Deligne02}.
These special categories, whose braidings are either bosonic or fermionic, are
named symmetric categories.

Instead of describing the symmetry by a group, it is
equivalent to say finite on-site symmetry is given by a symmetric fusion
category $\cE$. A huge advantage is that in the categorical viewpoint, boson
and fermion systems are treated equally.

\section{Trivial Mutual Statistics, Braiding Non-degeneracy}

Another important property of the local excitations is that they have
\emph{trivial mutual statistics} with all quasiparticles. By trivial mutual
statistics, between quasiparticles $A$ and $B$, we mean that moving $A$
along a whole loop around $B$ makes no difference. In terms of
braidings, trivial mutual statistics means that a double braiding is the same
as the identity, $c_{B,A}c_{A,B}=\id_{A\ot B}$.

It is easy to see the reason why local excitations always have trivial mutual
statistics. Assume that $B$ is a local excitation. Moving $A$ around $B$ is
the same as the following process: first annihilate $B$, second move $A$
around nothing, and then create $B$ again. This is because the operators that
hopping $A$ around, and the local operators annihilating and creating $B$, do
not overlap and thus commute. Moving $A$ around nothing surely makes no
difference, therefore, we know that any quasiparticle $A$ has trivial mutual
statistics with a local excitation $B$.

On the other hand, if a quasiparticle has trivial mutual statistics with all
quasiparticles, we claim that it must be a local excitation. This follows from
the idea of \emph{braiding non-degeneracy}, which means that for an
\emph{anomaly-free}\footnote{An anomaly-free $(n+1)D$ phase can be realized
  by an $n$ dimensional lattice model, without the help of an $n+1$
dimensional bulk.} topological phase, everything non-local must be detectable
via braidings. The subcategory of quasiparticles that have trivial mutual
statistics with all quasiparticles, which is called the M\"uger center~\cite{Muger9812040}, must
coincide with the subcategory of local excitations, which is determined by the
symmetry of the system, a symmetric fusion category $\cE$.
\section{Modular Extensions and the Full Classification}

By now we have outlined all the properties of the quasiparticle excitations in
a topological phase with symmetry $\cE$; they form a unitary braided fusion
category $\cC$ whose M\"uger center coincides with $\cE$. But is this enough to fully
characterize the phase?

Let us return to the invertible phases to check what is lacking.
A topological phase is invertible under stacking if and only if it has only
local excitations. So the question becomes whether $\cC=\cE$ describes all
invertible phases. The answer is of course ``No''.

We know that if there is no
symmetry, $\cC=\cE=\Ve$, invertible topological phases have a $\Z$
classification; they are generated by the $E_8$ state, with central charge
$c=8$, via stacking and
time-reversal. Since the central charge is added up under stacking, it is a good
quantity that complements the label of topological phases.

For boson systems with symmetry $G$, we have $\cC=\cE=\Rep(G)$. But they are classified
by $H^3(G,U(1))\xt 8\Z $, where the $8\Z$ part is the $E_8$ states as explained
above, and the $H^3(G,U(1))$ part is the classification of the {symmetry protected topological}
phases~\cite{CGLW1106.4772}. We need to seek for additional structures to recover such
$H^3(G,U(1))$ classification.

A good classification of invertible fermionic phases with
symmetry, however, is still lacking. Since our categorical viewpoint treats
bosonic and fermionic cases equally, it is promising to
make predictions on the classification of invertible fermionic phases.

So besides the central charge, what should be added to the categorical
description of topological phases with symmetry? Motivated by the idea of
gauging the symmetry~\cite{LG1202.3120}, we propose that the \emph{modular extension} $\cM$ of $\cC$
should be included to fully characterize the phase. When gauging the symmetry,
one adds extrinsic symmetry defects to the system and promotes the global
symmetry to a dynamical gauge group. Extrinsic symmetry defects then become dynamical
gauge flux excitations. Thus, extra quasiparticles (gauged symmetry defects) are added in the gauged theory,
which can detect the local excitations in the original theory via braidings.
We fully gauge the symmetry, such that the gauged theory is a topological phase with no symmetry (the M\"uger
center of $\cM$ is trivial). So, $\cM$ is a unitary \emph{modular} tensor
category (UMTC), and $\cC$ is a full subcategory of $\cM$, hence the name modular extension.
We need to avoid adding something unrelated to the original theory, therefore,
the modular extension is required to be a ``minimal'' one, in the sense that
the extra quasiparticles must have non-trivial mutual statistics with at least
one quasiparticle in $\cE$. In the following all modular extensions are assumed
minimal unless specified otherwise. It is possible that certain $\cC$ has no
modular extension, which means that the symmetry is not
gaugable. We consider it another anomaly-free condition that $\cC$ must have
modular extensions.

To conclude, for topological phases with a given symmetry $\cE$, we propose
that they are classified by the triple $(\cC,\cM,c)$,
where $\cE\subset\cC\subset\cM$. More precisely,
\begin{itemize}
  \item $\cE$ is the symmetric fusion category of local excitations. It is the
    representation category of the symmetry group, $\cE=\Rep(G)$ for boson
    systems and $\cE=\sRep(G^f)$ for fermion systems.
  \item $\cC$ is the unitary braided fusion category of all quasiparticle
    excitations in this phase. Its M\"uger center is $\cE$, equivalently, $\cC$
    is a non-degenerate UBFC over $\cE$, or a $\mce{\cE}$.
  \item $\cM$ is the unitary braider fusion category of quasiparticle
    excitations in the symmetry-fully-gauged theory of this phase. It is a UMTC, and a minimal modular extension of $\cC$.
  \item $c$ is the central charge, from which we can determine the layers of
    $E_8$ states.
\end{itemize}
We will give more strict definitions of the above in Section \ref{math}. 

By now, our proposal has no experimental support yet. Our confidence mainly
comes from the astonishing consistency between the mathematical framework,
known examples, and physical intuitions. As introduced above, topological
phases can be stacked, about which we have good physical intuitions. So it
should also be possible to define a stacking for the triple $(\cC,\cM,c)$.
Later we will seriously study such stacking, in strict mathematical
language. The known classification of invertible bosonic topological phases with
symmetries, obtained earlier by other approaches, is also recovered. 

\chapter{Universal Physical Observables}
We expect that UBFCs contain all the universal properties of quasiparticle
excitations, however, it is usually desirable to use only a subset of universal
properties to ``name'' topological orders. This subset
consists of only the \emph{physical observables} which do not depend on the
choice of basis or gauge. In fact, if we collect all the physical observables,
they should uniquely determine the topological phase.
We first introduce a ``physical representation'' of category theory.

\section{Fusion Space, Fusion Rules and Topological Spin}
Imagine we have a two dimensional manifold $M^2$ with $n$ punctures at
$x_1,x_2,\dots,x_n$. Physically, we put the system on $M^2$, and turn off local
Hamiltonian terms around the punctures $x_a$,\footnote{Set the local term $H_a$
to a constant which equals the ground state energy $\langle H_a\rangle$.}
such that any possible $n$-anyon
configurations becomes degenerate with the ground state. For simplicity, we
assume a boson system with no symmetry. We may further add
local perturbations $\delta H_a$ around $x_a$ to lift the degeneracy. In particular, we can
fix the anyon types $\xi_a$ at $x_a$ one by one, $x_1$ through $x_{n-1}$. The
last (composite) anyon $\xi_n$ at $x_n$ is then fixed by the topology of $M^2$
and the fusion of the other $n-1$ anyons, and we may use local operators around
$x_n$ to reduce $\xi_n$ to a simple type. After all these, we can still have a
degenerate space,
\begin{align*}
  \cV(M^2,\xi_1,\xi_2,\dots,\xi_n),
\end{align*}
which will be called a fusion space. In particular,
\begin{align}
  \cV(S^2,\xi_1,\xi_2,\dots,\xi_n)=\Hom(\one,\xi_1\ot\xi_2\ot\cdots\ot\xi_n).
\end{align}

We then extract the basis independent properties of the fusion spaces. First
consider their dimensions. The degeneracy of a sphere $S^2$ with no quasiparticle, or
with quasiparticle and anti-quasiparticle pair, should be 1. For a torus $T^2$,
as it can be obtained from gluing a cylinder, namely a sphere $S^2$ with two
punctures, we have
\begin{align}
  \cV(T^2)=\bigoplus_i \cV(S^2,i,i^*).
\end{align}
Thus the \emph{rank}, number of simple anyon types, is related to the ground state
degeneracy on the torus\footnote{The ground state degeneracy on the
  torus is always
  the same as the number of simple \emph{topological types}, which is the rank when
  there is no symmetry. But, in the presence of symmetry, since we
  count anyon types differently, there is no simple relation between the rank
and the degeneracy.}
\begin{align}
  N=\dim \cV(T^2).
\end{align}

The \emph{fusion rules} $i\ot j =\oplus_k N^{ij}_k k$ can also be determined in
terms of the dimension of a fusion space,
\begin{align}
  N^{ij}_k=\dim \Hom(k,i\ot j)=\dim \cV(S^2,i,j,k^*).
\end{align}

Another important observable is the \emph{topological spin}, or simply spin,
denoted by $s_i$. If we rotate an anyon $i$ by $2\pi$, the fusion space
$\cV(-,i,\dots)$ will acquire a phase factor $\ee^{2\pi\ii s_i}$ regardless of
the background manifold and anyons. Thus $s_i$ is just (the fractional part of) the
internal angular
momentum of anyon $i$.

In principle we should also explore other observables, while in practice it
seems that $(N^{ij}_k,s_i)$ are enough. They can distinguish all the known
examples of bosonic
topological orders with no symmetry, or UMTCs. When there is symmetry, we
essentially use subcategories of UMTCs. Therefore, by now we use
$(N^{ij}_k,s_i)$ as a short label or name for topological phases.
Alternatively, we can use the topological $S,T$ matrices (see next section).
If new examples come out whose $(N^{ij}_k,s_i)$ conflicts and reveal new
observables that we have missed, we can just append those new observables to
the short
label and make sure topological phases have a unique name.

In practice, the fusion rules $N^{ij}_k$ are still too huge. So we may use
the \emph{quantum dimension} $d_i$ to represent fusion. Physically $d_i$
measures the ``internal degrees of freedom'' of anyon $i$. It can be extracted
from the fusion space dimension via
\begin{align}
  \dim\cV(M^2, \underbrace{i,\dots,i}_{n \text{ copies of } i},\dots)\sim d_i^n,
  \quad n\to \infty,
\end{align}
or
\begin{align}
 \ln d_i= \lim_{n\to\infty} \frac{\ln\dim\cV(M^2,i^{\ot n},\dots)}{n}.
\end{align}
$d_i$ gives a one-dimensional representation of fusion
\begin{align}
  d_i d_j=\sum_k N^{ij}_k d_k.
\end{align}

\section{Conditions for Physical Observables}
In this section we list the conditions for the physical
observables~\cite{BK01,Wang10,RSW0712.1377,Witten89,GK9410089}.
They are only necessary conditions
derived from UBFCs, but in general not sufficient to describe a valid
UBFC. However, they are quite useful in
numerical searches for candidate topological
phases~\cite{Wen1506.05768,LKW1507.04673,LKW1602.05946}. In
Refs.~\cite{LKW1507.04673,LKW1602.05946} with these conditions and
$\cE\subset\cC\subset\cM$ we have successfully generated large tables for
topological phases with symmetries, from which we selected typical ones and
listed in
Appendix~\ref{tables}.
\begin{enumerate} 
\item \emph{Fusion ring}:\\
$N^{ij}_k$ for the UBFC $\cC$ are non-negative integers that satisfy 
\begin{align} 
\label{Ncnd} 
N^{ij}_k&=N_k^{ji}, \ \ N_j^{1i}=\delta_{ij}, \ \ \sum_{k=1}^N N_1^{i k}N_1^{
kj}=\delta_{ij},
\\
  \sum_{m} N_m^{ij}N_l^{m k} &= \sum_{n}
 N_l^{in}N_n^{j k} \text{ or } \sum_m N^{ij}_m N_m = N_i N_ j \text{ or } N_k
 N_i= N_i N_k. 
\nonumber 
\end{align} 
where the matrix $N_i$ is given by $(N_{i})_{ kj} = N^{ij}_k$, and the indices
$i,j,k$ run from 1 to $N$.  In fact $ N_1^{ij}$ defines a charge conjugation
$i\to i^*$:
\begin{align} 
N_1^{ij}=\delta_{ij^*}.  
\end{align}
$N^{ij}_k$ satisfying the above conditions define a fusion ring.

\item \emph{Rational condition}:\\
  $N^{ij}_k$ and $s_i$ for $\cC$ satisfy~\cite{BK01,Vafa88,AM88,Etingof0207007}
\begin{align}
\label{Vcnd}
\sum_r V_{ijkl}^r s_r =0 \text{ mod }1
\end{align} 
where
\begin{align}
\ \ \ \ \ \ 
V_{ijkl}^r &=  
N^{ij}_r N^{kl}_{ r^*}+
N^{il}_r N^{jk}_{ r^*}+
N^{ik}_r N^{jl}_{r^*}
\nonumber\\
&\ \ \ \
- ( \delta_{ir}+ \delta_{jr}+ \delta_{kr}+ \delta_{lr}) \sum_m N^{ij}_m N^{kl}_{
m^*}
\end{align}

\item \emph{Verlinde fusion characters}:\\
The topological $S$-matrix is given by [see eqn. (223) in \Ref{Kitaev0506438}] 
\begin{align} 
\label{SNsss}
S_{ij}&=\frac{1}{D}\sum_k N^{i j}_k \ee^{2\pi\ii(s_i+s_j-s_k)} d_k ,
\end{align}
where the quantum dimension $d_i$ is the largest eigenvalue of the matrix
$N_i$ and ${D=\sqrt{\sum_i d_i^2}}$ is the total quantum dimension.  
Then~\cite{Vafa88}:
\begin{align}
\label{Ver}
  \frac{ S_{il} S_{jl}}{ S_{1l} } =\sum_k N^{ij}_k S_{kl}.
\end{align} 

\item
\emph{Weak modularity}: \\
The topological $T$-matrix is given by
\begin{align} 
T_{ij}&=\delta_{ij}\ee^{2\pi\ii s_i}.
\end{align} 
Then [see eqn. (235) in Ref.~\cite{Kitaev0506438}]
  \begin{align}
    S^\dag T S=\Theta T^\dag S^\dag T^\dag,\ \ \
\Theta={D}^{-1}\sum_{i}\ee^{2\pi\ii s_i} d_i^2.
  \end{align}

\item 
\emph{Charge conjugation symmetry}:
\begin{align}
  S_{ij}=\ov{S_{i j^*}},\ s_i=s_{i^*}, \text{ or } S=S^\dag C,\ \ T=TC,
\end{align}
where the charge conjugation matrix $C$ is given by
$C_{ij}=N_1^{ij}=\delta_{ij^*}$.

\item
Let 
\begin{align}
\nu_i=\frac{1}{D^2} \sum_{jk} N_i^{jk}d_{j}d_{k}\ee^{\ii 4\pi(s_{j}-s_{k})}, 
\end{align}
then $\nu_i \in \Z$ if $i= i^*$~\cite{Bruillard1204.4836}.
\end{enumerate}

For a UMTC, we further have
\begin{enumerate}
  \item $S$ is a unitary matrix. In particular, this means that \eqref{Ver}
    can be rewritten as
    \begin{align}
      N^{ij}_k=\sum_k  \frac{ S_{il} S_{jl}\ov{S_{kl}}}{ S_{1l} },
    \end{align}
    which is the usual Verlinde formula. This way $N^{ij}_k,s_i$ and $S,T$
    determine each other.
    \item $\Theta=\exp(2\pi \ii\frac{c}{8})$, where $c$ is the chiral central
    charge. Thus, the UMTC determines the central charge modulo 8.
\item $\nu_i=0$ if $i\neq  i^*$, and $\nu_i=\pm 1$ if $i
= i^*$~\cite{RSW0712.1377,Wang10}.     
\end{enumerate}

\chapter{Stacking of Topological Phases}\label{stack}
The stacking is probably the most simple construction for topological phases. In
this chapter we discuss the stacking operation in detail.

\section{Stacking in terms of Observables}

We first consider the stacking of boson systems with no symmetry, which can be
easily described by the physical observables, namely $(N^{ij}_k,s_i,c)$.

Suppose that we have two UBFC's, $\cC$ and $\cD$, with simple anyons labeled by
${i\in\cC},\ {a\in\cD}$.
We can construct a new UBFC by simply
stacking $\cC$ and $\cD$, denoted by $\cC\boxtimes\cD$. 
The anyon labels of $\cC\boxtimes\cD$ are pairs $(i,a),i\in\cC,a\in\cD$, and
the observables are given by
\begin{gather}
  (N^{\cC\bt\cD})_{(k,c)}^{(i,a)(j,b)}=(N^\cC)_{k}^{ij} (N^\cD)^{ab}_{c},
\nonumber \\
  s_{(i,a)}^{\cC\bt\cD}=s_{i}^{\cC}+s_{a}^{\cD},\ \ \ c^{\cC\bt\cD}=c^{\cC}+
  c^{{\cD}},
\nonumber \\
    T^{\cC\bt\cD}=T^\cC\otimes_\C  T^{\cD}, 
\nonumber \\ 
S^{\cC\bt\cD}=S^\cC\otimes_\C  S^{\cD}.
\end{gather}
Note that if $\cC,\cD$ has symmetry $G$ ($\cC,\cD$ has M\"uger center $\cE$), $\cC\boxtimes\cD$ has symmetry
$G\times G$ ($\cC\bt\cD$ has M\"uger center $\cE\bt\cE$). Only when $G$ is trivial, i.e., $\cC,\cD$
are UMTCs describing bosonic topological orders with no symmetry, the above is
the stacking that ``preserves symmetry''. 

For topological phases with symmetry, we need to further take a ``quotient'' of
the observables, which corresponds to breaking the symmetry from $G\times G$ to
$G$. But such ``quotient'' is non-trivial at the level
of physical observables, especially for the modular extensions of $\cC,\cD$.
Therefore, from now on, we switch to strict categorical language.

\section{Mathematical Constructions}\label{math}
Below we give the strict mathematical definitions for
$\cE\subset\cC\subset\cM$, and constructions of stacking operations.
Readers who are not interested in the mathematical formulation may jump to the
next section for the main results. For readers who are interested in the
mathematics, a much more detailed discussion can be found in Ref.~\cite{LKW1602.05936}.

We start with the definition of UBFC. Physically it is just a generic and
consistent anyon
model, containing all the information on the fusion and braiding of anyons.
\begin{dfn}
  A unitary braided fusion category $\cC$ is
  \begin{itemize}
    \item $\C$-linear semisimple category:
      \begin{itemize}
	\item Simple objects $i$, $\Hom(i,i)=\C$.
	\item General objects are direct sums of simples, $X\cong\oplus_i
	  \dim\Hom(i,X)\,i$.
      \end{itemize}
    \item $\Hom(X,Y)$ are finite dimensional $\C$-vector spaces for any objects
      $X,Y$.
    \item Finitely many isomorphism classes of simple objects.
    \item Equipped with a monoidal (tensor) structure, $\ot:\cC \xt \cC\to
      \cC$, the associator ${(X\ot Y)\ot Z\simeq
      X\ot(Y\ot Z)}$, the tensor unit $\one$ and unit morphisms, satisfying pentagon and triangle
      equations.
    \item Rigid: every object $X$ has left and right duals $X^*$. (With unitary
      structure left and right duals are automatically isomorphic, so we do not
      distinguish their notations.)
    \item The tensor unit $\one$ is simple $\Hom(\one,\one)\simeq \C$.\\
      (--- The above is the definition of a fusion category)
    \item Equipped with a braiding $c_{X,Y}: X\ot Y\simeq Y\ot X$ satisfying
      the hexagon equations.
    \item Equipped with a unitary structure, an anti-linear ``hermitian conjugate'' map
      $\dag:\Hom(X,Y)\to \Hom(Y,X)$ for all $X,Y$ such that:
      \begin{itemize}
	\item $(gf)^\dag=f^\dag g^\dag, (\lambda f)^\dag=\ov\lambda f^\dag,
	  (f^\dag)^\dag=f$.
	\item $ff^\dag=0$ implies $f=0$.
	\item $\dag$ is compatible with the tensor product and braiding, i.e.,
	  $(f\ot g)^\dag=f^\dag\ot g^\dag$ and the associator, unit morphisms
	  and the braiding are unitary isomorphisms. (An isomorphism $f$ is
	  unitary if $f^\dag=f^{-1}$.)
      \end{itemize}
  \end{itemize}
\end{dfn}

For physical applications, all categories are assumed unitary
in this thesis.

\begin{eg}
  The category of Hilbert spaces, $\Hilb$. It corresponds to the trivial phase
  with no symmetry. It has only local excitations carrying no group
  representations.
\end{eg}
\begin{eg}
  The category of finite dimensional representations of a finite
  group $G$, $\Rep(G)$. It corresponds to the invertible bosonic phase with
  symmetry $G$.
\end{eg}
\begin{eg}
  Given a UBFC $\cC$, Let $\ov\cC$ be its mirror (parity) conjugate, namely
  the same fusion category $\cC$ with reversed
  braiding, and $\cC^{tr}$ be its time-reversal conjugate, corresponding to
    taking the same objects with all morphisms reversed,
    $\Hom_{\cC^{tr}}(X,Y)=\Hom_{\cC}(Y,X)$.
    Then $\ov\cC,\cC^{tr}$ are both also UBFCs.
    We can show that $\ov\cC$ is canonically braided equivalent to
    $\cC^{tr}$ by taking duals (charge conjugation) ${X\mapsto X^*},
    {f\mapsto f^*}$.
    This is the categorical version of $CP=T$.
  \end{eg}

In our classification we have local excitations being a subset, or subcategory, of all bulk
excitations, which is in turn a subset of all excitations in the gauged theory.
In this thesis by a subcategory or a subset of excitations, we always mean a \emph{full} subcategory in the following
sense
\begin{dfn}
  A full subcategory $\cB$ of $\cC$ means we take a subset of object
  $\ob(\cB)\subset \ob(\cC)$ but all the morphisms
  $\Hom_\cB(X,Y)=\Hom_\cC(X,Y)$.
  A fusion subcategory is a full subcategory that is a fusion category itself. 
\end{dfn}
\begin{eg}
  All UBFCs have a trivial fusion subcategory consists of multiples of the
  tensor unit $\one$, which is equivalent to $\Hilb$. Physically, this means
  that anyon models always contain all the trivial anyons.
\end{eg}
In a UBFC $\cC$, Frobenius-Perron dimension (defined by the largest eigenvalue
of the fusion matrix $N_i$) and quantum dimension (defined by the quantum
trace of the identity morphism $\id_i$) of object $i$
coincide, which is physically the ``number of internal degrees of freedom''
of anyon $i$, and will be denoted by $\dim(i)=d_i$. For a general object $X$, $\dim(X)=\sum_i \dim\Hom(i,X)\dim(i)$.
The total quantum dimension of a unitary fusion category is $\dim(\cC)=\sum_i
\dim(i)^2$, ($i,j,k\dots$ labels runs
over isomorphism classes of simple objects). We rely on the following lemma to
identify fusion categories.
\begin{lem}[EO~\cite{EO0301027}]\label{EOlem}
  If $\cB$ is a fusion subcategory of $\cC$ (or there is a fully faithful
  tensor embedding $\cB\hookrightarrow \cC$) then
  $\dim(\cB)\leq\dim(\cC)$, and the equality holds iff $\cB=\cC$.
\end{lem}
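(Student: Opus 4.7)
The plan is to reduce the statement to a direct comparison of the sums $\sum_i d_i^2$ by transporting Frobenius–Perron dimensions along the embedding. I would proceed in three short steps and then read off both the inequality and the equality case.

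First, observe that a fully faithful tensor embedding $\cB\inj\cC$ sends simples to simples: if $X$ is simple in $\cB$ then $\Hom_\cC(X,X)=\Hom_\cB(X,X)=\C$, so $X$ is simple in $\cC$, and non-isomorphic simples of $\cB$ remain non-isomorphic in $\cC$ by fullness. Hence we may regard $\mathrm{Irr}(\cB)$ as a subset of $\mathrm{Irr}(\cC)$.

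Second, I would show the dimensions are preserved, $\fpd_\cB(X)=\fpd_\cC(X)$ for every simple $X\in\cB$. Since $\cB$ is closed under $\ot$, the fusion coefficients $N_{ij}^k$ computed in $\cB$ agree with those computed in $\cC$ whenever $X_i,X_j\in\cB$, so the inclusion $K_0(\cB)\inj K_0(\cC)$ is a unital homomorphism of based rings. The restriction $\fpd_\cC|_{K_0(\cB)}$ is then a ring homomorphism to $\R$ taking strictly positive values on the distinguished basis, and the Perron–Frobenius theorem for based rings ensures that such a character is unique; therefore it must coincide with $\fpd_\cB$. In the unitary setting this common value equals the quantum dimension $d_X=\Tr\id_X$ appearing in $\dim(\cC)=\sum_i d_i^2$.

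Combining the two previous steps, and using $d_Y\geq 1>0$ for every simple $Y$, one obtains
\begin{align*}
  \dim(\cB)=\sum_{X\in\mathrm{Irr}(\cB)} d_X^2\;\leq\;\sum_{Y\in\mathrm{Irr}(\cC)} d_Y^2=\dim(\cC),
\end{align*}
with equality forcing $\sum_{Y\in\mathrm{Irr}(\cC)\setminus\mathrm{Irr}(\cB)} d_Y^2=0$, hence $\mathrm{Irr}(\cB)=\mathrm{Irr}(\cC)$. Fullness of the embedding then upgrades agreement on simples to the equivalence $\cB=\cC$.

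The hard part will be the identification $\fpd_\cB=\fpd_\cC|_\cB$: it depends on the uniqueness of the positive Perron–Frobenius character on a unital based ring, and on the identification of $\fpd$ with the categorical dimension $d$ in the unitary case. Once these two facts are in place, the stated inequality and its equality clause fall out immediately from counting squared dimensions.
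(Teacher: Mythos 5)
Your proposal is correct, and since the paper states this lemma purely by citation to Etingof--Ostrik without reproducing a proof, your argument is essentially the standard one from that source: a fully faithful tensor embedding identifies $\mathrm{Irr}(\cB)$ with a subset of $\mathrm{Irr}(\cC)$, preserves $\fpd$ because the positive character of a transitive unital based ring is unique (transitivity holding for fusion rings via duals), and the inequality plus its equality case then follow from positivity of the squared dimensions together with semisimplicity and fullness. No gaps; the only point worth flagging is that the uniqueness of the Perron--Frobenius character does require transitivity of the based ring, which you should note is automatic for Grothendieck rings of fusion categories.
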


Below is the mathematical description of ``trivial mutual statistics''.
\begin{dfn}
  The objects $X,Y$ in a UBFC $\cC$ are said to \emph{centralize} each other if
  \begin{align*}
   c_{Y,X} c_{X,Y}=\mathrm{id}_{X\otimes Y},
  \end{align*}
  where $c_{X,Y}: X\otimes Y\cong Y\otimes X$ is the braiding in $\cC$.
  Equivalently, $i,j$ centralize each other if $S_{ij}=d_id_j/D$.
  Given a fusion subcategory $\cD\subset \cC$, its \emph{centralizer}
  $\cen{\cD}{\cC}$ in $\cC$
  is the full subcategory of objects in $\cC$ that centralize all the objects in
  $\cD$. The centralizer is a fusion subcategory. In particular,
  $\cen{\cC}{\cC}$ is called the M\"uger center of $\cC$.
\end{dfn}

\begin{dfn}
  A UBFC $\cC$ is a unitary modular tensor category (UMTC) if
  $\cen{\cC}{\cC}=\Hilb$.
\end{dfn}
\begin{lem}[DGNO~\cite{DGNO0906.0620}]
  Let $\cD$ be a
  fusion
  subcategory of a
  UMTC $\cC$, then
  $$\cen{(\cen{\cD}{\cC})}{\cC}=\cD,
  \quad\dim(\cD)\dim(\cen{\cD}{\cC})=\dim(\cC).$$
\end{lem}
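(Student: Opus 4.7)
The plan is to treat the two assertions as a package. The inclusion $\cD\subset\cen{(\cen{\cD}{\cC})}{\cC}$ is immediate from the definition of the centralizer: any $X\in\cD$ centralizes every $Y\in\cen{\cD}{\cC}$ by construction. The real content is therefore the dimension identity $\dim(\cD)\dim(\cen{\cD}{\cC})=\dim(\cC)$; once that is in hand I would deduce the double-centralizer equality by applying the identity twice and then invoking Lemma~\ref{EOlem} to upgrade the easy inclusion.

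\textbf{Dimension identity via characters of the fusion ring.} Modularity of $\cC$ means the topological $S$-matrix is unitary, and the assignment $\chi_j([X_i])=S_{ij}/S_{1j}$ exhausts the one-dimensional characters of the commutative complexified Grothendieck ring $K(\cC)\ot_\Z\C$; this is just a repackaging of the Verlinde formula. The centralizer condition $S_{ij}=d_id_j/D$ translates to $\chi_j([X_i])=d_i$, so ``$j$ lies in $\cen{\cD}{\cC}$'' is the statement ``$\chi_j$ is maximal on every simple of $\cD$''. I would then introduce the class sum
\begin{equation*}
\Lambda_\cD\;=\;\sum_{i\in\cD}d_i\,[X_i]\;\in\;K(\cC)\ot_\Z\C,
\end{equation*}
and use the fact that $\cD$ is closed under fusion to verify $\Lambda_\cD\cdot[X_k]=d_k\,\Lambda_\cD$ for each $k\in\cD$, whence $\Lambda_\cD^2=\dim(\cD)\,\Lambda_\cD$. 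Evaluating characters yields $\chi_j(\Lambda_\cD)^2=\dim(\cD)\,\chi_j(\Lambda_\cD)$, so $\chi_j(\Lambda_\cD)\in\{0,\dim(\cD)\}$. Combined with the Perron-type bound $|\chi_j([X_i])|\leq d_i$ (valid in a unitary fusion category, with equality in the modular case precisely when $i$ and $j$ centralize each other) and the normalization $\chi_j([X_1])=1$, one concludes that $\chi_j(\Lambda_\cD)=\dim(\cD)$ exactly when $j\in\cen{\cD}{\cC}$ and vanishes otherwise.

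\textbf{Extracting the identity and closing the argument.} Decomposing $\Lambda_\cD$ against the minimal idempotents $e_l=S_{1l}\sum_k\overline{S_{kl}}[X_k]$ of $K(\cC)\ot\C$ (which are readable off by inverting the Verlinde change-of-basis using unitarity of $S$) therefore gives
\begin{equation*}
\Lambda_\cD\;=\;\dim(\cD)\sum_{l\in\cen{\cD}{\cC}}e_l.
\end{equation*}
Matching the coefficient of $[X_1]$ on the two sides, which is $d_1=1$ on the left and $\dim(\cD)\sum_{l\in\cen{\cD}{\cC}}|S_{1l}|^2=\dim(\cD)\dim(\cen{\cD}{\cC})/\dim(\cC)$ on the right, produces the dimension identity $\dim(\cD)\dim(\cen{\cD}{\cC})=\dim(\cC)$. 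Applying the identity a second time to the fusion subcategory $\cen{\cD}{\cC}$ and dividing gives $\dim(\cen{(\cen{\cD}{\cC})}{\cC})=\dim(\cD)$, at which point Lemma~\ref{EOlem} forces equality of the two categories from the obvious inclusion.

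\textbf{Main obstacle.} The essential difficulty is concentrated in the Perron-type bound $|\chi_j([X_i])|\leq d_i$ and its rigidity case (equality iff $j$ centralizes $i$). This is exactly the point where modular non-degeneracy of $\cC$ is used; without it one only obtains the inequality $\dim(\cD)\dim(\cen{\cD}{\cC})\leq\dim(\cC)$, and neither the dimension identity nor the double-centralizer equality follows. Everything else, including closure of $\cD$ under fusion and Verlinde manipulations, should be a routine bookkeeping exercise.
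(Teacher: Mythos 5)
The paper does not actually prove this lemma --- it is quoted verbatim from DGNO~\cite{DGNO0906.0620} as an external input --- so there is no in-paper argument to compare against. Your blind proof is correct and is essentially the standard proof of M\"uger's double-centralizer theorem: the class sum $\Lambda_\cD=\sum_{i\in\cD}d_i[X_i]$ is (up to normalization) an idempotent of the complexified fusion ring, its character values lie in $\{0,\dim(\cD)\}$, the triangle inequality together with the rigidity case of $|S_{ij}|\leq d_id_j/D$ identifies the support of $\Lambda_\cD$ in the idempotent basis with $\cen{\cD}{\cC}$, and reading off the coefficient of the unit gives the dimension identity; the double-centralizer equality then follows from the obvious inclusion, the dimension identity applied twice, and Lemma~\ref{EOlem}. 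All the steps check out, including the observation that $\Lambda_\cD\cdot[X_k]=d_k\Lambda_\cD$ uses closure of $\cD$ under fusion and duals. The one input you leave as a black box --- that $|S_{ij}|\leq d_id_j/D$ with $S_{ij}=d_id_j/D$ exactly when $c_{j,i}c_{i,j}=\id_{X_i\ot X_j}$ --- deserves a sentence: $DS_{ij}$ is the quantum trace of the unitary monodromy $c_{j,i}c_{i,j}$ on $X_i\ot X_j$, and $|\Tr u|\leq\dim$ for a unitary $u$ with equality forcing $u$ to be a scalar, so attaining the positive real maximum $d_id_j$ forces $u=\id$; this uses unitarity/sphericity rather than modularity, whereas modularity enters (as you use it) through the unitarity of $S$, i.e.\ the completeness of the characters $\chi_j=S_{\cdot j}/S_{1j}$ and of the idempotents $e_l$. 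With that caveat spelled out, the argument is complete.
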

\begin{dfn}
  A UBFC $\cE$ is a \emph{symmetric} fusion category if $\cen{\cE}{\cE}=\cE$.
\end{dfn}
UMTC and symmetric fusion category correspond to two extreme cases, 
\ie braiding is non-degenerate and maximally degenerate, respectively.
Symmetric fusion categories are closely related to bosonic and fermionic
symmetry groups, according to the following theorem
\begin{thm}[{Deligne}~\cite{Deligne02}]
  A symmetric fusion category is braided equivalent to $\Rep(G,z)$, where $G$
  is a finite group, and $z\in G$ is a central element such that $z^2=1$, and
  $\Rep(G,z)$ is the fusion category $\Rep(G)$ equipped with braiding $c^z$: 
  $$
  c^z_{X,Y}(x\otimes_\C y) = (-1)^{mn} y\otimes_\C x, \quad \forall x\in X, y\in Y, \quad zx=(-1)^mx, zy=(-1)^ny. 
  $$
\end{thm}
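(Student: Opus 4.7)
The plan is to establish the theorem via Tannakian reconstruction applied to a suitably constructed super-fiber functor. The central task is to produce a braided tensor functor $F \colon \cE \to \sHilb$ to the category of finite-dimensional super-Hilbert spaces; once that is in hand, the finite supergroup $(G,z)$ of tensor automorphisms of $F$ will supply the desired description $\cE \simeq \Rep(G,z)$, with $z$ acting as $-1$ on the odd part of every $F(X)$.

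First, I would verify the arithmetic prerequisite that each object of $\cE$ has an integer Frobenius--Perron dimension. The symmetric braiding induces an action of the symmetric group $S_n$ on $X^{\ot n}$ for every object $X$, and the primitive idempotents attached to Young diagrams cut out subobjects whose categorical traces produce polynomial identities on $\dim(X)$. Combined with the fact that multiplicities in simple decompositions are non-negative integers, a standard computation forces $\dim(X)\in\Z_{\ge 0}$. A consequence is that $\cE$ has \emph{subexponential growth}: for each object $X$ the dimensions $\dim(X^{\ot n})$ grow at most as $C(n)^{\dim(X)}$ in an appropriate sense, which is exactly the hypothesis required by Deligne's structure theorem for symmetric tensor categories.

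Next, to produce $F$, I would invoke Deligne's theorem on categories of subexponential growth, which asserts that any such rigid $\C$-linear symmetric tensor category admits a faithful exact symmetric monoidal functor to $\sHilb$. The construction proceeds by enlarging $\cE$ in a controlled way so that antisymmetric Schur functors can be realized in odd parity and then showing the enlarged category is pro-generated by a single super-vector space; $F$ is then extracted as evaluation at that generator. This step is the main obstacle: it uses genuinely nontrivial arguments about when a symmetric tensor category is ``super-Tannakian,'' and I would simply import Deligne's result rather than attempt to redo it.

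Finally, given $F$, I would run classical Tannakian reconstruction. The affine super-group scheme $\mathbf{G} := \underline{\aute}^{\ot}(F)$ of tensor automorphisms of $F$ comes equipped with a parity element $z \in \mathbf{G}(\C)$ that acts as $-1$ on odd components; the reconstruction theorem yields a braided equivalence $\cE \simeq \Rep(\mathbf{G},z)$. Because $\cE$ has only finitely many isomorphism classes of simple objects and finite-dimensional $\mathrm{Hom}$-spaces, Lemma~\ref{EOlem} forces $\mathbf{G}$ to be a finite group scheme over $\C$, hence an ordinary finite group $G$. The central involution $z$ satisfies $z^2 = 1$ by construction, and one checks directly from the definition of $F$ that the braiding transported across the equivalence is exactly the sign rule $c^z_{X,Y}(x\ot_\C y) = (-1)^{mn} y\ot_\C x$ on the $\Z_2$-graded pieces, completing the identification $\cE \simeq \Rep(G,z)$.
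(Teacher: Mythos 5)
The paper offers no proof of this statement: it is imported verbatim from Deligne's work \cite{Deligne02}, so there is no internal argument to compare yours against. Your proposal is, in outline, the standard proof of the cited result, and the overall architecture (integrality of dimensions, a growth hypothesis, existence of a super-fiber functor, Tannakian reconstruction, finiteness) is the right one. Note, however, that the step you ``simply import'' --- the existence of a symmetric tensor functor $F:\cE\to\sHilb$ --- \emph{is} the substance of Deligne's theorem; everything before it is preparatory and everything after it is essentially formal, so your write-up is closer to a reading guide for the cited proof than to an independent one. That is on par with the paper's own treatment, which also just cites the result.

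Two imprecisions are worth fixing. First, the growth condition is misstated: Deligne's hypothesis (``moderate growth'') is that the length of $X^{\ot n}$ is bounded by $C_X^{\,n}$ for a constant depending only on $X$, not that $\dim(X^{\ot n})$ grows like ``$C(n)^{\dim(X)}$''; for a fusion category this bound is automatic, since the length of $X^{\ot n}$ is at most $\dim(X)^n$ divided by the smallest quantum dimension, so moderate growth does not need the integrality computation as input (integrality is still useful for setting up the canonical $\Z_2$-grading, but it is not what certifies the growth hypothesis). Second, Lemma~\ref{EOlem} is not the tool that forces the reconstructed supergroup scheme $\mathbf{G}$ to be finite: that lemma compares total dimensions of fusion subcategories of a fixed fusion category, whereas what you need is that an affine supergroup scheme over $\C$ whose representation category is semisimple with finitely many isomorphism classes of simple objects must be a finite, étale --- hence ordinary --- group with vanishing odd part of its Lie superalgebra; otherwise $\Rep(\mathbf{G},z)$ would fail to be semisimple or would have infinitely many simples. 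With those repairs the argument is sound.
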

When $z=1$ it is $\Rep(G)$ with the usual braiding $x\ot_\C y\to y\ot_\C x$.
When $z\neq 1$ it is the fermion number parity. Fermions braid with each other
with an extra $-1$. We introduce $\sRep(G^f)=\Rep(G,z)$ for $z\neq 1$ to emphasize
its fermionic nature.
\begin{eg}
  $\sRep(\Z_2^f)$ is the category of super Hilbert spaces, $\sHilb$, that
  is, $\Z_2$-graded Hilbert spaces with $\Z_2$-graded braiding. It corresponds
  to invertible fermionic
  phases with no other symmetries. 
\end{eg}

In the following we keep $\cE$ as a (fixed) symmetric fusion category.
We give the strict definition for $\cE\subset\cC\subset\cM$.

\begin{dfn}
  A pair $(\cC,\iota)$, a UBFC $\cC$ with a fully faithful embedding $\iota:\cE\inj
  \cen{\cC}{\cC}$ is a UBFC \emph{over} $\cE$. Moreover, $\cC$ is said a non-degenerate UBFC
  over $\cE$, or $\mce{\cE}$, if $\cen{\cC}{\cC}=\cE$.
  Two UBFCs over $\cE$, $(\cC_1,\iota_1)$ and $(\cC_2,\iota_2)$ are equivalent if
  there is a braided monoidal equivalence $F:\cC_1\to\cC_2$ such that
  $F\iota_1=\iota_2$.
\end{dfn}
We recover the usual definition of UMTC when $\cE$ is trivial. In this case the
subscript is omitted.

\begin{dfn}
  Given a $\mce{\cE}$ $\cC$, its (minimal) \emph{modular extension} is a pair
  $(\cM,\iota_\cM)$, a UMTC
  $\cM$, together with a fully faithful embedding
  $\iota_\cM:\cC\hookrightarrow\cM$,
  such that $\cen{\cE}{\cM}=\cC$.
  Two modular extensions $({\cM_1},\iota_{\cM_1}),(\cM_2,\iota_{\cM_2})$ are equivalent if
  there is a braided monoidal equivalence $F:{\cM_1}\to\cM_2$ such that
  $F\iota_{\cM_1}=\iota_{\cM_2}$.
  We denote the set of equivalence classes of modular extensions of
  $\cC$ by $\mext(\cC)$.
\end{dfn}
\begin{rmk}
  Here the condition $\cen{\cE}{\cM}=\cC$ is equivalent to
  $\dim(\cM)=\dim(\cC)\dim(\cE)$, or $\cen{\cC}{\cM}=\cE$. Physically this
  means that the extra excitations in $\cM$ but not in $\cC$ all have
  non-trivial mutual statistics with at least one excitation in $\cE$.  
  \label{constructmtce}
  In fact, let $\cM$ be a UMTC that contains a symmetric fusion category $\cE$ as a
  full subcategory,
  and $\cD=\cen{\cE}{\cM}$. Then, $\cE$ is a full subcategory of $\cD$ ($\cE$
  centralizes itself) and
  $\cen{\cD}{\cM}=\cen{(\cen{\cE}\cM)}{\cM}=\cE$. We see that
  $\cen{\cD}{\cD}=\cD\cap(\cen{\cD}{\cM})=\cD\cap\cE=\cE$. This means that
  $\cD=\cen{\cE}{\cM}$ is automatically a $\mce{\cE}$, and $\cM$ is its
   modular extension. This will be a useful way
   to construct $\mce{\cE}$'s from UMTCs.
\end{rmk}

\begin{rmk}
  For a given $\mce{\cE}$ $\cC$, it is possible that there is no modular
  extension of $\cC$.
  An example was constructed by Drinfeld~\cite{Drinfeld}. It is a
  $\mce{\Rep(\Z_2\xt\Z_2)}$ with rank $N=5$ and $D^2=8$. The same example is
  also discussed in Ref.~\cite{CBVF1403.6491}.
\end{rmk}

It is important to note that counting modular extensions of a fixed $\cC$ is
different from counting topological phases.
\begin{dfn}
  Two topological phases with symmetry $\cE$,
  labeled by $((\cC_1,\iota_1),({\cM_1},\iota_{\cM_1}),c_1)$ and
  $((\cC_2,\iota_2),({\cM_2},\iota_{\cM_2}),c_2)$, are equivalent if $c_1=c_2$ and there
  are braided monoidal equivalences $F_\cC:\cC_1\to\cC_2$, $F_\cM:{\cM_1}\to{\cM_2}$
  such that $F_\cC\iota_1=\iota_2$, $\iota_{\cM_2} F_\cC=F_{\cM} \iota_{\cM_1}$.
\end{dfn}
Physically, when counting topological phases, we allow ``relabelling'' anyons in
$\cC$ and $\cM$ together in a compatible way. But we do not allow mixing
``excitations'' (anyons in $\cC$) with ``gauged symmetry defects'' (anyons
not in $\cC$). Also we do not allow
``relabelling'' local excitations in $\cE$, as they are related to the
symmetry group which has absolute meaning. For example spin-flip $\Z_2$ can not be
considered as the same as layer-exchange $\Z_2$, nor can their
representations be relabelled. On the other hand, when counting modular
extensions, we fix all the excitations in $\cC$ and only allow ``relabelling'' ``gauged symmetry defects'' (anyons
in $\cM$ but not in $\cC$). 

  The embeddings $\iota,\iota_\cM$ are important data. However, in the
  following constructions, the embeddings are naturally defined, as we
  construct $\cE,\cC$ as full subcategories of $\cM$. So we may omit the
  embeddings to simplify notations whenever there is no ambiguity.

Next we give the construction for the stacking of topological phases. First
consider the stacking
operation corresponding to the no-symmetry case. It is given by the Deligne
tensor product $\bt$, which defines a monoidal structure on the 2-category of
UBFCs (more generally, of Abelian categories).
For two UMTCs $\cC,\cD$, $\cC\bt\cD$ is still a UMTC.
(By construction, $\Hom_{\cC\bt\cD}(A\bt B,X\bt Y)=\Hom_\cC(A,X)\ot
\Hom_\cD(B,Y)$. All the structures follows component-wise.) There is a parallel story
for $\mce{\cE}$, a monoidal structure $\bt_\cE$ such that the ``stacking'' of
two $\mce{\cE}$s is still a $\mce{\cE}$. We introduce this construction and
generalize it to modular extensions. Such stacking
operation is for $\mce{\cE}$ together with their modular extensions, thus
physically the stacking operations for topological phases with symmetry $\cE$.

The basic idea is to first construct $\cC\bt\cD$ which has symmetry
$\cE\bt\cE$, and then break the symmetry down to $\cE$. We need to first
introduce the following important concept, which controls generic Bose
condensations in topological phases. Symmetry breaking $\cE\bt\cE\to\cE$ is
just a special case.
\begin{dfn}\label{alg}
  A \emph{condensable algebra} in a UBFC $\cC$ is a
  triple $(A,m,\eta)$, $A\in\cC$,
  $m:A\ot A\to A$, $\eta:\one\to A$ satisfying
  \begin{itemize}
    \item Associative: $m(\id_A\ot m)=m(m\ot \id_A)$
    \item Unit: $m(\eta\ot\id_A)=m(\id_A\ot\eta)=\id_A$
    \item Isometric: $m m^\dag=\id_A$
    \item Connected: $\Hom(\one,A)=\C$
    \item Commutative: $m c_{A,A}=m$
  \end{itemize}
\end{dfn}
\begin{rmk}
  This is an important notion that is widely studied. In the subfactor context
  it is called (irreducible local) \emph{$Q$-system}~\cite{LR9604008}. In
  category literature it is also known as
  connected \'etale algebra (connected commutative separable
  algebra)~\cite{DGNO0906.0620,DMNO1009.2117}, or commutative
  special symmetric $C^*$-Frobenius algebra~\cite{Muger0111204,FFRS0309465}. The latter two
  are more general; they do not require the category to be unitary. In the
  unitary case, they are equivalent notions~\cite{LR9604008}. We follow
  Ref.~\cite{Kong1307.8244} to call ``condensable algebra'' for its physical meaning
  and also simplicity.
\end{rmk}
\begin{dfn}
  A (left) \emph{module} over a condensable algebra $(A,m,\eta)$ in $\cC$ is a
  pair $(X,\rho)$, $X\in\cC$, $\rho:A\ot X\to X$ satisfying
  \begin{gather}
    \rho(\id_A\ot\rho)=\rho(m\ot \id_M),\nonumber\\
    \rho(\eta\ot\id_M)=\id_M.
  \end{gather}
  It is further a \emph{local} module if
  \begin{align*}
    \rho c_{M,A} c_{A,M}=\rho.
  \end{align*}
\end{dfn}
  We denote the category of left $A$-modules by $\cC_A$.
  A left module $(X,\rho)$ is turned into a right module via the braiding,
  $(X,\rho c_{X,A})$ or $(X,\rho c_{A,X}^{-1})$, and thus a $A$-$A$-bimodule.
  The relative tensor functor $\ot_A$ of bimodules then turns $\cC_A$ into a fusion category.
  (This is known as $\alpha$-induction in subfactor context.)
  In general there can be two monoidal structures on $\cC_A$, since there are
  two ways to turn a left module into a bimodule (usually we pick one for
  definiteness
  when considering $\cC_A$ as a fusion category).
  The two monoidal structures coincide for the fusion subcategory $\cC_A^0$ of
  local $A$-modules. Moreover, $\cC_A^0$ inherited the braiding from $\cC$ and
  is also a UBFC; it exactly describes the excitations in the topological phase
  after condensing $A$.
\begin{lem}[DMNO~\cite{DMNO1009.2117}]
  \[\dim(\cC_A)=\frac{\dim(\cC)}{\dim(A)}.\]
  If $\cC$ is a UMTC, then so is $\cC_A^0$, and
  \[\dim(\cC_A^0)=\frac{\dim(\cC)}{\dim(A)^2}.\]
\end{lem}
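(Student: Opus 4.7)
The plan is to analyze the induction functor $F : \cC \to \cC_A$, $X \mapsto A\ot X$, and its right adjoint, the forgetful functor $U : \cC_A \to \cC$. Using commutativity of $m$, $F$ acquires a natural monoidal structure and becomes a tensor functor, hence preserves Frobenius--Perron dimension: $\fpd_{\cC_A}(F(X)) = \fpd_\cC(X)$. On the other hand, $U(F(X))=A\ot X$ in $\cC$, so $\fpd_\cC(U(F(X))) = \dim(A)\,\fpd_\cC(X)$. A key point is the dominance of $F$: every simple $M\in\cC_A$ is a direct summand of $F(U(M))=A\ot M$, because the action $\rho:A\ot M\to M$ splits as an $A$-module map thanks to the separable Frobenius structure of $A$. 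Combined with Frobenius reciprocity $\Hom_{\cC_A}(F(X),M)\cong \Hom_\cC(X,U(M))$, this yields $\fpd_\cC(U(M)) = \dim(A)\,\fpd_{\cC_A}(M)$ for every simple $M\in\cC_A$.

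From here the first formula is obtained by a bookkeeping argument. Decomposing $F(i)$ into simple $A$-modules for each simple $i\in\cC$ and matching FPdim on both sides gives
\[
\fpd_\cC(i) \;=\; \dim(A)^{-1}\sum_{M} \dim\Hom_\cC(i,U(M))\,\fpd_\cC(U(M)).
\]
Multiplying by $\fpd_\cC(i)$, summing over simple $i$, and interchanging the order of summation yields $\dim(\cC)=\dim(A)^{-1}\sum_M \fpd_\cC(U(M))^2$, hence $\dim(\cC_A)=\sum_M \fpd_{\cC_A}(M)^2=\dim(\cC)/\dim(A)$.

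For the second part I would first verify that $\cC_A^0$ is a UBFC: on a local module the two $A$-bimodule structures (induced by $c_{A,M}$ and $c_{M,A}^{-1}$) coincide, so $\ot_A$ is unambiguous and the braiding of $\cC$ descends. For modularity when $\cC$ is a UMTC, I would show $\cen{\cC_A^0}{\cC_A^0}=\Hilb$: if $N\in\cC_A^0$ centralizes $\cC_A^0$, then since $F$ maps $\cen{\langle A\rangle}{\cC}$ into $\cC_A^0$, the underlying object $U(N)\in\cC$ centralizes $\cen{\langle A\rangle}{\cC}$; the DGNO double-centralizer identity in the UMTC $\cC$ then forces $U(N)\in\langle A\rangle$, and analyzing the $A$-action pins $N$ down to a multiple of the unit $A$. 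The formula $\dim(\cC_A^0)=\dim(\cC)/\dim(A)^2$ finally follows by combining the first part with the DGNO centralizer identity applied inside $\cC_A$ to the image of the natural braided tensor functor $F|_{\cen{\langle A\rangle}{\cC}}$, which produces the additional factor of $1/\dim(A)$. The main obstacle I foresee is this modularity step: tracking how locality interacts with the braiding requires delicate use of the commutativity of $A$, and one must rule out ``stray'' bosonic objects in $\cC_A^0$ that would violate non-degeneracy --- this crucially uses the full non-degeneracy of $\cC$ rather than just the modularity of individual subcategories.
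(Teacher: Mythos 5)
This lemma is stated in the thesis as an imported result, credited to DMNO~\cite{DMNO1009.2117} with no proof given, so there is no in-paper argument to compare against; I can only assess your sketch on its own terms.

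Your first part is essentially the standard ENO/DMNO argument and is sound. The one compressed step is the identity $\fpd_\cC(U(M))=\dim(A)\,\fpd_{\cC_A}(M)$: this does \emph{not} follow just from dominance plus Frobenius reciprocity as you assert. The actual mechanism is the projection formula $M\ot_A F(Y)\cong U(M)\ot Y$ in $\cC$, which shows that $M\mapsto \fpd_\cC(U(M))$ is a positive common eigenvector of the matrices of multiplication by $F(Y)$ with eigenvalue $\fpd_\cC(Y)$; dominance of $F$ then lets you invoke Frobenius--Perron uniqueness to conclude proportionality to $\fpd_{\cC_A}$, and evaluating at $M=A$ fixes the constant to $\dim(A)$. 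With that inserted, your bookkeeping gives $\dim(\cC_A)=\dim(\cC)/\dim(A)$ correctly.

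The second part has two genuine gaps. First, the ``DGNO centralizer identity applied inside $\cC_A$'' is not available: $\cC_A$ is a fusion category but not braided, and the subcategory $F\bigl(\cen{\langle A\rangle}{\cC}\bigr)$ is in general a \emph{proper} subcategory of $\cC_A^0$, so it cannot produce the factor $1/\dim(A)$ by a dimension count. (For the $E_6$ algebra $A=\one\oplus X_6$ in $SU(2)_{10}$ one has $\cen{\langle A\rangle}{\cC}=\{\one,X_{10}\}$, whose image has dimension $2$, while $\cC_A^0\cong SO(5)_1$ has dimension $4$.) Second, in the modularity argument the final step --- ``$U(N)\in\langle A\rangle$, and analyzing the $A$-action pins $N$ down to a multiple of the unit'' --- is false in general: in the same example the local module with underlying object $X_4\oplus X_{10}$ lies entirely in $\langle A\rangle$ (the even part of $SU(2)_{10}$) yet is not a multiple of $A$. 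So knowing that a transparent $N\in\cC_A^0$ is supported on $\langle A\rangle$ does not finish the proof. The robust route (Kirillov--Ostrik, reproduced in DMNO) is to compute the $S$-matrix of $\cC_A^0$ directly in terms of that of $\cC$ and verify unitarity via the Verlinde relations; the same computation exhibits the averaged monodromy with $A$ as the projector onto local modules inside $\cC_A$, which yields $\dim(\cC_A^0)=\dim(\cC_A)/\dim(A)=\dim(\cC)/\dim(A)^2$ in one stroke. You should replace both of the flagged steps by this $S$-matrix/projector argument.
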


Below we construct a canonical condensable algebra $L_\cC$ in $\cC\bt\ov\cC$
for any UBFC $\cC$. In particular, $L_\cE$ is the algebra that corresponds to
the symmetry breaking $\cE\bt\cE\to\cE$.
\begin{dfn}
  The Drinfeld center $Z(\cA)$ of a monoidal category $\cA$ is a braided monoidal category with
  objects as pairs $(X\in\cA,b_{X,-})$, where $b_{X,-}: X\ot -\to -\ot X$ are
  half-braidings that satisfy similar conditions as braidings. Morphisms are
  those that commute with half-braidings. The tensor product is given by
  \[(X,b_{X,-})\ot(Y,b_{Y,-})=(X\ot Y,(b_{X,-}\ot\id_Y)(\id_X\ot b_{Y,-})),\] and
  the braiding is $c_{(X,b_{X,-}),(Y,b_{Y,-})}=b_{X,Y}$.
\end{dfn}
  It is known that $Z(\cA)$ is a UMTC if $\cA$ is a unitary fusion category~\cite{Muger0111204}.
  There is a forgetful tensor functor
  $\text{for}_\cA:Z(\cA)\to \cA$, $(X,b_{X,-})\mapsto X$ that forgets the
  half-braidings.
  Let $\cC$ be a braided fusion category and $\cA$ a fusion category, a tensor
  functor $F:\cC\to \cA$ is called a central functor if it factorizes through
  $Z(\cA)$, i.e., there exists a braided tensor functor $F':\cC\to Z(\cA)$ such
  that $F=\text{for}_\cA F'$.

\begin{lem}
  [DMNO~\cite{DMNO1009.2117}]
  Let $F:\cC\to\cA$ be a central functor, and $R:\cA\to\cC$ the right adjoint
functor of $F$.
Then the object $A=R(\one) \in\cC$ has a canonical structure of condensable algebra.
$\cC_A$ is monoidally equivalent to the image of
$F$, i.e. the smallest fusion subcategory of $\cA$ containing $F(\cC)$.
\end{lem}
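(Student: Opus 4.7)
The plan is to exploit the adjunction $F \dashv R$ together with the standard categorical fact that the right adjoint of a strong monoidal functor is naturally lax monoidal. Once $R$ is endowed with its canonical lax monoidal structure, the object $A = R(\one)$ automatically inherits an algebra structure, and every $R(Y)$ for $Y\in\cA$ acquires a canonical $A$-module structure. This single observation drives both parts of the lemma.

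For the algebra structure on $A$, the unit $\eta:\one\to A$ is the component of the adjunction unit at $\one$ combined with $F(\one)\cong\one$, and the multiplication $m:A\ot A\to A$ is the lax constraint $R(\one)\ot R(\one)\to R(\one\ot\one)=R(\one)$. Associativity and unitality are formal consequences of the coherence axioms for lax monoidal functors. Connectedness is immediate from the adjunction:
\begin{align*}
\Hom_\cC(\one,A)=\Hom_\cC(\one,R(\one))=\Hom_\cA(F(\one),\one)=\Hom_\cA(\one,\one)=\C,
\end{align*}
since $\one$ is simple in $\cA$. In the unitary setting $F\dashv R$ is a $\dag$-adjunction, which upgrades the structure to a $C^*$-Frobenius one and delivers the isometric condition $mm^\dag=\id_A$. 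Commutativity $m c_{A,A}=m$ is precisely where the centrality of $F$ enters: the lift $F':\cC\to Z(\cA)$ equips each $F(X)$ with a half-braiding compatible with the braiding of $\cC$, which translates via the adjunction into compatibility of $m$ with $c_{A,A}$.

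For the equivalence $\cC_A\simeq\mathrm{Im}(F)$, I argue as follows. Rigidity of $\cC$ together with the projection formula gives $R(F(Y))\cong A\ot Y$ as objects of $\cC$ (the right-hand side corepresents $X\mapsto\Hom_\cA(F(X),F(Y))$ by adjunction and duality), and one checks this isomorphism respects the natural $A$-module structures on both sides. Combining the free/forgetful adjunction for $A$-modules with $F\dashv R$ then yields
\begin{align*}
\Hom_{\cC_A}(A\ot X,A\ot Y)\cong\Hom_\cC(X,A\ot Y)\cong\Hom_\cA(F(X),F(Y)).
\end{align*}
The free-module functor $\tilde F:\cC\to\cC_A$, $X\mapsto A\ot X$, is strong monoidal (since $(A\ot X)\ot_A(A\ot Y)\cong A\ot(X\ot Y)$), and the display above identifies the full subcategory of $\cC_A$ spanned by free modules, as a monoidal category, with the full subcategory of $\cA$ spanned by $\{F(X)\}_{X\in\cC}$. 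Closing both sides under direct sums and direct summands (every $A$-module in the semisimple setting is a summand of a free one, by Lemma \ref{EOlem} applied appropriately to the free/forgetful adjunction) extends this to a monoidal equivalence $\cC_A\simeq\mathrm{Im}(F)$.

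The main obstacle I anticipate is the commutativity of $m$. Translating the central lift $F'$ into the identity $mc_{A,A}=m$ requires a careful diagrammatic argument that pushes the half-braidings on $F(X)$ through the unit and counit of the adjunction, and is the only step where centrality (as opposed to mere monoidality of $F$) is actually used. A secondary obstacle is verifying that the isomorphism $A\ot Y\cong R(F(Y))$ is one of $A$-modules rather than merely of underlying objects of $\cC$; this reduces to naturality of the projection formula combined with the centrality used to define $m$, and without it the Hom computation above would only be an equality of abelian-group valued bifunctors rather than a monoidal equivalence.
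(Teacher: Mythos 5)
The paper does not prove this lemma---it is quoted verbatim from DMNO \cite{DMNO1009.2117}---so there is no in-paper argument to compare against. Your sketch is, however, essentially the standard DMNO proof and is correct in outline: the lax monoidal structure on the right adjoint $R$ gives $A=R(\one)$ its algebra structure, the adjunction computes $\Hom(\one,A)\cong\Hom_\cA(\one,\one)=\C$ for connectedness, centrality of $F$ (via naturality of the half-braiding on $F(A)$ against the counit $F(A)\to\one$) is exactly what yields $mc_{A,A}=m$, and the projection formula $R(F(Y))\cong A\ot Y$ combined with the chain $\Hom_{\cC_A}(A\ot X,A\ot Y)\cong\Hom_\cC(X,R(F(Y)))\cong\Hom_\cA(F(X),F(Y))$ gives the monoidal fully faithful comparison on free modules, which extends to $\cC_A\simeq \mathrm{Im}(F)$ after closing under summands. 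You also correctly isolate the two points that need real work: commutativity and the $A$-module (not merely object-level) nature of the projection-formula isomorphism.

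One correction: the claim that every $A$-module is a direct summand of a free one is \emph{not} a consequence of Lemma~\ref{EOlem} (the Etingof--Ostrik dimension inequality for fusion subcategories), which plays no role here. It follows instead from separability of $A$: the action map $\rho:A\ot M\to M$ is split as a morphism of $A$-modules, with splitting built from $m^\dag$ (or the separability idempotent) and $\eta$, so $M$ is a retract of the free module $A\ot M$. Relatedly, the isometric normalization $mm^\dag=\id_A$ deserves one more line: $mm^\dag$ is an $A$-$A$-bimodule endomorphism of $A$, and connectedness forces $\Hom_{A\text{-}A}(A,A)\cong\Hom(\one,A)=\C$, so $mm^\dag$ is a positive scalar that can be normalized to $1$. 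With these two repairs the argument is complete.
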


  If $\cC$ is a UBFC, it is naturally embedded into
  $Z(\cC)$, by taking $X\mapsto (X,b_{X,-}=c_{X,-})$. So is $\ov\cC$. Therefore, we have a
  braided tensor functor $\cC\bt\ov\cC\to Z(\cC)$.
  Compose it with the forgetful functor $\text{for}_\cC:Z(\cC)\to\cC$ we
  get a central functor
\begin{align*}
  \ot : \cC\bt\ov\cC &\to \cC\\
  X\bt Y&\mapsto X\ot Y.
\end{align*}
Let $R$ be its right adjoint functor, we obtain a condensable algebra
$L_\cC:=R(\one)\cong { \oplus_i ( i\bt i^*)} \in \cC\bt\ov\cC$ and $\cC=(
\cC\bt\ov\cC)_{L_\cC}$, $\dim(L_\cC)=\dim(\cC)$.
In particular, for a symmetric category $\cE$, $L_\cE$ is a condensable algebra
in $\cE\bt\cE$, and $\cE=(\cE\bt\cE)_{L_\cE}=(\cE\bt\cE)_{L_\cE}^0$ for $\cE$
is symmetric, all $L_\cE$-modules are local.

Now, we are ready to define the stacking operation for $\mce{\cE}$'s as well
as their  modular extensions.
\begin{dfn}\label{stacking}
  Let $\cC,\cD$ be $\mce{\cE}$'s, and $\cM_\cC,\cM_\cD$ their 
  modular extensions. The stacking is defined by:
  \begin{align*}
    \cC\bt_\cE\cD:=(\cC\bt\cD)^0_{L_\cE},\quad \cM_\cC\bt_\cE \cM_\cD:=(\cM_\cC\bt
    \cM_\cD)_{L_\cE}^0
  \end{align*}
\end{dfn}

\begin{thm}
  $\cC\bt_\cE\cD$ is a $\mce{\cE}$, and
  $\cM_\cC\bt_\cE\cM_\cD$ is a  modular extension of $\cC\bt_\cE\cD$.
\end{thm}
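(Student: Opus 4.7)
The plan is to realize both assertions inside the UMTC $\cM_\cC\bt_\cE\cM_\cD$ and then read off the $\mce{\cE}$ structure from the construction of $\mce{\cE}$'s from UMTCs given in the Remark. First, I will check that $\cM_\cC\bt_\cE\cM_\cD$ is a UMTC of the expected dimension. The Deligne product $\cM_\cC\bt\cM_\cD$ is a UMTC, so by the DMNO lemma the local-module category $(\cM_\cC\bt\cM_\cD)^0_{L_\cE}$ is a UMTC of dimension $\dim(\cM_\cC)\dim(\cM_\cD)/\dim(L_\cE)^2=\dim(\cC)\dim(\cD)$, where I use $\dim(\cM_\cC)=\dim(\cC)\dim(\cE)$ and $\dim(L_\cE)=\dim(\cE)$. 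The symmetric fusion category $\cE=(\cE\bt\cE)^0_{L_\cE}$ embeds fully faithfully into this UMTC through the inclusion $\cE\bt\cE\hookrightarrow\cM_\cC\bt\cM_\cD$ composed with the condensation.

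Next I would analyze $\cC\bt_\cE\cD$ and compute its dimension. The crucial observation is that $L_\cE$ lives in $\cE\bt\cE=\cen{\cC\bt\cD}{\cC\bt\cD}$, i.e.\ in the M\"uger center of $\cC\bt\cD$. Therefore $L_\cE$ centralizes every object of $\cC\bt\cD$, so the locality condition $\rho\,c_{M,L_\cE}c_{L_\cE,M}=\rho$ is automatic for every $L_\cE$-module $M$. Hence $(\cC\bt\cD)^0_{L_\cE}=(\cC\bt\cD)_{L_\cE}$, and the DMNO dimension formula for $\cB_A$ gives
\[
\dim(\cC\bt_\cE\cD)=\frac{\dim(\cC)\dim(\cD)}{\dim(\cE)}.
\]
The braiding of $\cC\bt\cD$ descends to the local modules, so $\cC\bt_\cE\cD$ is a UBFC and contains $\cE=(\cE\bt\cE)^0_{L_\cE}$ as a full subcategory.

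Then I would identify $\cC\bt_\cE\cD$ with $\cen{\cE}{\cM_\cC\bt_\cE\cM_\cD}$. Because $\cM_\cC,\cM_\cD$ are minimal modular extensions, $\cen{\cE\bt\cE}{\cM_\cC\bt\cM_\cD}=\cen{\cE}{\cM_\cC}\bt\cen{\cE}{\cM_\cD}=\cC\bt\cD$. The inclusion $\cC\bt\cD\hookrightarrow\cM_\cC\bt\cM_\cD$ contains $L_\cE$ and is fully faithful, so it induces a fully faithful braided embedding $(\cC\bt\cD)^0_{L_\cE}\hookrightarrow(\cM_\cC\bt\cM_\cD)^0_{L_\cE}$. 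Any object of $\cC\bt\cD$ has trivial double braiding with every object of $\cE\bt\cE$ in $\cM_\cC\bt\cM_\cD$, and this triviality is preserved under the induction to local modules, where $\cE\bt\cE$ is sent into $\cE$. Hence $\cC\bt_\cE\cD\subset\cen{\cE}{\cM_\cC\bt_\cE\cM_\cD}$ as fusion subcategories of the UMTC $\cM_\cC\bt_\cE\cM_\cD$. Applying the DGNO centralizer lemma inside this UMTC yields
\[
\dim(\cen{\cE}{\cM_\cC\bt_\cE\cM_\cD})=\frac{\dim(\cC)\dim(\cD)}{\dim(\cE)}=\dim(\cC\bt_\cE\cD),
\]
and Lemma \ref{EOlem} (EO) forces the inclusion to be an equality. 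The Remark on constructing $\mce{\cE}$'s from UMTCs then delivers both claims at once: $\cC\bt_\cE\cD$ is a $\mce{\cE}$, and $\cM_\cC\bt_\cE\cM_\cD$ is a modular extension of it, with the correct dimension ratio $\dim(\cE)$ confirming minimality.

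The main obstacle I anticipate is the careful bookkeeping of how braiding and centralization behave under condensation, specifically verifying that an object in $\cC\bt\cD$ centralizing $\cE\bt\cE$ in $\cM_\cC\bt\cM_\cD$ still centralizes the diagonal $\cE=(\cE\bt\cE)^0_{L_\cE}$ after passing to local modules. This requires the compatibility between the ambient braiding and the induced braiding on $(\cM_\cC\bt\cM_\cD)^0_{L_\cE}$ (given by descending $c_{M,N}$ along the quotient by the $L_\cE$-action), together with the elementary fact that $L_\cE$-locality and $\cE$-centralization are preserved by the fully faithful induction functor because $L_\cE$ lies in the subcategory $\cC\bt\cD$ being induced from.
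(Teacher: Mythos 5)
Your proposal is correct and follows essentially the same route as the paper's proof: establish the chain of embeddings $\cE\hookrightarrow\cC\bt_\cE\cD\hookrightarrow\cen{\cE}{\cM_\cC\bt_\cE\cM_\cD}\hookrightarrow\cM_\cC\bt_\cE\cM_\cD$, compute the dimensions via the DMNO formulas, match $\dim(\cC\bt_\cE\cD)$ with $\dim(\cen{\cE}{\cM_\cC\bt_\cE\cM_\cD})$ using the DGNO centralizer lemma, force equality by the EO lemma, and conclude via Remark~\ref{constructmtce}. The extra care you take with why $L_\cE$-locality is automatic in $\cC\bt\cD$ and why centralization survives the condensation is exactly the content the paper compresses into the word ``obvious.''
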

\begin{proof}
  The embeddings
$\cE=(\cE\bt\cE)_{L_\cE}^0\hookrightarrow (\cC\bt\cD)^0_{L_\cE}=\cC\bt_\cE\cD
\hookrightarrow\cen{\cE}{\cM_\cC\bt_\cE\cM_\cD} 
\hookrightarrow (\cM_\cC\bt\cM_\cD)^0_{L_\cE}=\cM_\cC\bt_\cE\cM_\cD$
are obvious.
So $\cC\bt_\cE\cD$ is a UBFC over $\cE$. Also
\begin{align*}
  \dim(\cC\bt_\cE\cD)=\frac{\dim(\cC\bt\cD)}{\dim(L_\cE)}
  =\frac{\dim(\cC)\dim(\cD)}{\dim(\cE)},
\end{align*}
and $\cM_\cC\bt_\cE\cM_\cD$ is a UMTC,
\begin{align*}
  \dim(\cM_\cC\bt_\cE\cM_\cD)
  =\frac{\dim(\cM_\cC\bt\cM_\cD)}{\dim(L_\cE)^2}=\dim(\cC)\dim(\cD).
\end{align*}
Therefore, $\cC\bt_\cE\cD$ and $\cen{\cE}{\cM_\cC\bt_\cE\cM_\cD}$ have the same total
quantum dimension, thus by Lemma~\ref{EOlem} we know that they are the same.
  By Remark~\ref{constructmtce},
  $\cC\bt_\cE\cD$ is a $\mce{\cE}$, and
  $\cM_\cC\bt_\cE\cM_\cD$ is a
  modular extension of $\cC\bt_\cE\cD$.
\end{proof}

Note that $\cC\bt_\cE\cE=\cC$. Therefore,  for any
  modular extension $\cM_\cE$ of $\cE$, $\cM_\cC\bt_\cE\cM_\cE$ is still a
  modular extension of $\cC$. Physically this means that stacking with an
  invertible phase will not change the bulk excitations. In the following we want to show the inverse,
  that one can extract the ``difference'', a modular extension of $\cE$, or an
  invertible phase, between two modular extensions of $\cC$.

\begin{lem}\label{Lag}
  We have $(\cC\bt\ov\cC)_{L_\cC}^0=\cen{\cC}{\cC}$.
\end{lem}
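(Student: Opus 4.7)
The plan is to combine the DMNO equivalence of fusion categories $(\cC\bt\ov\cC)_{L_\cC}\simeq\cC$ with a direct braiding computation that carves out the local modules. First, I would apply the DMNO lemma to the central functor $\otimes:\cC\bt\ov\cC\to\cC$. This functor is central because the two embeddings $\cC\hookrightarrow Z(\cC)$ and $\ov\cC\hookrightarrow Z(\cC)$ jointly realize $\otimes$ as factoring through the Drinfeld center of $\cC$, and it is essentially surjective since every $Z\in\cC$ equals $Z\otimes\one=\otimes(Z\bt\one)$. Its right adjoint produces the canonical algebra $L_\cC=R(\one)\cong\bigoplus_i i\bt i^*$, and DMNO delivers a monoidal equivalence
\[
F:(\cC\bt\ov\cC)_{L_\cC}\xrightarrow{\ \simeq\ }\cC,\qquad L_\cC\otimes(X\bt Y)\longmapsto X\otimes Y.
\]
Restricting $F$ to the braided subcategory $(\cC\bt\ov\cC)^{0}_{L_\cC}$ of local modules gives a fully faithful braided tensor functor $F^{0}:(\cC\bt\ov\cC)^{0}_{L_\cC}\hookrightarrow\cC$, so the problem reduces to identifying its essential image.

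Next, for a simple $Z\in\cC$ I would examine the free module $M_Z:=L_\cC\otimes(Z\bt\one)=\bigoplus_i(i\otimes Z)\bt i^*$ equipped with the module structure $m\otimes\id_{Z\bt\one}$. Since $F(M_Z)=Z\otimes\one=Z$ is simple and $F$ is an equivalence, $M_Z$ is in fact a simple $L_\cC$-module, and every simple of $(\cC\bt\ov\cC)_{L_\cC}$ arises this way. Unpacking the locality axiom $\mu\,c_{M_Z,L_\cC}\,c_{L_\cC,M_Z}=\mu$ and using commutativity of $L_\cC$ (namely $m\,c_{L_\cC,L_\cC}=m$) to collapse the $L_\cC$-against-$L_\cC$ braiding factor, the condition reduces to the statement that the double braiding between $L_\cC$ and $Z\bt\one$ in $\cC\bt\ov\cC$ is the identity. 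Feeding in the $\cC\bt\ov\cC$-braiding formula
\[
c^{\cC\bt\ov\cC}_{X_1\bt X_2,\,Y_1\bt Y_2}=c^{\cC}_{X_1,Y_1}\bt(c^{\cC}_{Y_2,X_2})^{-1}
\]
summand by summand, the $\ov\cC$-factor $i^*$-against-$\one$ is trivial, leaving precisely the requirement
\[
c^{\cC}_{i,Z}\,c^{\cC}_{Z,i}=\id_{i\otimes Z}\quad\text{for every simple }i,
\]
which is exactly $Z\in\cen{\cC}{\cC}$. Hence $M_Z$ is local if and only if $Z$ lies in the M\"uger center.

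Combining the two steps, the simple local $L_\cC$-modules are parametrized by the simples of $\cen{\cC}{\cC}$ via $Z\mapsto M_Z$, and $F^{0}$ restricts to a braided equivalence $(\cC\bt\ov\cC)^{0}_{L_\cC}\simeq\cen{\cC}{\cC}$, as desired.

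The main obstacle is the braiding bookkeeping in the middle step: one must carefully use the hexagon, naturality, and commutativity of $L_\cC$ to reduce the intrinsic locality axiom, which a priori involves both $L_\cC$-against-$L_\cC$ and $L_\cC$-against-$(Z\bt\one)$ double braidings, to the single M\"uger-center condition on $Z$. Should this become cumbersome, a cleaner backup is to close the argument by dimensions: the fully faithful braided embedding $F^{0}$ already embeds $(\cC\bt\ov\cC)^{0}_{L_\cC}$ into $\cC$, so by Lemma~\ref{EOlem} the inclusion of essential images into $\cen{\cC}{\cC}$ (easy direction of the braiding computation) together with $\dim((\cC\bt\ov\cC)^{0}_{L_\cC})=\dim(\cen{\cC}{\cC})$ suffices, and the two extreme cases confirm the expected answer—modular $\cC$ gives $\cen{\cC}{\cC}=\Hilb$ and $\dim((\cC\bt\ov\cC)^{0}_{L_\cC})=\dim(\cC\bt\ov\cC)/\dim(L_\cC)^{2}=1$, while symmetric $\cC$ has symmetric ambient braiding so every module is local and $\cen{\cC}{\cC}=\cC$.
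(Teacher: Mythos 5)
Your proof is correct and follows essentially the same route as the paper's: use the DMNO equivalence $(\cC\bt\ov\cC)_{L_\cC}\simeq\cC$ to identify every simple module with a free module $L_\cC\ot(Z\bt\one)$, then show such a free module is local precisely when $Z\bt\one$ centralizes $L_\cC$, i.e.\ when $Z\in\cen{\cC}{\cC}$; you merely spell out the braiding bookkeeping that the paper leaves implicit. (Only your ``backup'' dimension count should be treated with care, since the formula $\dim(\cC_A^0)=\dim(\cC)/\dim(A)^2$ is only available when the ambient category is modular, but your main argument does not rely on it.)
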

\begin{proof}
  $(\cC\bt\ov\cC)_{L_\cC}$ is equivalent to $\cC$ (as a fusion
category). Moreover, for $X\in\cC$ the equivalence gives the free module $
L_\cC\ot(X\bt
\one )\cong L_\cC\ot(\one\bt X)$. $L_\cC\ot(X\bt\one )$ is a local $L_\cC$ module if and only
if $X\bt \one$ centralize $L_\cC$. This is the same as $X\in
\cen{\cC}{\cC}$. Therefore, we have $(\cC\bt \ov\cC)_{L_\cC}^0=\cen{\cC}{\cC}$.
\end{proof}
\begin{lem}[FFRS~\cite{FFRS0309465}]
  For a non-commutative algebra $A$, we have the left center
  $A_l$ of $A$, with algebra embedding $e_l:A_l\to A$, which is the maximal subalgebra
  such that $m {(\id_A\ot e_l)} c_{A_l,A}=m(e_l\ot \id_A)$. Similarly the right
  center $A_r$ with $e_r:A_r\to A$, is the maximal subalgebra such that $m
  ( e_r\ot\id_A) c_{A,A_r}=m(\id_A\ot e_r)$. $A_l$ and
  $A_r$ are commutative subalgebras, thus condensable.
  There is a canonical equivalence between the categories of local modules
  over the left and right centers, $\cC_{A_l}^0=\cC_{A_r}^0$.
\end{lem}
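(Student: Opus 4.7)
The plan is to prove the lemma in two stages: establish existence and commutativity of $A_l, A_r$, then construct a canonical equivalence of local module categories.

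For the first stage, semisimplicity of $\cC$ implies that the collection of subobjects $e: A' \hookrightarrow A$ satisfying the left-centralizing equation $m(\id_A \otimes e) c_{A',A} = m(e \otimes \id_A)$ is closed under sums, so it admits a (unique up to isomorphism) maximal element, which I call $A_l$. The centralizing equation is preserved by the multiplication $m$ (two applications plus associativity) and the unit $\eta$ factors through $A_l$ trivially, so $A_l$ inherits a subalgebra structure. Specializing the defining equation to the case where the second factor itself lies in $A_l \subset A$ yields $m|_{A_l}\, c_{A_l, A_l} = m|_{A_l}$, so $A_l$ is commutative. The isometric and connected properties descend from $A$ via the embedding $e_l$. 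The construction of $A_r$ is completely parallel, so both are condensable algebras.

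For the second stage, I would exhibit $A$ itself as an $A_l$-$A_r$-bimodule, with the left $A_l$-action given by $m(e_l \otimes \id_A)$ and the right $A_r$-action by $m(\id_A \otimes e_r)$; the two actions commute thanks to associativity of $m$ combined with the defining equations for the centers. I then define
\[
F: \cC_{A_l}^0 \to \cC_{A_r}^0, \qquad M \mapsto M \otimes_{A_l} A,
\]
where the output acquires its right $A_r$-action from the bimodule structure on $A$. A direct diagram chase, using the right-centralizing equation for $A_r$ to move braidings past the $A_r$-action, shows that $F(M)$ is local whenever $M$ is. A symmetric construction supplies a candidate quasi-inverse $G: N \mapsto A \otimes_{A_r} N$.

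The hard part will be showing that $F$ and $G$ are mutually inverse, i.e., that the unit and counit of the resulting adjunction are isomorphisms on local modules. This reduces to proving canonical bimodule isomorphisms $A \otimes_{A_r} A \cong A$ (as $A_l$-bimodule, where the relative tensor product collapses onto $A_l$ when restricted to the centers) and the symmetric statement for $A_r$. These identifications are the crux of the FFRS argument and rely essentially on the special symmetric Frobenius structure of $A$: in the unitary $Q$-system setting of Definition~\ref{alg}, the isometric condition $m m^\dagger = \id_A$ together with the Frobenius relations available from the $C^*$-structure~\cite{LR9604008} provides a separability idempotent splitting $m$. Propagating this splitting through the relative tensor products, and invoking the defining equations for $A_l, A_r$ to collapse the residual braidings, produces the required isomorphisms; naturality and the triangle identities then follow formally.
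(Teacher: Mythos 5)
First, a point of comparison: the thesis does not prove this lemma at all --- it is imported from FFRS~\cite{FFRS0309465} with no argument given --- so there is no in-paper proof to match and your attempt has to stand on its own. Your first stage does: the centralizing condition is linear in the embedding, so a maximal centralizing subobject exists; closure under $m$ follows from associativity plus two applications of the defining equation; precomposing the defining equation with $\id\ot e_l$ and using naturality of the braiding gives $m(e_l\ot e_l)c_{A_l,A_l}=m(e_l\ot e_l)$, i.e.\ commutativity. The only glib spot is the claim that the isometric (special) property ``descends''; it holds only after renormalizing the counit, which deserves a line.

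The second stage contains a genuine error. The functor $M\mapsto M\ot_{A_l}A$ is not an equivalence, and the isomorphism $A\ot_{A_r}A\cong A$ on which you hang the whole argument is false. Take $\cC=\Rep(S_3)$ with its symmetric braiding and $A=\mathrm{End}(V)=V\ot V^*$ for the two-dimensional irreducible $V$: this is a connected, non-commutative, special symmetric Frobenius algebra with $A_l=A_r=\one$, so both $\cC^0_{A_l}$ and $\cC^0_{A_r}$ are just $\Rep(S_3)$. Your functor is $M\mapsto M\ot\mathrm{End}(V)$, which sends the simple object $\one$ to $\one\oplus\mathrm{sgn}\oplus V$ and multiplies all dimensions by four, so it is neither full nor essentially surjective; and $A\ot_{A_r}A=A\ot A$ is sixteen-dimensional, not four-dimensional. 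No separability idempotent rescues this, because over $A_r=\one$ the relative tensor product is the plain tensor product. More generally any such functor would inflate quantum dimensions by $\dim(A)/\dim(A_l)$, which an equivalence cannot do. The actual FFRS route does not tensor with $A$ over one center: it passes through the category of $A$-$A$-bimodules, where the two $\alpha$-inductions (right $A$-action defined via $c_{-,A}$ versus $c_{A,-}^{-1}$) give two ``local induction'' functors built from the image of a projection, and it identifies \emph{both} $\cC^0_{A_l}$ and $\cC^0_{A_r}$ with the common subcategory of ambichiral bimodules. That intermediate bimodule category, not $A$ itself viewed as an $A_l$-$A_r$-bimodule, is what mediates the equivalence, and it is the ingredient your proposal is missing.
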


\begin{thm}\label{main}
 let $\cM$ and $\cM'$ be two modular extensions of the $\mce{\cE}$ $\cC$. There
 exists a unique $\cK\in\mext(\cE)$ such that $\cK\bt_\cE\cM=\cM'$. Such $\cK$
 is given by
   \begin{align*}
     \cK=(\cM'\bt \ov\cM)_{L_\cC}^0.
   \end{align*}
\end{thm}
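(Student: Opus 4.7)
The plan is to verify the formula in three steps: confirm $\cK = (\cM' \bt \ov\cM)^0_{L_\cC}$ is a modular extension of $\cE$, establish the equivalence $\cK \bt_\cE \cM \simeq \cM'$, and finally deduce uniqueness.

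First I would check that $\cK$ lies in $\mext(\cE)$. Since $\cM' \bt \ov\cM$ is a UMTC and $L_\cC \in \cC \bt \ov\cC \hookrightarrow \cM' \bt \ov\cM$ is condensable, the DMNO lemma gives that $\cK$ is again a UMTC with
\[
\dim(\cK) \;=\; \frac{\dim(\cM')\dim(\cM)}{\dim(L_\cC)^2} \;=\; \frac{(\dim(\cC)\dim(\cE))^2}{\dim(\cC)^2} \;=\; \dim(\cE)^2.
\]
Lemma~\ref{Lag} realizes $\cE = (\cC \bt \ov\cC)^0_{L_\cC}$ as a full braided subcategory of $\cK$ via the braided inclusion $\cC \bt \ov\cC \hookrightarrow \cM' \bt \ov\cM$. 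Modularity of $\cK$ together with $\dim(\cen{\cE}{\cK})\dim(\cE) = \dim(\cK) = \dim(\cE)^2$ then forces $\cen{\cE}{\cK} = \cE$, so $\cK \in \mext(\cE)$.

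The crux is to show $\cK \bt_\cE \cM \simeq \cM'$. I would work inside the UMTC $\cA := \cM' \bt \ov\cM \bt \cM$. By the standard transitivity of anyon condensation applied twice (first condense $L_\cC$ in tensor slots~1--2, then $L_\cE$ via the composite embedding $\cE \bt \cE \hookrightarrow \cK \bt \cM$),
\[
\cK \bt_\cE \cM \;=\; \bigl((\cM' \bt \ov\cM)^0_{L_\cC} \bt \cM\bigr)^0_{L_\cE} \;\simeq\; \cA^0_{L},
\]
where $L$ is the composite condensable algebra in $\cA$ whose support spans all three tensor factors. On the other hand, modularity of $\cM$ combined with Lemma~\ref{Lag} gives $(\ov\cM \bt \cM)^0_{L_\cM} = \cen{\cM}{\cM} = \Hilb$, hence $\cM' \simeq \cA^0_{L_\cM}$ with $L_\cM$ placed in slots~2--3. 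I would then exhibit a natural braided embedding $\cM' \hookrightarrow \cK \bt_\cE \cM$ (sending $X \in \cM'$ to the class of $X \bt L_\cM$ after condensation) and close the argument via Lemma~\ref{EOlem} together with the dimension match
\[
\dim(\cK \bt_\cE \cM) \;=\; \frac{\dim(\cK)\dim(\cM)}{\dim(\cE)^2} \;=\; \dim(\cC)\dim(\cE) \;=\; \dim(\cM').
\]

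For uniqueness, if $\cK' \in \mext(\cE)$ also satisfies $\cK' \bt_\cE \cM \simeq \cM'$, then applying the construction of the crux step to $\cK'$ gives $(\cM' \bt \ov\cM)^0_{L_\cC} \simeq \cK'$, hence $\cK' \simeq \cK$; this reduces uniqueness to the symmetric identity $(\cY \bt_\cE \cM \bt \ov\cM)^0_{L_\cC} \simeq \cY$ for $\cY \in \mext(\cE)$, provable by the same method. The hardest step will be the algebraic bookkeeping in the crux: tracking how the embedding $\cE \hookrightarrow \cK$ produced by Lemma~\ref{Lag} (which realises each $X \in \cE$ as the free $L_\cC$-module $\bigoplus_i (i \ot X) \bt i^*$) combines with the direct inclusion $\cE \hookrightarrow \cM$ to assemble the composite algebra $L$ in $\cA$, and verifying that $L$ is Morita equivalent (as a condensable algebra in the UMTC $\cA$) to $L_\cM$ placed in slots~2--3, so that both give the same category of local modules. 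Once this Morita identification is secured, the remainder of the argument reduces to the dimension count above and an application of Lemma~\ref{EOlem}.
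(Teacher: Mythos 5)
Your overall architecture matches the paper's: realize both $\cK\bt_\cE\cM$ and $\cM'$ as categories of local modules over condensable algebras inside the three-fold product $\cM'\bt\ov\cM\bt\cM$, then compare. The first step (that $\cK\in\mext(\cE)$) and all the dimension counts are fine. But the crux --- why the composite algebra $(L_\cC\bt\one)\ot(\one\bt L_\cE)$ and the algebra $\one\bt L_{\ov\cM}$ have equivalent categories of local modules --- is exactly the point you leave open, deferring it to an unspecified ``Morita identification'' to be ``secured''. That is not bookkeeping; it is the single nontrivial input of the whole proof. The paper's resolution is the FFRS lemma on left and right centers: the two commutative algebras in question are precisely the left and right centers of the \emph{non-commutative} algebra $(L_\cC\bt\one)\ot(\one\bt L_{\ov\cM})$ in $\cM'\bt\ov\cM\bt\cM$, and FFRS supplies a canonical braided equivalence between the categories of local modules over the left and the right center. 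Without naming this mechanism (or an equivalent one), neither the existence claim $\cK\bt_\cE\cM\simeq\cM'$ nor your uniqueness step (which silently uses the same identification inside $\cK\bt\cM\bt\ov\cM$, with $\one\bt L_\cM$ versus $(L_\cE\bt\one)\ot(\one\bt L_\cC)$) is established. Your fallback --- constructing a braided embedding $\cM'\hookrightarrow\cK\bt_\cE\cM$ by hand and invoking Lemma~\ref{EOlem} --- merely relocates the same difficulty: showing that $X\mapsto$ (the class of $X\bt L_{\ov\cM}$ after condensation) is a braided tensor functor is tantamount to the center identification you are trying to avoid.

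A second, smaller gap: the theorem is a statement about modular \emph{extensions}, so the equivalences must respect the embeddings of $\cE$ and $\cC$, not merely be abstract braided equivalences of UMTCs. The paper devotes its final paragraph to checking that the FFRS equivalence between local modules over the left and right centers preserves these embeddings; your proposal does not address this at all.
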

\begin{proof}
  $\cK$ is a modular extension of $\cE$. This follows
  Lemma \ref{Lag}, that $\cE=\cen{\cC}{\cC}=(\cC\bt\ov \cC)^0_{L_\cC}$ is a full
  subcategory of $\cK$. $\cK$ is a UMTC by construction, and
  $\dim(\cK)=\frac{\dim(\cM)\dim(\cM')}{\dim(L_\cC)^2}=\dim(\cE)^2$.

  To show that $\cK=(\cM'\bt\ov\cM)_{L_\cC}$ satisfies
  $\cM'=\cK\bt_\cE\cM$, note that
  $\cM'=\cM'\bt\Hilb=\cM'\bt(\ov\cM\bt\cM)_{L_{\ov\cM}}^0$. It suffices that
  \begin{align*}
    (\cM'\bt\ov\cM\bt\cM)_{\one\bt
      L_{\ov\cM}}^0=[(\cM'\bt\ov\cM)_{L_\cC}^0\bt
      \cM]_{L_\cE}^0=(\cM'\bt\ov\cM\bt\cM)_{(L_\cC\bt\one)\ot(\one\bt
      L_\cE)}^0.
  \end{align*}
  While $\one\bt L_{\ov\cM} $ and $(L_\cC\bt\one)\ot(\one\bt
  L_\cE)$ turns out to be left and right centers of the algebra $(L_\cC\bt\one)\ot(\one\bt
  L_{\ov\cM})$.

  If $\cM'=\cK\bt_\cE\cM=(\cK\bt\cM)_{L_\cE}^0$,
  then
  \begin{align*}
    \cK= (\cK\bt\cM\bt\ov\cM)_{\one\bt
    L_\cM}^0=
    (\cK\bt\cM\bt\ov\cM)_{(L_\cE\bt\one)\ot(\one\bt 
    L_\cC)}^0=[(\cK\bt_\cE\cM)\bt\ov\cM]_{L_\cC}^0
    =(\cM'\bt\ov\cM)_{L_\cC}^0.
  \end{align*}
  It is similar here that $\one\bt L_{\cM} $ and
  $(L_\cE\bt\one)\ot(\one\bt
  L_\cC)$ are the left and right centers of the algebra
  $(L_\cE\bt\one)\ot(\one\bt
  L_{\cM})$. This proves the uniqueness of $\cK$.

  The above established the equivalences between UMTCs. To further show that they
  are equivalences between modular extensions, one need to check the
  embeddings of $\cE,\cC$.
Here the only non-trivial braided tensor equivalences are those between the categories of
local modules over left and right centers. By the detailed construction given
in Ref.~\cite{FFRS0309465}, one can check that they indeed preserve the embeddings of
$\cE,\cC$.
\end{proof}

Let us list several consequences of Theorem \ref{main}.
\begin{cor}\label{hegroup}
  $\mext(\cE)$ forms a finite Abelian group. The identity is $Z(\cE)$ and the inverse of
  $\cM$ is $\ov\cM$.
 \end{cor}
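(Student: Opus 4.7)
The plan is to exploit Theorem~\ref{main}, applied with $\cC=\cE$ (viewed as a $\mce{\cE}$ via $\cen{\cE}{\cE}=\cE$), to endow $\mext(\cE)$ with the structure of a group, and then to identify the identity and inverses. Specializing the theorem, for any $\cM,\cM'\in\mext(\cE)$ there is a unique $\cK\in\mext(\cE)$ with $\cK\bt_\cE\cM=\cM'$, explicitly $\cK=(\cM'\bt\ov\cM)_{L_\cE}^0$. Combined with the commutativity and associativity of $\bt_\cE$ inherited from the Deligne tensor product and the functoriality of the local-module construction, and with the non-emptiness of $\mext(\cE)$ (which contains $Z(\cE)$, a UMTC of dimension $\dim(\cE)^2$ with $\cE\hookrightarrow Z(\cE)$ and $\cen{\cE}{Z(\cE)}=\cE$), this exhibits $\mext(\cE)$ as a torsor over itself under $\bt_\cE$.

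A non-empty commutative semigroup in which every element acts bijectively is automatically a group: an identity $e$ is obtained by fixing any basepoint $\cM_0$ and solving $e\bt_\cE\cM_0=\cM_0$, after which the cancellation property forces $e\bt_\cE\cM=\cM$ for every $\cM$. Inverses then come for free from the case $\cM'=\cM$ of Theorem~\ref{main}: the unique $\cK$ with $\cK\bt_\cE\cM=\cM$ is on one hand the identity $e$ and on the other the explicit expression $\cM\bt_\cE\ov\cM=(\cM\bt\ov\cM)_{L_\cE}^0$. Hence $\ov\cM$ is the inverse of $\cM$, and as a byproduct $(\cM\bt\ov\cM)_{L_\cE}^0$ is seen to be independent of $\cM$.

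It remains to identify the identity $e$ with $Z(\cE)$. Finiteness of $\mext(\cE)$ is immediate since all modular extensions have the same fixed dimension $\dim(\cE)^2$ and UMTCs of bounded dimension come in finitely many equivalence classes~\cite{BNRW1310.7050}. In a finite group the identity is the unique idempotent, so it suffices to prove $Z(\cE)\bt_\cE Z(\cE)\cong Z(\cE)$. A quick dimension check gives $\dim((Z(\cE)\bt Z(\cE))_{L_\cE}^0)=\dim(\cE)^4/\dim(\cE)^2=\dim(Z(\cE))$, so by Lemma~\ref{EOlem} it is enough to exhibit a fully faithful braided tensor embedding $Z(\cE)\hookrightarrow(Z(\cE)\bt Z(\cE))_{L_\cE}^0$. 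The natural candidate comes from the central functor $\cE\bt\ov\cE\to Z(\cE)$, $X\bt Y\mapsto X\ot Y$, whose associated algebra (the right adjoint applied to $\one$) is precisely $L_\cE$: condensing $L_\cE$ in the product identifies the diagonal copy of $\cE$ inside the two factors of $Z(\cE)$ and reassembles their half-braidings into the single half-braiding that defines $Z(\cE)$.

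The main obstacle is this last identification. The delicate point is to give a clean categorical argument that condensing $L_\cE$ in $Z(\cE)\bt Z(\cE)$ collapses the product back to $Z(\cE)$ (equivalently, that $L_\cE$, viewed as an algebra in $\cM\bt\ov\cM$ for any modular extension $\cM$, always yields the same UMTC of local modules, and that this UMTC is $Z(\cE)$). The right framework is the machinery of central functors and their right adjoints developed earlier in the chapter, combined with the universal property of the Drinfeld center; once the embedding is constructed, Lemma~\ref{EOlem} closes the argument.
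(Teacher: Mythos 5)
Your derivation of the group structure is correct and is essentially the paper's own argument: Theorem~\ref{main} specialized to $\cC=\cE$ gives unique division, hence a commutative cancellative structure; solving $e\bt_\cE\cM_0=\cM_0$ and propagating by cancellation produces the identity; and the case $\cM'=\cM$ forces $e=\cM\bt_\cE\ov\cM=(\cM\bt\ov\cM)^0_{L_\cE}$, so $\ov\cM$ is the inverse and this expression is independent of $\cM$. Finiteness from \cite{BNRW1310.7050} and non-emptiness via $Z(\cE)$ are also as in the paper. Up to this point you have reproduced the proof of Corollary~\ref{hegroup}, which in the paper likewise stops short of identifying the identity and defers that to the standalone Theorem~\ref{cze}.

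The identification $e=Z(\cE)$ is where your route diverges and where the gap sits. Your plan — show $Z(\cE)$ is idempotent and invoke uniqueness of idempotents in a group — is legitimate in principle, but the step you yourself flag as ``the main obstacle'' is exactly the mathematical content that needs proving, and the candidate embedding you sketch does not work as stated: the free-module functor $X\mapsto L_\cE\ot(X\bt\one)$ lands in \emph{local} $L_\cE$-modules only when $X\bt\one$ centralizes $L_\cE=\oplus_i\, i\bt i^*$, i.e.\ only for $X\in\cen{\cE}{Z(\cE)}=\cE$; so it does not embed all of $Z(\cE)$ into $(Z(\cE)\bt Z(\cE))^0_{L_\cE}$, and Lemma~\ref{EOlem} cannot be applied. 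One would instead need something like $Z(\cA\bt\cB)\cong Z(\cA)\bt Z(\cB)$ together with $Z(\cA)^0_A\cong Z(\cA_A)$ for a condensable algebra $A$, which is not established in the thesis. The paper's Theorem~\ref{cze} takes a different and complete route: for any modular extension $\cM$ it exhibits the Lagrangian algebra $L_\cM\in\ov\cM\bt\cM$ as a condensable algebra over $L_\cC$, shows that condensing it inside $(\ov\cM\bt\cM)^0_{L_\cC}$ yields $\cE$ (by a quantum-dimension count on the free modules), and concludes $(\ov\cM\bt\cM)^0_{L_\cC}=Z(\cE)$ because a UMTC with a Lagrangian algebra whose module category is $\cE$ is the Drinfeld center of $\cE$; it also checks that the forgetful functor restricts to $\id_\cE$, which is needed because equality in $\mext(\cE)$ requires the equivalence to commute with the embeddings of $\cE$ — a point your idempotency argument would also have to address. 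Without supplying that missing construction, your proof establishes the finite Abelian group structure and the inverse but not the claim that the identity is $Z(\cE)$.
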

 \begin{proof}
   It is easy to verify that the stacking $\bt_\cE$ for modular extensions
   is associative and commutative. To show that they form a group we only need
   to find out the identity and inverse.
   In this case $\cK=(\cM'\bt \ov \cM)^0_{L_\cE}=\cM'\bt_\cE\ov \cM$,
   Theorem \ref{main} becomes $\cM'\bt_\cE\ov\cM\bt_\cE\cM=\cM'$, for any
   modular extensions $\cM,\cM'$ of $\cE$.
   Thus, $\ov{\cM'}\bt_\cE
   \cM'=\ov{\cM'}\bt_\cE
   \cM'\bt_\cE\ov\cM\bt_\cE\cM
   =\ov\cM\bt_\cE\cM$, i.e. $\cZ_\cE:=\ov\cM\bt_\cE\cM$ is the same category
   for any extension $\cM$, which is exactly the identity element. It is then
   obvious that the inverse of $\cM$ is $\ov\cM$. The finiteness follows from
  Ref.~\cite{BNRW1310.7050}.

   In fact, the identity $\cZ_\cE$ should be $Z(\cE)$, the Drinfeld center of
   $\cE$. (This is Theorem~\ref{cze}. The
   embedding $\cE\hookrightarrow Z(\cE)$ is given by the lift of the identity
   functor on $\cE$, i.e., $\cE\hookrightarrow Z(\cE)\to \cE$ equals $\id_\cE$.)
 \end{proof}

  \begin{eg}
    $\mext(\sRep(\Z_2^f))\cong\Z_{16}$, with central charge
    $c=n/2\text{ mod } 8,n=0,1,2,\dots,15$. This is the 16-fold way~\cite{Kitaev0506438}.
  \end{eg}
  \begin{eg}[LKW~\cite{LKW1602.05936}]
    $\mext(\Rep(G))\cong H^3(G,U(1))$, all with central charge
    $c=0\mod 8$. This agrees with the classification of bosonic
    SPT phases~\cite{CGLW1106.4772}.
  \end{eg}

 \begin{cor}\label{hetorsor}
   For a $\mce{\cE}$ $\cC$, $\mext(\cC)$, if exists, forms a
   $\mext(\cE)$-torsor. The action of $\mext(\cE)$ on $\mext(\cC)$ is given by
   the stacking $\bt_\cE$.
 \end{cor}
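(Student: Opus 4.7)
The plan is to verify that the stacking operation $\bt_\cE$ endows $\mext(\cC)$ with the structure of a torsor over the group $\mext(\cE)$ constructed in Corollary \ref{hegroup}. Recall that a $G$-torsor is a nonempty set with a free and transitive $G$-action, so I would break the argument into three short steps: well-definedness of the action, the action axioms, and freeness plus transitivity.

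First, I would check that $\bt_\cE$ really defines a map $\mext(\cE)\times\mext(\cC)\to\mext(\cC)$. Given $\cK\in\mext(\cE)$ and $\cM\in\mext(\cC)$, note that $\cE$ is tautologically a $\mce{\cE}$ whose modular extension is $\cK$, while $\cC$ is a $\mce{\cE}$ with modular extension $\cM$. The stacking theorem following Definition \ref{stacking} then says that $\cK\bt_\cE\cM$ is a modular extension of the $\mce{\cE}$ $\cE\bt_\cE\cC$, and the identity $\cE\bt_\cE\cC=\cC$ (remarked after that theorem) gives $\cK\bt_\cE\cM\in\mext(\cC)$ as required.

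Second, I would verify the action axioms. Associativity $(\cK_1\bt_\cE\cK_2)\bt_\cE\cM\simeq\cK_1\bt_\cE(\cK_2\bt_\cE\cM)$ is a consequence of the general associativity of $\bt_\cE$ as a monoidal operation on UBFCs over $\cE$. The identity axiom, $Z(\cE)\bt_\cE\cM\simeq\cM$, follows from Corollary \ref{hegroup}, which identified $Z(\cE)$ as the unit of $\mext(\cE)$: applying Theorem \ref{main} to the pair $\cM,\cM\in\mext(\cC)$ produces the unique $\cK$ with $\cK\bt_\cE\cM\simeq\cM$, and by the same torsor argument used in Corollary \ref{hegroup} this $\cK$ equals the identity $Z(\cE)$.

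Third, freeness and transitivity are essentially the content of Theorem \ref{main}: for any $\cM,\cM'\in\mext(\cC)$ there exists a unique $\cK=(\cM'\bt\ov\cM)^0_{L_\cC}\in\mext(\cE)$ with $\cK\bt_\cE\cM\simeq\cM'$. Existence is transitivity; uniqueness is freeness. The hypothesis ``if exists'' is exactly the nonemptiness condition in the definition of a torsor. There is essentially no obstacle here: all the nontrivial analytic work (construction of $\cK$ via left/right centers, dimension counting, compatibility of embeddings) has already been absorbed into Theorem \ref{main} and Corollary \ref{hegroup}, so the proof of this corollary is a short unpacking of the torsor axioms from those two results.
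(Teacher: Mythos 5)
Your proposal is correct and follows the same route the paper intends: the corollary is stated without a written proof precisely because it is an immediate unpacking of Theorem \ref{main} (existence and uniqueness of $\cK$ giving transitivity and freeness) together with Corollary \ref{hegroup} and Theorem \ref{cze} (group structure and identification of the unit with $Z(\cE)$). Your three-step verification of well-definedness, the action axioms, and free transitivity is exactly the intended argument.
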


Below is a standalone theorem that fixes the unit element in the Abelian group
of modular extensions. 
\begin{thm}\label{cze}
  Let $\cM$ be a modular extension of a $\mce{\cE}$ $\cC$:
  \begin{align*}
    (\ov\cM\bt\cM)_{L_\cC}^0=Z(\cE).
  \end{align*}
  In particular, this means that $\cZ_\cE=Z(\cE)$.
\end{thm}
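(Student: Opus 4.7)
The plan is to show that $\cK:=(\ov\cM\bt\cM)^0_{L_\cC}$ is a modular extension of $\cE$ by a dimension-and-embedding argument, and then to identify it canonically with the Drinfeld center $Z(\cE)$.

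First I would verify that $\cK$ is a modular extension of $\cE$. Since $\cM$ is a modular extension of $\cC$, Remark~\ref{constructmtce} gives $\dim(\cM)=\dim(\cC)\dim(\cE)$, whence the DMNO dimension formula yields $\dim(\cK)=\dim(\cM)^2/\dim(L_\cC)^2=\dim(\cE)^2=\dim(Z(\cE))$, and $\cK$ is a UMTC because $\ov\cM\bt\cM$ is. The fully faithful inclusion $\ov\cC\bt\cC\hookrightarrow\ov\cM\bt\cM$ sends $L_\cC$ to $L_\cC$, so Lemma~\ref{Lag} applied to $\cC$ identifies $(\ov\cC\bt\cC)^0_{L_\cC}=\cen{\cC}{\cC}=\cE$ as a fully faithful braided subcategory of $\cK$. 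Because $\cE$ is symmetric it centralizes itself in $\cK$, and comparing total dimensions via Lemma~\ref{EOlem} forces $\cen{\cE}{\cK}=\cE$; hence $\cK\in\mext(\cE)$.

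The hard part will be to identify this particular modular extension as $Z(\cE)$ rather than as some other class in $\mext(\cE)$. I plan to construct a canonical fully faithful braided tensor functor $\Phi\colon\cK\to Z(\cE)$; by the dimension match and Lemma~\ref{EOlem} any such $\Phi$ will automatically be an equivalence. The construction starts from the observation that, for a local $L_\cC$-module $(Y,\rho)\in\ov\cM\bt\cM$, the locality condition $\rho c_{Y,L_\cC}c_{L_\cC,Y}=\rho$ combined with $\cen{\cE}{\cM}=\cC$ (so that every $A\in\cE$ has trivial monodromy with each simple summand $i^*\bt i$ of $L_\cC$) should furnish a coherent half-braiding of $Y$ with every object of $\cE$, producing a lift of $Y$ to $Z(\cE)$; naturality, monoidality and compatibility with the braiding of $\cK$ should then follow from the commutativity and associativity axioms of Definition~\ref{alg} together with the tensor structure of $\ov\cM\bt\cM$. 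The delicate points I anticipate are verifying full faithfulness (where I would use $\cen{\cE}{\cK}=\cE$ from the previous step to rule out degenerate half-braidings) and checking that the composite $\cE\hookrightarrow\cK\xrightarrow{\Phi}Z(\cE)$ agrees with the canonical embedding $X\mapsto(X,c_{X,-})$, so that $\Phi$ is an equivalence of modular extensions of $\cE$ and not merely of bare UMTCs.
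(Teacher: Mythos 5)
Your first paragraph is fine and agrees with what the paper does elsewhere: $\cK=(\ov\cM\bt\cM)^0_{L_\cC}$ is a UMTC of dimension $\dim(\cM)^2/\dim(L_\cC)^2=\dim(\cE)^2$ containing $\cE=(\cC\bt\ov\cC)^0_{L_\cC}$ by Lemma~\ref{Lag}, hence a modular extension of $\cE$. You also correctly identify that the entire difficulty is pinning down \emph{which} element of $\mext(\cE)$ this is (the 16-fold way shows dimension counting alone cannot do it). But your proposed construction of $\Phi:\cK\to Z(\cE)$ has a type error at its core. An object of $Z(\cE)$ is a pair $(X,b_{X,-})$ whose underlying object $X$ lies in $\cE$. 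A local $L_\cC$-module $Y$ lives in $\ov\cM\bt\cM$ (equivalently, in $\cK$), which is vastly larger than $\cE$; equipping $Y$ with ``a coherent half-braiding with every object of $\cE$'' does not produce an object of $Z(\cE)$ — indeed every object of $\cK$ already carries such data for free, namely the restriction of the braiding of $\cK$ to the subcategory $\cE\subset\cK$, and this carries no information. The monodromy observation you invoke (every $A\in\cE$ centralizes each summand $i^*\bt i$ of $L_\cC$ because $\cC=\cen{\cE}{\cM}$) only re-proves that $\cE$ embeds into $\cK$, i.e., your first paragraph.

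The missing ingredient is a \emph{second} condensation. The paper takes the canonical Lagrangian algebra $L_\cM=\oplus_{i\in\cM}\,i^*\bt i$ in $\ov\cM\bt\cM$, notes that $L_\cM$ contains $L_\cC$ as a subalgebra and hence descends to a condensable algebra in $\cK$ of dimension $\dim(\cM)/\dim(\cC)=\dim(\cE)$, i.e., Lagrangian in $\cK$. It then computes $\cK_{L_\cM}=\cE$ by tracking the free modules $L_\cM\ot_{L_\cC}[L_\cC\ot(i\bt\one)]\cong L_\cM\ot(i\bt\one)$ for $i\in\cE$ and comparing dimensions, and finally invokes the standard fact that a UMTC with a Lagrangian algebra $L$ is canonically $Z(\cB_L)$, giving $\cK=Z(\cE)$; the compatibility of embeddings comes from identifying $L_\cM\ot_{L_\cC}-$ with the forgetful functor $Z(\cE)\to\cE$. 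Your instinct that the half-braiding should be ``induced by the ambient braiding'' is exactly what happens in that theorem — but it is the half-braiding of $L_\cM\ot_{L_\cC}Y\in\cE$ with all (not necessarily local) $L_\cM$-modules, not of $Y$ itself with $\cE\subset\cK$. Without introducing $L_\cM$ and the identification $\cK_{L_\cM}=\cE$, your functor $\Phi$ has no well-defined target object, so the proof as written does not go through.
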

\begin{proof}
  There is a Lagrangian algebra $L_\cM$ in $\ov\cM\bt\cM$, such that the
  category of $L_\cM$-modules in $\ov\cM\bt\cM$ is
  $(\ov\cM\bt\cM)_{L_\cM}=\cM$, via the functor $L_\cM\ot(i\bt \one)\mapsto i$. $L_\cM$ is a condensable algebra over
$L_\cC$, and also a condensable algebra in
$(\ov\cM\bt\cM)_{L_\cC}^0$. We would like to show that
$[(\ov\cM\bt\cM)_{L_\cC}^0]_{L_\cM}=\cE$. To see this, note that
$\cE\hookrightarrow (\ov\cM\bt\cM)_{L_\cC}^0$, the image of $\cE$ identifies
with the free $L_\cC$-modules $ L_\cC\ot(i\bt \one) \cong  L_\cC\ot(\one\bt i),
i\in\cE$. Further check the free $L_\cM$-modules in
$(\ov\cM\bt\cM)_{L_\cC}^0$ generated by these objects, and we
find that $L_\cM\ot_{L_\cC} [L_\cC\ot (i\bt \one)]\cong
L_\cM\ot(i\bt\one)\mapsto i  $.
This means that $\cE\subset[(\ov\cM\bt\cM)_{L_\cC}^0]_{L_\cM}$. Since they
have the same total quantum dimension, we must
have $[(\ov\cM\bt\cM)_{L_\cC}^0]_{L_\cM}=\cE$. Since $L_\cM$ is Lagrangian in
$(\ov\cM\bt\cM)_{L_\cC}^0$,
$(\ov\cM\bt\cM)_{L_\cC}^0=Z([(\ov\cM\bt\cM)_{L_\cC}^0]_{L_\cM})=Z(\cE)$.
Moreover, $L_\cM\ot_{L_\cC}-:(\ov
\cM\bt\cM)_{L_\cC}^0\to[(\ov\cM\bt\cM)_{L_\cC}^0]_{L_\cM}$ coincides with the
forgetful functor $Z(\cE)\to\cE$. Thus the embedding $\cE\hookrightarrow
(\ov\cM\bt\cM)_{L_\cC}^0$ composed with the forgetful functor $Z(\cE)\to\cE$
gives the identity functor on $\cE$.
\end{proof}

\section{Main Results of the Stacking Operation}
We conclude the main results in the previous section. Topological phase with
symmetry $\cE$ are classified by the triple $(\cC,\cM,c)$. We
mathematically constructed the stacking operation between them,
\begin{align}
  (\cC_1,\cM_1,c_1)\bt_\cE(\cC_2,\cM_2,c_2)=(\cC_1\bt_\cE\cC_2,\cM_1\bt_\cE\cM_2,c_1+c_2).
\end{align}

In particular, the trivial phase with symmetry $\cE$ is given by
$(\cE,Z(\cE),c=0)$, and invertible topological phases with symmetry $\cE$ are described
by $(\cE,\cM,c)$, where $\cM$ is a modular extension of $\cE$,
$\cM\in\mext(\cE)$. They indeed form a Abelian group under the stacking operation defined
above. For boson systems, $\cE=\Rep(G)$, $\mext(\Rep(G))\cong H^3(G,U(1))$,
and they all have central charge $c=0\mod 8$. The
group structure $H^3(G,U(1))\times
8\Z$ is recovered. For fermion systems, we expect that $\mext(\sRep(G^f))$ gives
a full classification of invertible phases. We can obtain both the
fermionic SPT, namely the $c=0$ part in $\mext(\sRep(G^f))$, and the smallest
positive central charge $c_\text{min}$ of the chiral invertible phases. Thus,
invertible topological phases with symmetry $\cE$ are classified by
\begin{align}
  \text{SPT}\xt c_\text{min}\Z,\quad \text{SPT}\xt
  c_\text{min}\Z/8\Z\cong\mext(\cE).
\end{align}
By now we do not have a general formula for $\mext(\sRep(G^f))$, so we do
not know $c_\text{min}$ for generic $G^f$. Also we have checked the form
$\text{SPT}\xt
  c_\text{min}\Z/8\Z\cong\mext(\sRep(G^f))$ only for $G^f=G_b\times \Z_2^f$,
  or small $G^f$ that is not of the form $G_b\times \Z_2^f$, but
  not for generic $G^f$; it
  remains a conjecture to be proven.

Also if we stack an invertible phase $(\cE,\cM_\cE,c_1)$ onto $(\cC,\cM,c_2)$, it
only changes the modular extension part,
\begin{align}
  (\cE,\cM_\cE,c_1)\bt_\cE(\cC,\cM,c_2)=(\cC,\cM_\cE\bt_\cE\cM,c_1+c_2).
\end{align}
By stacking all invertible phases (all modular extensions of $\cE$), all modular extensions of $\cC$ can be
generated. Moreover, the ``difference'' between two modular extensions is a
unique invertible phase (unique modular extension of $\cE$). In short, the
modular extensions of $\mce{\cE}$ $\cC$ form a
torsor over the Abelian group $\mext(\cE)$.

Therefore, a $\mce{\cE}$ $\cC$, if its modular extension exists, already fixed
the topological phase up to invertible ones. Appending the modular extension to
the label further fixes the invertible ones up to $E_8$
states\footnote{UMTC fixes central
charge $c$ modulo 8.}, and appending
the central charge $c$ totally fixes the topological phase.
On the other hand, if a $\mce{\cE}$ $\cC$ has no modular extension, namely the
symmetry can not be gauged, it is anomalous and can only be realized on
the boundary of (3+1)D topological phases~\cite{CBVF1403.6491}.

\chapter{Anyon Condensation}

In this chapter we discuss other constructions that relate topological phases.
The general pattern is starting from a phase $\cC$, condensing certain anyons
into certain states, and driving a transition into a new phase $\cD$. The other
anyons that are not condensed become anyons in the new phase $\cD$.

To do
this, let us first consider constructing an effective theory for anyons. For
simplicity, assume that we are going to condense only one type of anyon,
denoted by $A$. We then try to turn off the energy cost of $A$. In terms of the
Hamiltonian $H_\cC$ of the old phase $\cC$, we
assume that there is certain parameter $g$ controlling the interactions of the
underlying system, such that at $H_\cC(g=1)$ we have the original phase $\cC$
and at $H_\cC(g=0)$ the anyon $A$ becomes gapless.
In other words, when $g=0$ we arrive at a critical point, where any
many-$A$-anyon state becomes degenerate with the ground state. 
Next we further tune the Hamiltonian, by adding an effective Hamiltonian $H_{A}$
on many-$A$-anyon states. $H_A$ describes the state that we want the anyon $A$ to
condense into.
The new phase $\cD$ is then
\begin{align}
  H_\cD=H_\cC(g=0)+H_A.
\end{align}
We can also describe this state in terms of effective
many-$A$-anyon wavefunction $\langle \{z_a\}|\Psi\ket$, where $\{z_a\}$ denote
the positions of $A$ anyons, and $\Psi$ is the effective ground state of $H_A$.
Such effective wavefunction allows us to write down the ground state wavefunction of the new
phase $\cD$.
The ground
state of phase $\cD$ is given by
\begin{align}
  |0_\cD\>=\sum_{\{z_a\}}|\{z_a\}\>\langle \{z_a\}|\Psi\ket.
\end{align}
Let $\{z_i\}$ denote the degrees of freedom of the underlying
system (such as spins, electron positions), we have the following wavefunction:
\begin{align}
  \<\{z_i\}|0_\cD\>=\sum_{\{z_a\}}\<\{z_i\}|\{z_a\}\>\langle \{z_a\}|\Psi\ket,
\end{align}
where $\<\{z_i\}|\{z_a\}\>$ is the
many-$A$-anyon wavefunction in the old phase $\cC$.\footnote{For example in the $\nu=1/m$
Laughlin state:
\begin{align*}
  \<\{z_i\}|\{z_a\}\>=\prod (z_a-z_i)\prod_{i<j} (z_i-z_j)^m \times
   \ee^{-\frac14\sum |z_i|^2},
\end{align*}
where $\{z_a\}$ are the positions of quasi-hole excitations.}

In order to perform such condensation, it is obvious that
the properties of anyon $A$ and the target state $H_A,\Psi$ must satisfy
non-trivial consistent conditions.
We analyse two variants in the following.
In concrete physical systems, the above ideas may not be easy to be realized
precisely. But for the two variants to be discuss below, we do have precise
mathematical (categorical) constructions from phase $\cC$ to phase $\cD$.

\section{Type I: Bose Condensation}\label{bosecon}
The first variant is condensing a boson $A$ into a trivial state, or a
$A$-condensate. This means that we want the effective wavefunction to be
$\langle \{z_a\}|\Psi\ket=1$ for any configurations $\{z_a\}$. The underlying
mathematics is introduced in the last chapter. The condensable algebra
$(A,m,\eta)$ is a self boson to be condensed, and the morphisms $m,\eta$ exactly describes the
condensation process.
Roughly speaking,
recall that $m$ is a ``multiplication morphism'', an operator mapping
from two copies of $A$ to a single $A$. The isometric condition
$mm^\dag=\id_A$ means that $m^\dag m$ is a projector acting on two copies of
$A$. We can consider that $m^\dag m$ projects a pair of $A$ onto a
``singlet'' state. The unit, associative and commutative conditions ensures that
such projectors can be consistently applied to any numbers of $A$-anyons and
leads to a singlet state, which is exactly the $A$-condensate. So $m$ is related
to the effective theory $H_A\sim -\sum m^\dag m$ for the new phase $\cD$.

The special algebra $L_\cE$ discussed in the last chapter
corresponds to breaking the symmetry of the two-layer system from $\cE\bt\cE$ to $\cE$. It is global
symmetry for $\mce{\cE}$'s, but gauge symmetry for the gauged theory, or modular
extensions. Other algebras can be considered as inducing general ``topological symmetry breaking''. 

In general, given a topological phase $(\cC,\cM,c)$ with symmetry $\cE$,
condensing a condensable algebra $A\in\cC$, gives rise to a
new topological phase described by $(\cC_A^0,\cM_A^0,c)$. If $A\cap \cE$ (the
largest subalgebra of $A$ in $\cE$) is non-trivial, such condensation will
break the symmetry to a smaller one, otherwise the symmetry is preserved. At the same time, the
condensation creates a gapped domain wall between the old and the new phases.
Point-like excitations on the domain wall should be described by $\cC_A$ (and
$\cM_A$ for the gauged theories).

Recall that $\cC_A^0, \cC_A$ are just categories of (local) $A$-modules in
$\cC$, so the Bose condensation is essentially the representation theory of
algebras in unitary braided fusion categories, generalizing that in usual
vector spaces. It is possible to spell out all the data in terms of tensors and
write down a concrete representation theory~\cite{ERB1310.6001}. However, this
way is not efficient when $A$ is a large algebra.

Again we can study the Bose condensation at the level of physical
observables. This will lead to some necessary conditions.

First we restrict to bosonic topological orders with no symmetry. Assume that
by condensing $A$ in phase $\cC$, we obtain a new phase $\cD=\cC_A^0$, and a
gapped domain wall $\cW=\cC_A$. Let the topological $S,T$-matrices for $\cC,\cD$
be $(S^\cC,T^\cC),(S^\cD,T^\cD)$. We consider the following fusion space:
Put phase $\cC$ and phase $\cD$  on a sphere $S^2$, separated by the gapped
domain wall $\cW$, and an anyon $a^*$ in
phase $\cC$, an anyon $i$ in phase $\cD$. We denote the corresponding fusion
space by
$\cV(S^2,i,\cW,a^*)$. Its dimension
\begin{align}
  W_{ia} :={\dim}[\cV(S^2,i,W,a^*)],
\end{align}
is an important physical observable.
The matrix $W$ satisfies the following necessary conditions~\cite{LWW1408.6514,Kawahigashi1504.01088},
\begin{align}
  S^\cD W&=W S^\cC, T^\cD W= W T^\cC,\nonumber\\
  W_{ia}W_{jb}&\leq\sum_{kc} (N^\cD)_{ij}^k W_{kc} (N^\cC)_{ab}^c.\label{tunnel}
\end{align}

We can compute the dimension of the fusion space $\mathcal V(S^2,i,W,a^*)$ by first creating a
pair $aa^*$ in phase $\cC$, then tunneling $a$ through the domain wall.
In the channel where the tunneling does not leave any topological quasiparticle
on the domain wall, $a$ in phase $\cC$ will become
a composite anyon $q_{\cW,a}$ in phase $\cD$,
\begin{align}
  q_{\cW,a}=\oplus_i W_{ia} i.
\end{align}
Thus the fusion-space dimension $W_{ia}$ is also the \emph{number} of tunneling channels from,
$a$ of phase $\cC$, to, $i$ of phase $\cD$. So we also refer to $W$ as
the ``tunneling matrix''.

We may as well create a pair $ii^*$ in phase $\cD$ and tunnel $i^*$ to $a^*$.
$W^\dag$ describes such tunneling in the opposite direction (i.e.,
$W:A\to B,~ W^\dag: B\to A$). $W^\dag$ and $W$ contains the same physical
data. To be consistent, tunneling $i^*$ to $a^*$ should give rise to the same
fusion-space dimension, $(W^\dag)_{a^*i^*}=W_{i^*a^*}=W_{ia}$.
This is guaranteed by $W (S^\cC)^2=(S^\cD)^2W$.

In particular, since the algebra $A$ condenses, it becomes the vacuum in phase
$\cD$, thus tunnelling the trivial particle $\one$ in $\cD$ to $\cC$ should give
the algebra $A$,
\begin{align}
  A\cong \oplus_a W_{\one a} a.
\end{align}
And $W_{ia}$ itself gives a Lagrangian condensable algebra $L$ in the folded phase
$\cC\bt\ov\cD$,
\begin{align}
  L\cong \oplus_{a\in\cC,i\in\cD}\, W_{ia}\, a\bt i^*.
\end{align}

We conclude that the tunnelling matrix $W$ satisfying \eqref{tunnel} can fix the
object in the triple $(A,m,\eta)$ of an condensable algebra, but the data of
$m,\eta$ are missing. Indeed, there are known examples that $W$ does not
correspond to any valid algebra~\cite{Davydov1412.8505}. However, the conditions
\eqref{tunnel} are good enough to exclude impossible Bose condensation and pick
up a few candidates of condensable algebras.

For a topological phase $(\cC,\cM,c)$ with symmetry $\cE$, the above still works
for the modular extension $\cM$. We require that the condensed boson $A$ is
in $\cC$,
\begin{align}
  W_{\one a}=0, \quad\text{if } a\notin \cC.
\end{align}
And if $W_{\one a}=\delta_{\one a}$ for $a\in \cE$, equivalently $A\cap
\cE=\one$, the symmetry is preserved;
otherwise the symmetry is broken to a smaller one.

Note that in the presence of symmetry, a condensable algebra can be
\emph{anomalous}.
Let $(\cC,\cM,c)$ be a topological phase with symmetry $\cE$, and $A$ a
condensable algebra in $\cC$ such that $A \cap \cE=\one$ ($A$ does not break symmetry).
Let $(\cD,\cN,c)=(\cC_A^0,\cM_A^0,c)$ be the phase after Bose-condensing $A$.
The corresponding domain wall is $(\cC_A, \cM_A)$ (before gauging $\cC_A$ and
after gauging $\cM_A$).

The following is true, which can be thought as boundary-bulk duality (the
domain wall is the boundary of the folded two-layer phase),
\begin{enumerate}
  \item  $Z(\cM_A)=\cM \bt \ov\cN$.
   \item  Similarly, $Z(\cC_A)$ is a modular extension of $\cC \bt_\cE
     \ov\cD$. Here the embedding is determined as follows. Firstly, $\cE$ is a
     full subcategory of $\cD=\cC^0_A\subset\cC_A$ with the embedding
     $\cE\hookrightarrow \cC_A$ as first embedding $\cE$ into $\cC$ and then
     take the free modules. Thus $\cE\hookrightarrow \cC_A$ is a central
     functor and lifts to an embedding $\cE\hookrightarrow Z(\cC_A)$. Then
     $\cC\bt_\cE\ov\cD=\cen{\cE}{Z(\cC_A)}$.
\end{enumerate}

However, it may not be true that $Z(\cC_A)=\cM \bt_\cE \ov\cN$.  
The difference between $Z(\cC_A)$ and $\cM \bt_\cE \ov\cN$ is a modular
extension $\cK$ of $\cE$, namely a (2+1)D
$\cE$-SPT phase, $(\cE,\cK,c=0)$, $Z(\cC_A)=\cM \bt_\cE \ov\cN \bt_\cE \cK$.
Its physical meaning is that the domain wall $(\cC_A,\cM_A)$ must have a
"bulk": a (2+1)D domain wall in the (3+1)D bulk, and the (2+1)D domain wall
hosts the corresponding SPT phase.
When $\cK$ is non-trivial, $\cK \neq Z(\cE)$, we say that the algebra $A$ is
anomalous. We will discuss an example of this in Appendix~\ref{mirrorSET}.

If two topological phases with symmetry $\cE$ are related by anomaly-free Bose condensations
that preserve the symmetry, we say that they are \emph{Witt equivalent} over
$\cE$.
The equivalence classes are called Witt classes, denoted by
$\text{Witt}_\cE$. Taking Witt classes is compatible with the stacking
$\bt_\cE$, namely Witt classes still form a commutative monoid under the
stacking $\bt_\cE$. Moreover, due to Lemma~\ref{Lag} and Theorem~\ref{cze}, the
inverse Witt class always exists, given by the mirror conjugate. Thus
Witt classes $\text{Witt}_\cE$ actually form an Abelian group, called the Witt
group. Note that we take into account modular extensions and central charges in our definition
  of Witt classes. The Witt group defined in Ref.~\cite{DNO1109.5558}, the equivalence classes of
  $\mce{\cE}$'s alone under Bose condensations, which does not exclude
  anomalous $\mce{\cE}$'s that have no modular extension, or anomalous Bose
  condensations discussed above, can be different from our definition; when
  all these anomalies vanish, it is $(\text{Witt}_\cE/8\Z)/\mext(\cE) $.
Various constructions, such as symmetry breaking $\cE\to\cE'$ or stacking
$-\bt\cE'$, can induce group homomorphisms from $\text{Witt}_\cE$ to
$\text{Witt}_{\cE'}$ or $\text{Witt}_{\cE\bt\cE'}$.

Topological phases in the same Witt class have the same central charge,
similar topological spins and mutual statistics as their $S,T$-matrices are
related via \eqref{tunnel}. This is one way to ``group'' topological phases.

\section{Type II: Abelian Condensation}

The second variant is condensing Abelian anyons\footnote{Abelian anyons are
anyons with quantum dimension $1$. Here ``Abelian'' means that the braiding
processes
between Abelian anyons commute with each other, as they are just phases factors. On
the contrary, braiding processes between non-Abelian anyons in general do not
commute, and must be represented by matrices.} into a Laughlin-like state. This
idea dates back to Haldane and Halperin, known as ``hierarchy''
construction~\cite{Haldane83,Halperin84}.
But below we discuss it at a more general level.

We start with a topological phase $\cC$. The anyons in $\cC$ are labeled by
$i,j,k,\cdots$.  Let $a_c$ be an Abelian anyon in $\cC$ with spin $s_c$.  We
try to condense $a_c$ into the Laughlin state,
\begin{align}
  \langle\{z_a\}|\Psi\rangle=\prod_{a<b}(z_a-z_b)^{M_c}\times\ee^{-\frac14\sum|z_a|^2}.
\end{align}
The resulting topological
phase is described by $\cD$, determined by $\cC$, $a_c$ and $M_c$.  
    Here $z_a,z_b$ are the positions of $a_c$ anyons. $M_c$ must be consistent with
    anyon statistics.
    Consider
    exchanging two $a_c$ anyons, we obtain: a phase factor $\ee^{2\pi\ii
      \frac{M_c}{2}}$ from the wave function and a phase factor $\ee^{2\pi\ii s_{a_c}}$ from anyonic statistics.
     To be consistent, total phase factor must be 1:
     \begin{align}
     \frac{M_c}{2}+s_{a_c}\in \Z.
   \end{align}
	  So we need to take {$M_c=m_c-2s_{a_c}$}, where
	  $m_c$ is an even
	integer.

  Anyon $i$ in the phase $\cC$ may be dressed with a flux $M_i$ in the new phase
  $\cD$.
  \begin{align}
    \Psi(i,M_i)=\prod_b(\xi_i-z_b)^{M_i}\prod_{a<b}(z_a-z_b)^{M_c}\times\ee^{-\frac14\sum|z_a|^2}.
  \end{align}
$\xi_i$ is the position of anyon $i$.
Thus an anyon in the new phase is represented by a
pair {$(i,M_i)$.
Again, $M_i$ can not be arbitrary. If $a_c$ has trivial mutual statistics with
$i$, $M_i$ can be any integer. Otherwise, consider moving $a_c$ around
$(i,M_i)$ and we obtain: a phase factor $\ee^{2\pi\ii M_i}$ from the flux $M_i$
and a phase factor $\ee^{2\pi\ii {t_i}}$ from the mutual statistics between
    $a_c$ and $i$. The mutual statistics can be extracted from the $S$ matrix,
    $\ee^{2\pi\ii {t_i}}=DS_{i a_c^*}/d_i$, $t_{a_c}=2s_{a_c}$.
To be consistent, total phase factor must be 1:
\begin{align}
  M_i+t_i\in\Z.
\end{align}

  Next we compute the fusion rules and spins in the new phase $\cD$. The spin of $(i,M_i)$ is given by the spin of $i$ plus the ``spin'' of  the flux $M_i$:
  \begin{align}
    s_{(i,M_i)}=s_i+\frac{M_i^2}{2M_c}.\label{spinim}
  \end{align}
  To fuse anyons $(i,M_i),(j,M_j)$ in the new phase, just fuse $i,j$ as in
  the old phase, and add up the flux:
  \begin{align}
    (i,M_i)\otimes (j,M_j)=\bigoplus_k N^{ij}_k (k,M_i+M_j).
  \end{align}
  But note that this is not the final fusion rules, because anyons $(i,M_i)$ in
  the new phase are subject to the equivalence
  relation
  \begin{align}
    (i,M_i)\sim (i\otimes a_c,M_i+M_c). \label{equivim}
    \end{align}
  This is because the anyon $a_c$ dressed with a flux $M_c$ is a ``trivial excitation'' in the
  new phase:
  \begin{gather}
    \Psi(a_c,M_c)\sim\prod_b^n(\xi_{a_c}-z_b)^{M_c}\prod_{a<b}^n(z_a-z_b)^{M_c}=\prod_{a<b}^{n+1}(z_a-z_b)^{M_c},\nonumber\\
    (a_c,M_c)\sim (\one,0).
  \end{gather}
  The anyon types in $\cD$ actually correspond to the equivalence classes.
  After imposing the equivalence relation one obtains the final fusion rules in
  the new phase.

Applying the equivalence relation \eqref{equivim} $q$ times, we obtain
\begin{align}\label{equivq}
  (i,M_i)\sim (i\otimes a_c^{\otimes q},M_i+qM_c).
\end{align}
Let $q_c$ be the ``period'' of $a_c$, i.e., the smallest positive integer such
that $a_c^{\otimes q_c}=\one$. We see that
\begin{align}
  (i,M_i)\sim (i,M_i+q_cM_c).
\end{align}
Thus, we can focus on the reduced range of $M_i+t_i\in\{0,1,2,\cdots,q_c|M_c|-1\}$.
Let $N^\cC, N^\cD$ denote the rank of $\cC, \cD$ respectively.
Within the
reduced range, we have $q_c|M_c|N^\cC$ different labels, and we want to
show that the orbits generated by the equivalence relation \eqref{equivq} all
have the same length, which is $q_c$. To see this, just note that for
$0<q<q_c$, either $i\neq i\otimes a_c^{\otimes q}$, or if $i=i\otimes
a_c^{\otimes q}$, $M_i\neq M_i+qM_c$; in
other words, the labels $(i,M_i)$ are all different within $q_c$ steps. It
follows that the rank of $\cD$ is $N^\cD=|M_c|N^\cC$.  

The above enables us to extend the construction to categorical level, which goes down to the level of $F,R$ matrices.

The first step is to construct a unitary braided fusion category $\tilde\cD$, based on the
observation that the range of the second flux label can be reduced to
$q_c|M_c|$. Such $\tilde\cD$ can be viewed as an 
``extension'' of $\cC$ by $\Z_{q_c|M_c|}$. The anyons are labeled by the pair
$(i,M_i)$ where $i\in\cC$
and $M_i+t_i \in \Z_{q_c|M_c|}$. Fusion is then given by addition
\begin{align}
  (i,M_i)\otimes (j,M_j)=\oplus_k N^{ij}_k (k,[M_i+M_j]_{q_c|M_c|}),
\end{align}
where $[\cdots]_{q_c|M_c|}$ denotes the residue modulo $q_c|M_c|$. The $F,R$-matrices in
$\tilde{\cD}$ are given by those in $\cC$ modified by appropriate phase
factors. More precisely, let $F^{i_1i_2i_3}_{i_4}$ 
and $R^{i_1i_2}_{i_3}$ be the $F,R$-matrices in $\cC$; then in $\tilde\cD$ we take
\begin{align}
  \tilde F^{(i_1,M_1)(i_2,M_2)(i_3,M_3)}_{(i_4,M_4)}
  &=F^{i_1i_2i_3}_{i_4}\ee^{
    \frac{\pi\ii}{M_c}M_1(M_2+M_3-[M_2+M_3]_{q_c|M_c|})},\nonumber\\
    \tilde
    R^{(i_1,M_1)(i_2,M_2)}_{(i_3,M_3)}&=R^{i_1i_2}_{i_3}\ee^{\frac{\pi\ii}{M_c}M_1M_2}.
\end{align}
It is straightforward to check that they satisfy the pentagon and hexagon
equations, and $\tilde \cD$ is a valid unitary braided fusion category. Moreover, the
modified $R$ matrices do give us the desired modified spin.  
The $S$ matrix is
\begin{align}
  S^{\tilde \cD}_{(i,M_i),(j,M_j)}&=\sum_k \frac{N^{ij}_k}{D_{\tilde \cD}} d_k \ee^{2\pi\ii [s_{(i,M_i)}+s_{(j,M_j)}-s_{(k,M_i+M_j)}]}\nonumber\\
    &=\sqrt{\frac{q_c}{|M_c|}}S^{\cC}_{ij}\ee^{-2\pi\ii\frac{M_iM_j}{M_c}}.
\label{StildeD}
\end{align}

The second step is to reduce $\tilde\cD$ to $\cD$.  Categorically, just note
that $\{(a_c,M_c)^{\otimes q},q=0,\dots,q_c-1\}$ forms a symmetric fusion
subcategory of $\cen{\tilde\cD}{\tilde\cD}$, which can be identified with $\Rep(\Z_{q_c})$; by condensing this
$\Rep(\Z_{q_c})$, \ie condensing the regular algebra $\Fun(\Z_{q_c})$ in
$\Rep(\Z_{q_c})$, we obtain the desired $\cD$.  Put it simply,
we just further impose the equivalence relation \eqref{equivq} in $\tilde\cD$,
such that one orbit of length $q_c$ is viewed as one type of anyon instead of
$q_c$ different types.  This way we complete the construction of Abelian anyon
condensation at full categorical level.

Below we will study the properties of $\cD$ in detail.
It would be more convenient to use $(i,M_i)$ directly, which is the same as
working in a $\tilde\cD$. Then we can further impose the equivalence relation. For example, when we need to
sum over anyons in $\cD$, we can instead do
\begin{align}
  \sum_{I\in\cD}\to\frac{1}{q_c}\sum_{i\in\cC}\sum_{m=0}^{q_c|M_c|-1}.
\end{align}
Now we are ready to calculate other quantities of the new phase
$\cD$. First, it is easy to see that the quantum dimensions remain the same
$  d_{(i,m)}=d_i$.
The total quantum dimension is then
\begin{align}
  D_\cD^2=\frac{1}{q_c}\sum_{i\in\cC}\sum_{m=0}^{q_c|M_c|-1}
  d_{(i,m)}^2=|M_c|D_\cC^2.
\end{align}
The $S$ matrix is
\begin{align}
\label{Sd}
S^{\cD}_{(i,M_i),(j,M_j)}&=\frac{1}{\sqrt{q_c}}S^{\tilde\cD}_{(i,M_i),(j,M_j)}=\frac{1}{\sqrt{|M_c|}}S^{\cC}_{ij}\ee^{-2\pi\ii\frac{M_iM_j}{M_c}}.
\end{align}
From the above it is easy to check that $\cen{\cC}{\cC}=\cen{\cD}{\cD}$. This means that the symmetry $\cE=\cen{\cC}{\cC}=\cen{\cD}{\cD}$ is preserved.

If both $\cC,\cD$ are UMTCs, the new $S^\cD,T^\cD$ matrices,
as well as $S^\cC,T^\cC$, should
both obey the modular relations $STS=\ee^{2\pi\ii \frac{c}{8}}T^\dag S T^\dag$,
from which we can extract the central charge of $\cD$.
Firstly, using the
modular relation for both $\cC$ and $\cD$, we find that 
\begin{align}
  \frac{1}{q_c\sqrt{|M_c|}}&\sum_{i,j,k\in\cC}\sum_{p=0}^{q_c|M_c|-1}\left\{\overline{S^\cC_{xi}}S^\cC_{ik}T^\cC_{kk}S^\cC_{kj}\overline{S^\cC_{jy}}\exp\left[\frac{2\pi\ii}{2M_c}(t_i+t_j-t_k+p)^2\right]\right\}
  \nonumber\\
  &=\exp\left( 2\pi\ii\frac{c^\cD-c^\cC}{8} \right)T^\cC_{xx}\delta_{xy}.
  \label{difc}
\end{align}
We can show that $  c^\cD-c^\cC=\sgn(M_c) \mod 8$. Using the reciprocity theorem for generalized
Gauss sums~\cite{BEW98}:
\begin{align}
  \label{reci}
  \sum_{n=0}^{|c|-1}\ee^{\pi\ii\frac{an^2+bn}{c}}=\sqrt{|c/a|}\ee^{\frac{\pi\ii}{4}[\sgn(ac)-\frac{b^2}{ac}]}\sum_{n=0}^{|a|-1}\ee^{-\pi\ii\frac{cn^2+bn}{a}},
\end{align}
where $a,b,c$ are integers, $ac\neq 0$ and $ac+b$ even. Thus,
\begin{align}
  &\ \ \ \
\sum_{p=0}^{q_c|M_c|-1}\exp\left[\frac{2\pi\ii}{2M_c}(t_i+t_j-t_k+p)^2\right]
  \\
  &=\frac{1}{q_c}\ee^{\frac{\pi\ii(t_i+t_j-t_k)^2}{M_c}}\sum_{p=0}^{q_c^2|M_c|-1}\ee^{\frac{\pi\ii}{M_cq_c^2}\left[q_c^2p^2+2q_c^2(t_i+t_j-t_k)p\right]}
  =\frac{\sqrt{|M_c|}}{q_c}\ee^{\frac{\pi\ii}{4}\sgn(M_c)}\sum_{p=0}^{q_c^2-1}\ee^{-\pi\ii[M_cp^2+2(t_i+t_j-t_k)p]}
  \nonumber\\
  &=\frac{\sqrt{|M_c|}}{q_c}\ee^{\frac{\pi\ii}{4}\sgn(M_c)}\sum_{p=0}^{q_c^2-1}\ee^{-\pi\ii
  (m_c-2s_c)p^2}\frac{S^\cC_{ia_c^{\otimes p}}}{S^\cC_{i\one}}
\frac{S^\cC_{ja_c^{\otimes
p}}}{S^\cC_{j\one}}\frac{\overline{S^\cC_{ka_c^{\otimes p}}}}{S^\cC_{k\one}}
\nonumber\\
  &=\frac{\sqrt{|M_c|}}{q_c}\ee^{\frac{\pi\ii}{4}\sgn(M_c)}\sum_{p=0}^{q_c^2-1}T^\cC_{a_c^{\otimes
  p},a_c^{\otimes p}}\frac{S^\cC_{ia_c^{\otimes p}}}{S^\cC_{i\one}}
\frac{S^\cC_{ja_c^{\otimes
p}}}{S^\cC_{j\one}}\frac{\overline{S^\cC_{ka_c^{\otimes p}}}}{S^\cC_{k\one}}.
\nonumber 
\end{align}
Substituting the above result into \eqref{difc}, we have
\begin{align}
  &\frac{1}{q_c\sqrt{|M_c|}}\sum_{i,j,k\in\cC}\sum_{p=0}^{q_c|M_c|-1}\left\{\overline{S^\cC_{xi}}S^\cC_{ik}T^\cC_{kk}S^\cC_{kj}\overline{S^\cC_{jy}}\exp\left[\frac{2\pi\ii}{2M_c}(t_i+t_j-t_k+p)^2\right]\right\}
  \nonumber\\
  &=\frac{1}{q_c^2}\ee^{\frac{\pi\ii}{4}\sgn(M_c)}\sum_{p=0}^{q_c^2-1}
  \sum_{k}T_{kk}^\cC T^\cC_{a_c^{\otimes p},a_c^{\otimes p}}
  \frac{\overline{S^\cC_{ka_c^{\otimes p}}}}{S^\cC_{k\one}}
  \sum_i \frac{\overline{S^\cC_{xi}}S^\cC_{ik}S^\cC_{ia_c^{\otimes
  p}}}{S^\cC_{i\one}}\sum_j
  \frac{S^\cC_{kj}\overline{S^\cC_{jy}}S^\cC_{ja_c^{\otimes p}}}{S^\cC_{j\one}}
  \nonumber\\
  &=\frac{1}{q_c^2}\ee^{\frac{\pi\ii}{4}\sgn(M_c)}\sum_{p=0}^{q_c^2-1}
  \sum_{k}T_{k\otimes a_c^{\otimes p},k\otimes a_c^{\otimes p}}^\cC 
  N^{k,a_c^{\otimes p}}_x N^{k,a_c^{\otimes p}}_y
  \nonumber\\
  &=\frac{1}{q_c^2}\ee^{\frac{\pi\ii}{4}\sgn(M_c)}\sum_{p=0}^{q_c^2-1}
  \sum_{k}T_{k\otimes a_c^{\otimes p},k\otimes a_c^{\otimes p}}^\cC 
  \delta_{k\otimes a_c^{\otimes p},x}\delta_{xy}
  \nonumber\\
  &=\ee^{\frac{\pi\ii}{4}\sgn(M_c)}T^\cC_{xx}\delta_{xy},
\end{align}
as desired.
In fact, based on the physical picture, we have a stronger result
\begin{align}
  c^\cD=c^\cC+\sgn(M_c) .\label{diffc}
\end{align}

In the following we refer to the above construction from $\cC$ to $\cD$ as the
one-step condensation. It is always reversible. In $\cD$,
choosing $a_c'=(\one,1),\ s_c'=\frac{1}{2M_c},\ m_c'=0,\ M_c'=-1/M_c$, and repeating
the construction, we will go back to $\cC$.  
To see this we perform the construction for a $\tilde\cD$. Taking
$(j,M_j)=(a_c')^*=(\one,-1)$ in \eqref{Sd} we find that the mutual
statistics between $(i,M_i)$ and $a_c'=(\one,1)$ is
$t'_{(i,M_i)}=\frac{M_i}{M_c}$.
Let $((i,M_i),P_i)$ label the anyons after the above one-step
condensation. We have two equivalence relations
\begin{align}
((i,M_i),P_i)\sim ((i\ot a_c,M_i+M_c),P_i),
\end{align}
which reduces $\tilde\cD$ to $\cD$ and 
\begin{align}
  ((i,M_i),P_i)\sim ((i,M_i)\ot(\one,1),P_i+M_c')=((i,M_i+1),P_i-1/M_c),
\end{align}
which arises from the second one-step condensation. Combining them we can
eliminate the flux labels such that every label is equivalent to a
representative of the following canonical form
\begin{align}
  ((i,-t_i),t_i/M_c),
\end{align}
which can then be identified with the anyon $i$ in $\cC$. It is easy to check
that the $F,R$-matrices for the representatives are the same as the original ones in
$\cC$. We also need to show that it is true for the whole equivalence class.
Note that imposing the equivalence relations is nothing but condensing
$\Rep(\Z_{q_c})$ and $\Rep(\Z_{q_c|M_c|})$, and the equivalence classes
correspond to the free modules over the regular algebras. Since taking free
modules is a braided tensor functor, we know that the resulting $F,R$-matrices
are equivalent to those of the representatives. Thus, we indeed come back to
the original phase $\cC$.

Therefore, Abelian anyon condensations are reversible, which defines an
equivalence relation between topological phases. We call the corresponding
equivalence classes the ``non-Abelian families''.

Now we examine the important quantity $M_c=m_c-2s_c$ which relates the ranks
before and after the one-step condensation, $N^\cD=|M_c|N^\cC$. Since $m_c$ is
a freely chosen even integer, when $a_c$ is not a boson or fermion ($s_c\neq 0$
or $1/2 \mod 1$), we can always make $0<|M_c|<1$, which means that the rank is
reduced after one-step condensation. We then have the important
conclusion:
  Each non-Abelian family have ``root'' phases with
  the smallest rank.
  Abelian anyons in the root phases must be bosons or
  fermions.

We can further show that the Abelian bosons or
fermions in the root phases have trivial mutual statistics among them.
To see this, assuming that $a,b$ are Abelian anyons in a root phase. Since
the mutual statistics is given by $DS_{ab}=\exp[2\pi\ii(s_a+s_b-s_{a\otimes
b})]$, and $a,b,a\otimes b$ are all bosons or fermions, non-trivial mutual
statistics can only be $DS_{ab}=-1$. Now consider two cases: (1) one of $a,b$,
say $a$, is a fermion, then by condensing $a$ (choosing $a_c=a$,
$m_c=2$, $s_c=1/2$, $t_b=1/2$), in the new phase, the rank remains the same but
$s_{(b,0)}=s_b+\frac{t_b^2}{2M_c}=s_b+1/8$, which means $(b,0)$ is an Abelian
anyon but neither a boson nor a fermion. By condensing $(b,0)$ again we can
reduce the rank, which conflicts with the root phase assumption. (2) $a,b$
are all bosons. Still we condense $a$ with $m_c=2,s_c=0,t_b=1/2$. In the new
phase the rank is doubled but
$s_{(b,0)}=s_b+\frac{t_b^2}{2M_c}=1/16$, which means further condensing $(b,0)$
with $m_c'=0$ the rank is reduced to $1/8$, which is again, smaller than the
rank of the beginning root phase, thus contradictory.

Therefore, in the root phases, Abelian anyons are bosons or fermions with
  trivial mutual statistics among them.
If we denote the full subcategory of Abelian anyons in a UBFC $\cC$ by $\Ab{\cC}$,
the above means that in a root phase $\cC$, $\Ab{\cC}$ is a symmetric fusion
category.
We also have a straightforward corollary:
  All Abelian topological orders
  have the same unique root, which is the trivial topological order.
In
other words, all Abelian topological orders are in the same trivial non-Abelian
family.

To easily determine if two phases belong to the same non-Abelian family, 
it is very helpful to introduce some \emph{non-Abelian invariants}:
\begin{enumerate}
  \item The fractional part of
    the central charge, $c$ mod 1.  Since the one-step condensation changes the central charge
    by $\sgn(M_c)$ (see \eqref{diffc}), we know that central charges in the same
    non-Abelian family can only differ by integers.
  \item It is not hard to check that, in
    the one-step condensation, the number of simple anyon types with the same quantum
    dimension $d$, denoted by $N(d_i=d)$,
    is also multiplied by $|M_c|$. Thus the ratio $N(d_i=d)/N$ is a constant
    within one non-Abelian family.
  \item  The third invariant is a bit involved. Note that
    in the one-step condensation, if $i$ has trivial mutual statistics with $a_c$,
    $t_i=0$, then $(i,0)$ in $\cD$ have the same spin as $i$ in $\cC$ and the same
    mutual statistics with $(j,M_j),\forall M_j$ as that between $i$ and $j$ in $\cC$.
    Therefore, the centralizer of Abelian anyons, $\cen{(\Ab{\cC})}{\cC}$, namely, the subset of anyons that
    have trivial mutual statistics with all Abelian anyons, is the same within one
    non-Abelian family.
\end{enumerate}

With these we can show the condition that $\Ab{\cC}$ is
symmetric fusion category (Abelian anyons are bosons or fermions with
  trivial mutual statistics among them) is also 
sufficient for a topological phase $\cC$ to be a root state with the
smallest rank among a non-Abelian family. First note that the rank of
$\Ab{\cC}$, $N^{\Ab{\cC}}$ is just the number of simple anyon types with quantum dimension
1, thus $N^{\Ab{\cC}}/N^\cC=N(d_i=1)/N$ is a constant. $\cC$ has the smallest rank if and
only if
$\Ab{\cC}$ has the smallest rank. On the other hand,
$\cN:=\cen{(\Ab{\cC})}{\cC}$
is also an invariant. We then have $\cen{(\Ab{\cC})}{\Ab{\cC}}
=\cen{(\Ab{\cC})}{\cC}\cap \Ab{\cC}=\Ab{\cN}\subset \Ab{\cC}$. As
$\Ab{\cN}$ is invariant, when $\Ab{\cC}$ is symmetric,
$\cen{(\Ab{\cC})}{\Ab{\cC}}=\Ab{\cC}$, it has the smallest rank which is the
same as $\Ab{\cN}$.

The non-Abelian family is yet another way to ``group'' topological phases. We
see that its invariants are quite different from those of Witt classes. The Bose
condensation preserves central charges and spins, but changes quantum
dimensions, while Abelian condensation
changes central charges and spins, but preserves quantum dimensions. So far we know that some Abelian condensation can be mimicked by
stacking with an auxiliary state and then perform Bose condensation. For
example, Abelian-condensing a $\Z_2$ fermion with $M_c=1$, is the same as
stacking with a $\Z_4$-fusion-rule state whose $s_i=0,1/8,1/2,1/8$, $c=1$, and
then Bose-condensing the fermion pair. However, it is not clear if every $a_c,M_c$
has such an auxiliary state. By now we consider the two types of anyon
condensations to be independent.

Similar to the roots in a non-Abelian family, in a Witt class there are
topological phases that no longer admit Bose condensations; they are good
representatives of the family/class. However, unlike the Abelian condensation,
Bose condensation are \emph{not} reversible. So a huge advantage of non-Abelian
families over Witt classes is that from a root one can reconstruct the whole
family. Classifying the root phases is the same as classifying all topological
phases. We have listed the low rank roots and non-Abelian families of
topological phases with no symmetry in Ref.~\cite{LW1701.07820}.

\chapter{Examples}
In this chapter we introduce several simple examples with the toric code UMTC
and Ising UMTCs. Besides directly describing bosonic topological orders, they can also be
viewed as the gauged theories and describe topological phases with $\Z_2^f$ or
$\Z_2$ symmetries. There are also non-trivial anyon condensations between these
phases. This way with two simple UMTCs we can illustrate the general structures
discussed above.

We want to mention that there are far more examples studied in the literature
than mentioned in this thesis. There are several systematic approaches to
realize intrinsic topological orders, for example, the $K$-matrix formulation
for all Abelian topological orders~\cite{WZ92} (see Appendix~\ref{Hab} for a
brief introduction), the
Levin-Wen string-net model~\cite{LW0404617} for non-chiral topological orders with gapped
boundaries. Besides, conformal field theory and Kac-Moody algebras are also
very powerful in constructing chiral topological phases, but less
systematic than the previous two approaches.
They all give rise to concrete wavefunctions or lattice models for topological
orders.
However, most of them are limited to the
realization of intrinsic topological orders, not considering the symmetries. So
far only the string-net models are systematically extended to include bosonic
symmetries~\cite{HBFL1606.07816,CGJQ1606.08482}.

\section{Toric Code UMTC}
As the first example we describe the toric code~\cite{Kitaev9707021} UMTC. There
are 4 types of anyons, labeled by $\one,e,m,f.$ Their fusion rules and spins are given in
Table \ref{tcumtc}.
\begin{table}[h]
  \caption{Fusion rules and topological spins of toric code UMTC}
  \label{tcumtc}
  \medskip

  \centering
  \begin{tabular}{c|cccc}
   $i\ot j$ & $\one$ & $e$ & $m$ & $f$\\
   \hline
   $\one$&$\one$&$ e$&$m$&$f$\\
   $e$&$e$&$\one$&$f$&$m$\\
   $m$&$m$&$f$&$\one$&$e$\\
   $f$&$f$&$m$&$e$&$\one$\\
   \hline\hline
   $s_i$& 0&0&0&1/2
  \end{tabular}
\end{table}

For convenience, we also list its $S,T$-matrices
\begin{align}
  T=
  \begin{pmatrix}
    1&&&\\&1&&\\&&1&\\&&&-1
  \end{pmatrix},\quad
  S=\frac12
  \begin{pmatrix}
    1&1&1&1\\1&1&-1&-1\\1&-1&1&-1\\1&-1&-1&1
  \end{pmatrix}.
\end{align}

It can be realized by the toric code model, $\Z_2$ quantum double model,
$\Z_2$ gauge theory, or Levin-Wen string-net
model~\cite{LW0404617} with $\Rep(\Z_2)$ as the input fusion category. On a lattice of spin 1/2 (on
links), the fixed-point Hamiltonian reads
\begin{align}
  H=-\sum_\text{vertices} A_v -\sum_\text{plaquettes} B_p,
\end{align}
where $A_v$ is the product of $\sigma_z$ on the links around the vertex, and
$B_p$ is the product of $\sigma_x$ on the links around the plaquette. In the
string-net picture, we interpret $\sigma_z=-1$ as the presence of
(non-trivial) string, and
$\sigma_z=1$ as the absence of string (or presence of the trivial string). Thus
$A_v$ enforces that $\Z_2$ fusion rules of string (non-trivial strings
fuse to the trivial one; in other words, string cannot break at the vertex), and
$B_p$ creates a closed string loop in the plaquette and fuse it to the edges
of the plaquette.

The ground state is the equal weight superposition of all
closed loop configurations.
The $e$ anyons are created/annihilated/hopped by the string operators $W_e=\prod \sigma_x$,
flipping spins along the links. The $m$ anyons are created/annihilated/hopped
by the string operators $W_m=\prod \sigma_z$ acting on the dual lattice, along
paths that cross links. The $f$ anyons are created/annihilated/hopped by the
product of $W_e,W_m$.

\section{Ising UMTC}
The Ising fusion rules is given in Table \ref{Isingfus}, with three types of
anyons $\one,\sigma,\psi$.
\begin{table}[h]
  \caption{The Ising fusion rules}
  \label{Isingfus}
  \medskip

  \centering
  \begin{tabular}{c|ccc}
    $i\ot j$&$\one$&$\sigma$&$\psi$\\
    \hline
    $\one$&$ \one $&$\sigma$&$\psi$\\
    $\sigma$&$ \sigma $&$\one\oplus\psi$&$ \sigma$\\
    $\psi$&$\psi$&$\sigma$&$\one$
  \end{tabular}
\end{table}

Such fusion rules allow 8 different sets of solutions for pentagon and hexagon
equations, corresponding to 8 Ising-type UMTCs. They have spins
$s_\one=0,s_\psi=1/2,s_\sigma=1/16+n/8$, and central charge $c=1/2+n$, where
$n=0,1,\dots,7$. Usually by the Ising UMTC we mean the one with $n=0$,
$s_\sigma=1/16$, $c=1/2$, and its mirror conjugate $\Isb$ the one with
$n=7$, $s_\sigma=-1/16$, $c=-1/2$. The one with $n=1$, $c=3/2$ can be realized by the
$SU(2)_2$ Chern-Simons theory, the Moore-Read (or Pfaffian) state~\cite{MR91}
\begin{align}
  \Psi_\text{MR}=\mathrm{Pf}(\frac1{z_a-z_b})\prod_{a<b} (z_a-z_b)\times\ee^{-\frac14\sum|z_a|^2}.
\end{align}

All the 8 Ising-type UMTCs have the same $S$-matrix
\begin{align}
  S=\frac12
  \begin{pmatrix}
    1&\sqrt{2}&1\\
    \sqrt{2}&0&-\sqrt{2}\\
    1& -\sqrt{2}&1
  \end{pmatrix}.
\end{align}

Since the Ising UMTC is chiral, it cannot be realized by a commuting projector
lattice model. But $\Is\bt\Isb$ can be realized by the Levin-Wen model with
$\Is$ itself as the input fusion category. Now there are three types of strings,
$\one,\sigma,\psi$ (One may imagine a lattice with spin 1 on the links). Again
the Hamiltonian has the following form,
\begin{align}
  H=-\sum_\text{vertices} A_v -\sum_\text{plaquettes} B_p.
\end{align}
$A_v$ enforces the Ising fusion rule, such that the preferred string
configuration is: $\sigma$ strings form closed loops; $\psi$ strings either form
closed loops, or end on $\sigma$ strings. $B_p$ still creates $\sigma,\psi$
loops in the plaquette and fuse them to the edges. The detailed fusion process
is described by the data of Ising fusion category (mainly the $F$-matrix). The
quasiparticle excitations are described by the $\Is\bt\Isb$ UMTC.

\section{Bose Condensation}
In the above examples, Ising-type UMTCs do not allow any Bose condensation, but
there are several possible Bose condensations in the toric code UMTC or
$\Is\bt\Isb$ UMTC.

In the toric code UMTC, one can take the condensable algebra to be either
$\one\oplus e$ or $\one \oplus m$. After Bose condensation, the trivial
topological phase is obtained, also there is a gapped boundary whose excitations
are described by the $\Rep(\Z_2)$ fusion category. Namely there is only one non-trivial type of
particle-like excitation on the gapped boundary that has a $\Z_2$ fusion rule.

In the $\Is\bt\Isb$ UMTC, one can Bose-condenses the algebra $\one\oplus
\sigma\ov\sigma\oplus \psi\ov\psi$ and obtain the trivial phase. The
corresponding gapped boundary is described by the Ising fusion category.

The above Bose condensation is a general feature of topological phase $\cC$ that can
be realized by Levin-Wen models. There must a Lagrangian algebra $A$ in $\cC$
such that $\cC_A^0$ is the trivial phase, and $\cC_A$ is the fusion category
describing the corresponding gapped boundary. Also $\cC_A$ is a input fusion
category of the Levin-Wen model and $\cC=Z(\cC_A)$.

The other Bose condensation in the $\Is\bt\Isb$ UMTC is more interesting.
Condense the algebra $\one\oplus \psi\ov\psi$\footnote{In the literature this is
usually called condensing the fermion pair $\psi\ov\psi$.} and we will obtain exactly the
toric code UMTC.
If we order the anyons in the $\Is\bt\Isb$ UMTC as
$\one\ov{\one},\one\ov{\sigma},\one\ov{\psi},\sigma\ov{\one},\sigma\ov{\sigma},\sigma\ov{\psi},\psi\ov{\one},\psi\ov{\sigma},\psi\ov{\psi}$,
such Bose condensation corresponds to the following tunneling matrix
\begin{align}
W=
\left(
\begin{array}{ccccccccc}
 1 & 0 & 0 & 0 & 0 & 0 & 0 & 0 & 1 \\
 0 & 0 & 0 & 0 & 1 & 0 & 0 & 0 & 0 \\
 0 & 0 & 0 & 0 & 1 & 0 & 0 & 0 & 0 \\
 0 & 0 & 1 & 0 & 0 & 0 & 1 & 0 & 0 \\
\end{array}
\right),\quad
\begin{array}{c}
  \one\ov{\one}\to \one,\quad \psi\ov{\psi} \to \one,\\
  \one\ov{\psi}\to f,\quad \psi\ov \one \to f,\\
  \sigma\ov\sigma\to e\oplus m.
\end{array}
\label{dItoTc}
\end{align}

\section{As Invertible Fermionic Phases}
Let's consider the invertible fermionic phases with no other
symmetry. We have $\cE=\cC=\sRep(\Z_2^f)=\{\one,f\}$. It is easy to see that the
above toric code UMTC and 8 Ising-type UMTCs are all modular extensions of
$\sRep(\Z_2^f)$. In fact in $\mext(\sRep(\Z_2^f))$ there are also 7 other
Abelian rank 4 UMTCs with central charge $c=1,2,\dots,7$, constituting the
16-fold way~\cite{Kitaev0506438}.

Viewed as fermionic topological orders, $(\sRep(\Z_2^f),\text{toric code
UMTC}, c=0)$ is the trivial fermion product state.
$(\sRep(\Z_2^f),\Is,c=1/2)$, 
$(\sRep(\Z_2^f),\Isb,c=-1/2)$ correspond to $p\pm\ii p$ superconductors, where
the Ising anyon $\sigma$ corresponds to the vortex in the $p\pm \ii p$
superconductors.

The Bose condensation \eqref{dItoTc} introduced in the last section is also the
stacking $\bt_{\sRep(\Z_2^f)}$ for modular extensions. Physically it means that
stacking $p+\ii p$ with $p-\ii p$ produces the trivial fermion product state.

\section{As Topological Phases with $\Z_2$ symmetry}
First consider the invertible phases with $\Z_2$ symmetry, $\cE=\cC=\Rep(\Z_2)$.
It is easy to see the toric code UMTC is a modular extension of $\Rep(\Z_2)$.
The other modular extension of $\Rep(\Z_2)$ is the double-semion UMTC. This is
consistent with the fact that (2+1)D SPT phases with on-site unitary $\Z_2$
symmetry is classifies by $H^3(\Z_2,U(1))\cong \Z_2$.

Here we consider a non-trivial example, the toric code model with $\Z_2$
symmetry that exchanges $e,m$ anyons. In this case the original description of
toric code UMTC is no longer symmetric. The correct $\mce{\Rep(\Z_2)}$ $\cC$
turns out to have 5 types of anyons $\one_+,\one_-,f_+,f_-,\tau$. The first four
are the original anyons $\one,f$ carrying even/odd $\Z_2$ charge. The last one
$\tau$ is the composite of $e$ and $m$, $\tau\sim e\oplus m$. As the $\Z_2$
symmetry exchanges $e,m$, they together form a new anyon type $\tau$ with
quantum dimension $d_\tau=2$. This degeneracy cannot be lifted by symmetric
local perturbations.

One of its modular extension is $\Is\bt\Isb$, with the embedding
\begin{align}
  \one_+\mapsto \one\ov\one,\quad \one_-\mapsto \psi\ov\psi,
  \quad f_+\mapsto \psi\ov\one,\quad 
  f_-\mapsto \one\ov\psi,\quad \tau\mapsto\sigma\ov\sigma.
\end{align}
The other modular extension is then the stacking $\bt_{\Rep(\Z_2)}$ of
$\Is\bt\Isb$ with the double-semion UMTC, which turns out to be $SU(2)_2\bt
\ov{SU(2)_2}$.

As $\Is\bt\Isb$ is the gauged theory, the toric code model with on-site $e,m$ exchange symmetry can be realized by
``ungauging'' the Ising string-net model. Roughly speaking, this is done by
making the $\sigma$ strings in the string-net model into $\Z_2$ symmetry defects rather than fluctuating strings~\cite{HBFL1606.07816,CGJQ1606.08482}.

\chapter*{Conclusion and Outlook}
\addcontentsline{toc}{chapter}{Conclusion and Outlook}

In this thesis, we gave a classification of (2+1)D bosonic or fermionic topological phases with
finite on-site symmetries.
We first introduced the underlying mathematics, the theory of unitary
braided fusion categories, which describes the fusion and braiding of
quasiparticle
excitations. Then topological phases with symmetry are classified by a sequence of
UBFCs, $\cE\subset\cC\subset\cM$, plus a central charge $c$. Here $\cE$ is the
symmetric fusion category describing the local excitations, which carry
representations of the symmetry group. Thus, $\cE$ is also the categorical
description of the symmetry. $\cE=\Rep(G)$ corresponds to bosonic phases while
$\cE=\sRep(G^f)$ corresponds to fermionic phases. $\cC$ is the UBFC describing all the excitations,
whose M\"uger center coincides with $\cE$. $\cM$ is a minimal modular extension
of $\cC$ that describes the excitations in the gauged theory. $\cM$ encodes some
information of the invertible topological phases, in particular the SPT phases.
In the end, as $\cM$ only fixes the central charge $c$ modulo 8, the $E_8$ state
which has no symmetry, no bulk excitations, but edge state with central charge
8, is totally undetectable by the categorical approach. One can stack multiple layers of
$E_8$ states or its time-reversals without changing $\cE\subset\cC\subset\cM$.
To fix this ambiguity we appended the total central charge $c$ to
$\cE\subset\cC\subset\cM$ to complete the classification.

We also studied the stacking of topological phases and two types of anyon
condensations. They allow us to construct new topological
phases from known ones, and ``group'' them into suitable
equivalence classes or families, which simplifies the classification of
topological phases.

We have been focused on finite on-site unitary symmetries in the thesis. This
is mainly due to the technical difficulty dealing with the case where there
are infinitely many irreducible representations of the symmetry group. In 
Appendix~\ref{mirrorSET} we briefly discussed how to include anti-unitary
symmetries. It should be possible to overcome the technical difficulties, and extend the basic idea of the thesis to include also
continuous and space-time symmetries in the future.

Combined with previous results, a complete classification of topological phases
with symmetry in below 2+1D is almost at hand. It is then interesting
to investigate topological phases in 3+1D. The first step is to figure out 3+1D
topological orders, which may
require higher category theory as the natural underlying mathematical language.
Next we also need to combine topological orders with symmetries. But recall
that topological order appears as a new mechanism for phases of matter
starting from 2+1D, in 3+1D similarly we can have new mechanisms that are even
beyond topological order. Some examples are Haah's code~\cite{Haah1101.1962}, and
the stacking of infinite layers (extending to the 3rd dimension) of 2+1D
topological orders. A clear understanding of the new mechanisms in 3+1D is
still beyond our scope and will be an intriguing future project.


\bibliographystyle{unsrt}
\cleardoublepage 
\phantomsection  
\renewcommand*{\bibname}{References}

\addcontentsline{toc}{chapter}{\textbf{References}}

\bibliography{../library}


\appendix

\chapter*{APPENDICES}
\addcontentsline{toc}{chapter}{APPENDICES}
\chapter{Relation to the $G$-crossed UMTC approach}
In this appendix we discuss the relation between our approach and the
$G$-crossed UMTC approach for bosonic symmetry enriched
topological (SET) phases~\cite{BBCW1410.4540}. The latter
may be a bit more familiar to physicists. It fixes the underlying intrinsic
topological order, or a UMTC, and try to define the action of a symmetry group
$G$ on it. Besides, one also needs to consider the symmetry $G$-defects. The
$G$-defects can not be freely braided like the quasiparticles; they leave
defect lines behind. But, there is a ``$G$-crossed braiding'' for them. The
UMTC plus the $G$-defects together with the action of the symmetry group $G$,
forms the so called $G$-crossed UMTC.

Comparing to the $\mce{\Rep(G)}$ (UMTC over $\cE=\Rep(G)$) approach introduced in the main text,
this is just an equivalent perspective. $\mce{\Rep(G)}$ is the ``symmetric''
perspective while $G$-crossed UMTC is the ``symmetry-broken'' perspective. From a $G$-crossed UMTC, by taking representations
(equivariantization) we obtain $\Rep(G)=\cE\subset\cC\subset\cM$. More
precisely,
\begin{itemize}
  \item The tensor unit $\one$ (which spans the category of
    Hilbert spaces $\Hilb$) becomes the representation category $\Rep(G)$.
    In other words, local excitations acquire symmetry charges.
  \item The UMTC (trivial component in the $G$-crossed UMTC, trivial $G$-defects)
    becomes the $\mce{\Rep(G)}$ $\cC$. The topological excitations can carry usual 
    group representations or projective representations when the group actions
    do not permute topological charges, but more general ``representations''
    when the group actions permute topological charges.
  \item The $G$-crossed UMTC becomes the modular extension $\cM$. $G$-defects
    are promoted to gauge fluxes, dynamical excitations in the gauged theory.
\end{itemize}
On the other hand, from $\Rep(G)=\cE\subset\cC\subset\cM$, by breaking the
symmetry (condensing $\Rep(G)$ or the regular algebra $\Fun(G)$ in $\Rep(G)$,
de-equivariantization), we go back to the $G$-crossed UMTC and explicit
$G$-actions are recovered.

To illustrate this idea, let's consider the example, trivial topological order
with $\Z_2$ symmetry. In the $G$-crossed UMTC perspective, we consider all
local Hilbert spaces, the UMTC $\Hilb$ with a $\Z_2$ action. In particular we
allow the $\Z_2$ action to change local quantum states
\begin{align}
  |0\rangle\to|1\rangle.
\end{align}
We are not forbidden from describing the symmetry with its action on this
$|0\rangle,|1\rangle$ basis. But in a real physical system with $\Z_2$
symmetry, $|0\rangle$ alone can not be stable, i.e., can not be an energy
eigenstate, since it is not a representation of $\Z_2$. To really
observe the state $|0\rangle$, our probe has to somehow break the $\Z_2$
symmetry. On the other hand, the even/odd irreducible representations
\begin{align}
  |+\rangle=\frac{1}{\sqrt{2}}(|0\rangle+|1\rangle),\quad
  |-\rangle=\frac{1}{\sqrt{2}}(|0\rangle-|1\rangle),
\end{align}
can be stable excited states without breaking the $\Z_2$ symmetry and correspond
to the categorical $\cE=\Rep(G)$ way to describe the symmetry. 

Through the main text we use the term ``topological phases with symmetry''
instead of the term ``symmetry enriched topological phases'', or SET, which is
more common in the literature. This also reflects the different philosophies in
the two perspectives. For us, we fix the symmetry at first, and try to classify
all topological phases with this symmetry and study the stacking that preserves
the symmetry. In the $G$-crossed approach, one at first fixes a bosonic topological
order with no symmetry, and tries to add consistent $G$-actions and
$G$-defects. Thus the topological order is ``enriched'' by the symmetry.

There are two main differences between the two approaches. The first is that in
$G$-crossed UMTC approach, when trying to define the $G$-action on the
underlying UMTC $\cC$, since they are many layers of structures, such as anyon
types, local operators, fusion, braiding and so on, there can be obstructions in
$H^3(G,\Ab{\cC})$
for certain choice of the $G$-action. Only when the obstruction vanishes, one
can consistently define the $G$-action on the UMTC. However, a $\mce{\Rep(G)}$ is
automatically free of such obstructions. Similarly, when trying to consistently
add $G$-defects, there can be obstructions in $H^4(G,U(1))$, which prevent
from extending a UMTC with $G$-action to a $G$-crossed UMTC. Again, existence
of modular extensions implies that such obstructions vanish. But there are
indeed examples that certain UBFCs have no minimal modular
extensions~\cite{Drinfeld,CBVF1403.6491}. Non-vanishing $H^4(G,U(1))$ means that the
corresponding topological phase is anomalous and can only exist on the (2+1)D
surface of a (3+1)D SPT phase described by the obstruction in
$H^4(G,U(1))$~\cite{CBVF1403.6491}.
These obstructions are explicit in the $G$-crossed UMTC approach, but implicit
in our approach. It is not clear how to read out the obstructions directly from
$\mce{\cE}$'s without using the mathematical equivalence with
$G$-crossed UMTCs.

The second difference is more fundamental which forces us to take our new
perspective. Although the two approaches are equivalent for boson systems, the
$G$-crossed approach can not be applied to $\Z_2^f$, the fermion number parity.
We think that the underlying physical reason is that $\Z_2^f$ can not be broken, not only for the system but also for all our probes. Thus, only the ``symmetric'' perspective works. Again let's
use the example of trivial fermion topological order with no other symmetry, to
illustrate this. The even irreducible representation of $\Z_2^f$,
$z|+\rangle=|+\rangle$, is now physically a boson, and the odd irreducible
representation $z|-\rangle=-|-\rangle$ is a fermion. To ``observe'' the $\Z_2^f$
action
\begin{align}
  z|0\rangle=|1\rangle,
\end{align}
we must have the states
\begin{align}
  |0\rangle=\frac{1}{\sqrt{2}}(|+\rangle+|-\rangle),\quad
  |1\rangle=\frac{1}{\sqrt{2}}(|+\rangle-|-\rangle),
\end{align}
which are the superpositions of bosons with fermions. This is impossible.
Therefore, we
can only have ``representations'' but no ``actions'' of $\Z_2^f$. For fermionic
topological phases, we have to use $\sRep(G^f)=\cE\subset\cC\subset\cM$.
Surely one can break the bosonic part $G_b=G^f/\Z^f_2$, and obtain, similar to
$G$-crossed UMTCs, a theory of $G_b$-crossed $\mce{\sRep(\Z_2^f)}$'s. This is
possible but has not been very well developed comparing to its bosonic
companion. Nonetheless, the $\sRep(\Z_2^f)$ part can never be broken.

In the end, we want to mention that partially breaking the symmetry is also
of interest. This leads a theory that mixes the $G$-crossed part and the
over $\cE$ part, where $G_b$-crossed $\mce{\sRep(\Z_2^f)}$ is just a special
case. It may reveal more structures and provide us with explicit formulas on
group-cohomological classifications and obstructions for fermionic SPT and SET
phases.

\chapter{Mirror and Time-reversal Symmetry}\label{mirrorSET}
In this Appendix we briefly discuss the categorical description for mirror and
time-reversal symmetries.
Recall that the mirror conjugate $\ov\cC$ of a UBFC $\cC$ is canonically
braided equivalent to the time-reversal conjugate $\cC^{tr}$.
Therefore, the classification of time-reversal SETs should be the same as mirror
SETs.
We start by considering UMTCs with mirror symmetry action:

\begin{dfn}
  A topological phase (UMTC) $\cC$ has mirror symmetry (potentially anomalous)
  if there is a braided tensor equivalence $T:\cC\to\ov\cC$. More precisely, there
  are two braided tensor functors ${T:\cC\to\ov\cC}$, $\ov
  T:\ov\cC\to\ov{\ov\cC}=\cC$, such that
  $\ov T T\cong \id_\cC$, $T \ov T\cong \id_{\ov\cC}$.
\end{dfn}

Here the data $\ov T T\cong \id_\cC$, $T \ov T\cong \id_{\ov\cC}$ encode the mirror
symmetry fractionalization.

The existence of $T:\cC\cong\ov\cC$ implies that the central charge of $\cC$ is
$c=0$ or $4$ mod 8. A first anomaly-free condition is that $\cC$ has exactly zero central charge.
To study the other anomalies, since such equivalence $T$ is \emph{not} a braided tensor equivalence
(automorphism) from $\cC$ to $\cC$
itself, we can not directly apply the techniques developed for on-site symmetries
($G$-crossed UMTC or $\mce{\cE}$ with modular extensions). So we use the
folding trick~\cite{SHFH1604.08151,Lake1608.02736} to turn mirror symmetry into an on-site $\Z_2$ symmetry. Folding
the topological phase $\cC$ along the mirror axis, and we obtain a double-layer
phase $\cC\bt\ov\cC$ together with a canonical gapped boundary $\cC$. Alternatively,
such folding can be encoded in a canonical Lagrangian condensable algebra
$L_\cC\cong\bigoplus_{i\in\cC} i\bt i^*$ in $\cC\bt\ov\cC$. Condensing $L_\cC$
one obtains
the trivial phase $(\cC\bt\ov\cC)_{L_\cC}^0=\Hilb$ and the gapped
boundary $(\cC\bt\ov\cC)_{L_\cC}=\cC$ as a fusion category (forget braidings
on $\cC$).

Now the mirror symmetry is turned into the on-site $\Z_2$ action on
$\cC\bt\ov\cC$, on the gapped boundary $\cC$ and on $L_\cC$.
\begin{enumerate}
  \item \textbf{Action on $\cC\bt\ov\cC$}

Let $\tilde T:\cC\bt\ov\cC\xrightarrow{T\bt \ov T} \ov\cC\bt\cC\cong \cC\bt\ov\cC$, where
the second equivalence is just exchanging two layers. It is clear $\tilde T^2\cong
\id_{\cC\bt\ov\cC}.$ Moreover, it ``fractionalization''
$H^2(\Z_2,(\cC\bt\ov\cC)_{Ab})$ is always trivial.  This means that we
can take $\tilde T^2=\id_{\cC\bt\ov\cC}$.

\item \textbf{Action on the gapped boundary $\cC$}

We have the bulk to boundary map
\begin{align}
  \cC\bt\ov\cC &\to \cC\\
  X\bt Y&\mapsto X\ot Y\nonumber
\end{align}
As $T,\ov T$ are essentially the same tensor functor, the action on the gapped
boundary is just $ T:\cC\cong\cC$ viewed as a tensor functor.
(Note that braided tensor functor and tensor functor contain the \emph{same}
data. Being braided or not is just a \emph{property} of a tensor functor.)
$\tilde T$ is naturally the induced action in the bulk on $Z(\cC)=\cC\bt\ov\cC$
from the action $T$ on the boundary $\cC$.

\item \textbf{Action on $L_\cC$}

We see that the data $\ov T T\cong \id_\cC, T\ov T\cong \id_{\ov\cC}$ are
combined together in $\tilde T^2\cong \id_{\cC\bt\ov\cC}$. To see the mirror symmetry
 fractionalization, we consider the action on $L_\cC$, which is an \emph{algebra isomorphism} $\alpha:\tilde T(L_\cC)\cong L_\cC$ satisfying 
\begin{align}
  \xymatrix{\tilde T^2(L_\cC)\ar[r]^{\tilde T(\alpha)}\ar[d]^{\cong} &\tilde T(L_\cC)\ar[d]^{\alpha}\\
L_\cC\ar[r]^= &L_\cC}\label{act}
\end{align}
Here algebra isomorphism means that
\begin{align}
  \xymatrix{\tilde T(L_\cC)\ot \tilde T(L_\cC) \ar[rr]^-{\alpha\ot\alpha} \ar[d]^\cong
  &&L_\cC\ot L_\cC \ar[dd]^m \\
  \tilde T(L_\cC\ot L_\cC)\ar[d]^{\tilde T(m)}&&\\
  \tilde T(L_\cC)\ar[rr]^{\alpha}&&L_\cC}\label{algiso}
\end{align}
where $m$ is the multiplication morphism.

Note that the action $\alpha$ in general is \emph{not} an automorphism of
$L_\cC$, but an isomorphism from $L_\cC$ to $\tilde T (L_\cC)$.
We like to prove that $\alpha$ is the same data as $\ov T T\cong
\id_\cC$. Say $\alpha=\oplus\alpha_i$ where $\alpha_i:\tilde T(i\bt i^*)=\ov
T(i^*)\bt T(i) \to j^*\bt j$, $j=T(i)$. Set $\ov T
T(i)\xrightarrow{\alpha_i\id_i}
i$. To be consistent with $T\ov T T(i)\xrightarrow{\alpha_i\id_{T(i)}}
T(i)$, we should set $T\ov T(i)\xrightarrow{\alpha_{T(i)}\id_i}i$.  Under such
identification, \eqref{act} which now reads ``$\alpha_i\alpha_{T(i^*)}$ equals
$\tilde T^2(i\bt i^*)\to i\bt i^*$'', is the same as ``$\ov T T\bt T\ov T(i\bt i^*)\to i\bt
i^*$ equals $\tilde T^2(i\bt i^*)\to i\bt i^*$''. Also after writing the
multiplication $m$ out
explicitly, we can show that ``$\alpha$ is an algebra isomorphism
\eqref{algiso}'' is
equivalent to ``$\ov T T(i)\xrightarrow{\alpha_i\id_i} i$ is a monoidal natural
isomorphism''.
\end{enumerate}

Now we have a two-layer system with an on-site $\Z_2$ symmetry. We turn to the
``symmetric perspective'', $\mce{\cE}$'s with modular extensions.
We first ``take representations'' (equivariantization) of the on-site
$\Z_2$ action $\tilde T$, denoted by ${(\cC\bt\ov\cC)}^{\Z_2}$,
which is a $\mce{\Rep(\Z_2)}$.
Similar to representations in
the category of vector spaces, which is a pair $(V,\rho)$, a vector space $V$ with
actions $\rho:G\to \aute(V)$, an object (``representation'') in
$(\cC\bt\ov\cC)^{\Z_2}$ is a pair $(X,\mu_{\tilde T})$, an object $X\in\cC\bt\ov\cC$ with an
action of $\tilde T$, $\mu_{\tilde T}:X\cong X$ satisfying similar condition as \eqref{act}.

Thus, $(L_\cC,\alpha)$ is an object in $(\cC\bt\ov\cC)^{\Z_2}$. In
$(\cC\bt\ov\cC)^{\Z_2}$, choosing the action $\alpha$ is just assigning
``symmetry charges'' to components of $L_\cC$.   Since $\alpha$ is an
algebra isomorphism, $(L_\cC,\alpha)$ is also an algebra in $(\cC\bt\ov\cC)^{\Z_2}$.
This means that the assignment of ``symmetry charges'' must make the new object
in $(\cC\bt\ov\cC)^{\Z_2}$ still an algebra.
Condensing $(L_\cC,\alpha)$ we get
$[(\cC\bt\ov\cC)^{\Z_2}]_{(L_\cC,\alpha)}=\cC^{\Z_2}$ and
$[(\cC\bt\ov\cC)^{\Z_2}]_{(L_\cC,\alpha)}^0=\Rep(\Z_2)$, where $\cC^{\Z_2}$ is
the equivariantization of $\cC$ as fusion categories (i.e., the gapped
boundary).

$\cC^{\Z_2}$ is almost the same construction as $(\cC \bt \ov\cC)^{\Z_2}$.
The only thing need to be noted here is that we view $\cC$ as only a fusion
category without braidings, so the ``symmetry fractionalization on the
boundary'' is given by
$H^2( \Z_2, Z(\cC)_{Ab})$, valued in Abelian anyons in the center of $\cC$,
$Z(\cC)=\cC \bt \ov \cC$, rather than $\cC$ itself. This makes $H^2( \Z_2,
Z(\cC)_{Ab})$ trivial so we can always set $T^2=\id_\cC$ as tensor functors.
This is consistent with the fact that symmetry fractionalization of $\tilde T$ for the two-layer
system is trivial.
On the
other hand $\ov T T \cong \id_\cC, T \ov T\cong \id_{\ov C}$ as braided tensor
functors is independent data, which is ``mirror symmetry fractionalization''
of the single-layer system.

Now we are ready to discuss the anomaly of mirror symmetry fractionalization.
As it is now encoded in the action $\alpha$ on the algebra $L_\cC$, the anomaly
of mirror symmetry fractionalization is the same as the anomaly of the
condensable algebra $(L_\cC,\alpha)$. Recall the discussions in
Section~\ref{bosecon}. We can just take $Z(\cC^{\Z_2})$ as a modular extension of
$(\cC\bt\ov\cC)^{\Z_2}$. Condensing $(L_\cC,\alpha)$ in the gauged theory
$Z(\cC^{\Z_2})$ we get the gauged theory of a $\Z_2$-SPT phase, which reveals the anomaly of
the action $\alpha$ (or mirror symmetry fractionalization).
If $Z(\cC^{\Z_2})_{(L_\cC,\alpha)}^0=Z(\Rep(\Z_2))$, $\alpha$ is anomaly free; if $Z(\cC^{\Z_2})_{(L_\cC,\alpha)}^0\neq Z(\Rep(\Z_2))$ (and it
can only be the double-semion phase), $\alpha$ is
anomalous.

Next we discuss an example. In the toric code model, $\cC=Z(\Rep(\Z_2))$.
One possible mirror action is that the functor $T$ that acts as
identity on objects but maps to reversed braiding. Then $\cC\bt\ov\cC$ is just the double-layer
toric code, and $Z(\cC^{\Z_2})$ should be $D(D_4)\equiv
Z(\Ve_{D_4})=Z(\Rep(D_4))$, where
$\Ve_{G}$ denotes the category of $G$-graded vector spaces. The algebra in $D(D_4)$
corresponding to $eTmT$ (both $e,m$ has $T^2=-1$, or
$\alpha_\one=\alpha_f=1,\alpha_e=\alpha_m=-1$) gives a condensation to the
double-semion phase~\cite{QJW1710.09391}. Thus $eTmT$ is anomalous.

It would be beneficial to go through the equivariantization process in detail
and show the similarities and differences between equivariantization and taking
usual representations.
Let's calculate how $\cC^{\Z_2}$ gives $\Ve_{D_4}$ with
$\cC=Z(\Rep(\Z_2))=\{\one,e,m,f\}$ and $T$ the functor that does not permute
any objects.  Naively we have 8 simple objects in $\cC^{\Z_2}$, of the form $(i,x)$,
$i\in \cC$, $x=\pm 1$ corresponding to the morphism $T(i)\xrightarrow{x
\id_i} i$. But now one can not simply add up the $\Z_2$ charge $x$. Fusion in
$\cC^{\Z_2}$ is given by $(i,x)\ot (j,y)= (i\ot j, w)$, where $w$ is the morphism
$T(i\ot j) \cong T(i) \ot T(j) \xrightarrow{x \id_i\ot y \id_j} i\ot j$.

In order for $T$ to be a braided functor between $\cC$ and $\ov \cC$, we
can not take $T(i\ot j) \cong T(i) \ot T(j)$ to be simply identity morphisms.
More precisely, $T$ is braided if
\begin{align}
  \xymatrix{T(i\ot j) \ar[r]^-\cong \ar[d]^{T(c_{ij})} 
  &T(i)\ot T(j)\ar[d]^{c_{T(i),T(j)}}\\
T(j\ot i)\ar[r]^-\cong  &T(j)\ot T(i)}
\end{align}
where $c_{ij}$ is the braiding. For toric code, in certain gauge we have for example $c_{e,m}=1, c_{m,e}=-1$,
so in mirror conjugate $\ov\cC$ we have $c_{\ov e,\ov m}=c_{m,e}^{-1}=-1, c_{\ov
m, \ov e}=1$. It is then clear that the difference between $T(e\ot m) \cong
T(e) \ot T(m)$ and $T(m\ot e) \cong T(m) \ot T(e)$ must be $-1$. A good choice
happens to be $T(i\ot j)\xrightarrow{c_{ij}} T(i)\ot T(j)$.
We take
\begin{align}
  c_{ee}=c_{mm}=c_{em}=c_{fm}=c_{ef}=1,\quad c_{ff}=c_{me}=c_{mf}=c_{fe}=-1.
\end{align}
Thus the fusion rules are
\begin{align}
  (i,x)\ot (j,y)= (i\ot j, c_{ij}xy).
\end{align}
which is just the extension of $\Z_2\xt \Z_2$ by $\Z_2$ with ``2-cocycle''
$c_{ij}$. To see that it is the same as $D_4$, just check the following
\begin{itemize}
  \item $\Z_4$ subgroup $\{(\one,1),(f,1),(\one,-1),(f,-1)\}$.
  \item $\Z_2$ subgroup $\{(\one,1),(e,1)\}$.
  \item $(e,1),(f,1)$ generate the group, and $(e,1)\ot (f,1) \ot (e,1)=(f,-1)$.
\end{itemize}
Thus the fusion is isomorphic to $\Z_4\rtimes \Z_2=D_4$. Since
$\cC=Z(\Rep(\Z_2))$
has trivial associator or $F$-matrices, $\cC^{\Z_2}$ also has trivial $F$-matrices.
Thus $\cC^{\Z_2}$ is the fusion category with $D_4$ fusion rules and no
additional 3-cocycle twists, which means that $\cC^{\Z_2}=\Ve_{D_4}$,
$Z(\cC^{\Z_2})=D(D_4)$.

It is also interesting to calculate the case when $T$ permutes $e,m$. We have 5 simple objects in $\cC^{\Z_2}$,
$\{(\one,1),(\one,-1),(f,1),(f,-1),(e\oplus m,1)\},$ where $(e\oplus m,1)$ is
of quantum dimension 2. In this case $T(i\ot j) \cong T(i) \ot T(j)$ can be
chosen to be just identify morphisms (for $T:\cC\to\ov\cC$, but for unitary
on-site $e,m$ exchange they can not be identities). Its fusion rules are the same as $\Rep(D_4)$ or
$\Rep(Q_8)$. By calculating the $F$-matrices in $\cC^{\Z_2}$ it should be possible to
explicitly identify $\cC^{\Z_2}$ with $\Rep(D_4)$, but this way is too involved. We
use a result in Ref.~\cite{GNN0905.3117} to bypass it. The result says that
$Z(\cC\rtimes G)=Z(\cC^G)$. As the only non-trivial structure of $T$ acting on
$\cC$ is exchanging $e,m$, the corresponding $\cC\rtimes
\Z_2=\Ve_{\Z_2\xt\Z_2} \rtimes \Z_2=\Ve_{(\Z_2\xt\Z_2) \rtimes
\Z_2}=\Ve_{D_4}$. Thus we also have
$Z(\cC^{\Z_2})=Z(\Ve_{D_4})=D(D_4)=Z(\Rep(D_4))$. This also implies that $\cC^{\Z_2}$ must be
$\Rep(D_4)$.
(Note that when $T$ does not permute $e,m$, we showed that $T$ has other
non-trivial structures. In this case $\cC\rtimes
\Z_2=\Ve_{\Z_2\xt\Z_2} \rtimes
\Z_2=\Ve_{\Z_2\xt\Z_2\xt\Z_2}^{\omega_3}$ with some 3-cocycle $\omega_3$
twist.)

\chapter{Abelian Topological Orders, $K$-matrix and Abelian Condensation}
\label{Hab}

Consider a bosonic Abelian topological order, which can always be described by
an even $K$-matrix $K_0$ of dimension $\kappa$.  Anyons are labeled by
$\kappa$-dimensional integer vectors $\bm l_0$.  Two
integer vectors $\bm l_0$ and $\bm l'_0$ are equivalent (\ie describe the same
type of topological excitation) if they are related by
\begin{align}
  \bm l'_0 = \bm l_0 + K_0 \bm k,\label{eq0}
\end{align}
where $\bm k$ is an arbitrary integer vector. Fusion of anyons are done by
first adding up vectors and then imposing the above equivalence relation.
The mutual statistical angle between two
anyons, $\bm l_0$ and $\bm k_0$, is given by
\begin{align}
 \theta_{\bm l_0,\bm k_0} =2\pi \bm k_0^T K_0^{-1} \bm l_0.
\end{align}
The spin of the anyon $\bm l_0$ is given by
\begin{align}
 s_{\bm l_0} = \frac12 \bm l_0^T K_0^{-1} \bm l_0.
\end{align}

Next we discuss Abelian condensation in Abelian topological orders in the
$K$-matrix formulation.
Let us construct a new topological order from the $K_0$
topological order by assuming Abelian anyons labeled by $\bm l_c$ condense.
Here we treat the anyon as a bound state between a boson and flux.  We then
smear the flux such that it behaves like an additional uniform magnetic field,
and condense the boson into $\nu=1/m_c$ Laughlin state (where $m_c=$ even).
The resulting new topological order is described by the $(\kappa+1)$-dimensional
$K$-matrix
\begin{align}
K_1= \bpm K_0 & \bm l_c \\
     \bm l_c^T &  m_c \\
\epm
\end{align}

In the following, we are going to show that, to describe the result of the $\bm
l_c$ anyon condensation, we do not need to know $K_0$ directly.  We only need
to know the spin of the condensing particle $\bm l_c$
\begin{align}
s_c=\frac12 \bm l_c^T K_0^{-1} \bm l_c,
\end{align}
and the mutual statistics 
\begin{align}
\theta_{\bm l_0,\bm l_c}\equiv 2\pi t_{\bm l_0},\ \ \  t_{\bm l_0} = \bm l_c^T K_0^{-1} \bm l_0
\end{align}
between $\bm l_0$ and  $\bm l_c$.

First, we find that, as long as $m_c-2s_c\neq 0$, $K_1$ is invertible with
\begin{align}
 K_1^{-1} = 
\bpm
 K_0^{-1} +\frac{ K_0^{-1} \bm l_c \bm l_c^T K_0^{-1}}{ m_c - 2s_c}
& - \frac{K_0^{-1} \bm l_c}{ m_c - 2s_c} \\
- \frac{\bm l_c^T K_0^{-1} }{ m_c - 2s_c} &  \frac{1}{ m_c - 2s_c}
\epm
\end{align}
The anyons in the new $K_1$ topological order are labeled by
$\kappa+1$-dimensional integer vectors $\bm l^T = (\bm l_0^T, m)$.  The spin of $\bm
l$ is
\begin{align}
\label{slsl0}
 s_{\bm l}  &= \frac12 \bm l^T K_1^{-1} \bm l
 = \frac12 \Big(2s_0 + \frac{m^2+t_{\bm l_0}^2 -2 m t_{\bm l_0} }{m_c-2s_c}  \Big) 
\nonumber\\
&= s_{\bm l_0} +\frac12 \frac{(m-t_{\bm l_0})^2 }{m_c-2s_c}
\end{align}
The vectors 
$\bm l^T = (\bm l_0^T, m)$
and
$\bm l^{\prime T} = (\bm l_0^{\prime T}, m')$
are equivalent if they are related by
\begin{align}
  \label{equivk0k}
\bm l_0^{\prime } - \bm l_0 = K_0 \bm k_0 + k \bm l_c, \ \ \
m' - m = \bm l_c^T \cdot \bm k_0+m_c k,
\end{align}
for any $\kappa$-dimensional integer vector $\bm k_0$ and integer $k$. 
To avoid the gauge ambiguity, for the
integer vectors $\bm l_0$, we pick a representative for each equivalence class
(by \eqref{eq0}, fixing the gauge).
Taking $k=1$ and appropriate $\bm k_0$ such that $\bm l_0'$ and $\bm l_0$ are
the pre-fixed representatives, we see that 
\begin{align}
\label{equiv}
(\bm l_0^T, m) \sim ({\bm{l}'}_0^T\sim\bm l_0^{T}+\bm l_c^T, m+t_{\bm l'_0}-t_{\bm
l_0}+m_c-2s_c).
\end{align}

We also want to express the fusion in the new phase in terms of the pre-fixed
representatives $\bm l_1,\bm l_2,\bm l_3$. Assuming that $(\bm l_3^T,m_3)\sim (\bm
l_1^T+\bm l_2^T,m_1+m_2)$, and taking $k=0$ and appropriate $\bm k_0$ in
\eqref{equivk0k} (the cases of non-zero $k$ can be generated via
\eqref{equiv}), we find that
\begin{align}
  \label{fusion}
&(\bm l_1^T,m_1)+(\bm l_2^T,m_2)
\\
&\sim (\bm l_3^T\sim \bm l_1^T+\bm
l_2^T,m_3=m_1+m_2+t_{\bm l_3}-t_{\bm l_1}-t_{\bm l_2}).
\nonumber 
\end{align}

We can easily calculate the determinant of $K_1$ whose absolute value is the
rank of the new phase:
\begin{align}
  \det(K_1)&=\det\bpm K_0 & \bm l_c \\
     \bm l_c^T &  m_c \\
     \epm = \det(K_0)(m_c-\bm l_c^T K_0^{-1} \bm l_c)
     \nonumber\\
&=(m_c-2s_c)\det(K_0)
\end{align}
Let $M_c=m_c-2s_c$. It is an important gauge invariant quantity relating the
ranks of the two phases. If we perform the condensation with a different anyon $\bm l_c'$
and a different even integer $m_c'$, but make sure that $\bm l_c'\sim \bm l_c$
and $M_c'=m_c'-2s_c'=m_c-2s_c=M_c$, the new topological order will be the same.

It is worth mentioning that such construction is reversible: for the $K_1$
state, take $\bm l_c'^T=(\bm 0^T,1),m_c'=0$, and repeat the construction:
\begin{align}
  K_2=\bpm K_0&\bm l_c & 0\\
  \bm l_c^T& m_c&1\\
  0&1&0\epm \sim 
  \bpm K_0& 0 & 0\\
  0& 0&1\\
  0&1&0\epm
  \sim K_0.
\end{align}
We return to the original $K_0$ state.
\chapter{Selected Tables for Topological Phases with Symmetries}
\label{tables}
In this Appendix we give several tables of $\mce{\cE}$'s for various $\cE$, in
terms of quantum dimensions $d_i$ and topological spins
$s_i$~\cite{LKW1507.04673,LKW1602.05946}. Since we
are not able to calculate modular extensions for all entries, some of them may
be invalid. However, within certain numerical search bound, we have given all
possible candidates. Also as we showed in Chapter~\ref{stack}, as long as the $\mce{\cE}$ is
valid and has modular extensions, it determines the topological phases up to
invertible ones. So these tables can be viewed as listing candidates for topological
phases with symmetries, up to invertible ones.
As the classification of invertible bosonic phases is clear, we will only
mention the classification of invertible fermionic phases in the following
examples.
The numerical search was done by my supervisor Xiao-Gang Wen based on his algorithm
searching for bosonic topological orders (UMTCs)~\cite{Wen1506.05768}.

Table~\ref{SETZ2-34} lists all bosonic topological phases with $\Z_2$ symmetry
for $N=3,4$ and $D^2\leq 100$.
All the topological orders in this list have modular extensions, and are
realizable by (2+1)D boson systems.
We use $N^{|\Th|}_{c}$ to label $\mce{\cE}$'s, where
$\Theta ={D}^{-1}\sum_{i}\ee^{2\pi\ii s_i} d_i^2=
|\Th|\ee^{2\pi \ii c/8}$ and $D^2=\sum_id_i^2$. In the ``comment/$K$-matrix'' column, SB
means the state after symmetry breaking, $\bt^t$ indicates a twisted fusion
rule (from $\Z_2\xt\Z_2$ to $\Z_4$), $N^B_c$ means a bosonic topological
order with central charge $c$ (see Refs.~\cite{Wen1506.05768,LKW1507.04673}) and
$K$-matrix describes an Abelian topological order (see Appendix~\ref{Hab} for
a brief introduction).
Here $\zeta_n^m=\frac{\sin[\pi(m+1)/(n+2)]}{\sin[\pi/(n+2)]}$.  
\begin{table}[h] 
\caption{ Bosonic topological phases with $\Z_2$ symmetry.}
\label{SETZ2-34} 
\medskip

\centering
\begin{tabular}{ |c|c|l|l|l| } 
\hline 
$N^{|\Th|}_{c}$ & $D^2$ & $d_1,d_2,\cdots$ & $s_1,s_2,\cdots$ & comment/$K$-matrix \\
\hline 
$2^{\zeta_{2}^{1}}_{ 0}$ & $2$ & $1, 1$ & $0, 0$ & $\cE=\text{Rep}(\Z_2)$\\
\hline
$3^{\zeta_{2}^{1}}_{ 2}$ & $6$ & $1, 1, 2$ & $0, 0, \frac{1}{3}$ & 
SB:$K=$ $\begin{pmatrix}
 2 & -1 \\
 -1 & 2 \\
\end{pmatrix}
$
\\
$3^{\zeta_{2}^{1}}_{-2}$ & $6$ & $1, 1, 2$ & $0, 0, \frac{2}{3}$ & SB:$K=$ $\begin{pmatrix}
 -2 & 1 \\
 1 & -2 \\
\end{pmatrix}
$
\\
\hline
$4^{\zeta_{2}^{1}}_{ 1}$ & $4$ & $1, 1, 1, 1$ & $0, 0, \frac{1}{4}, \frac{1}{4}$ & $2^B_1\boxtimes \text{Rep}(\Z_2)$\\
 $4^{\zeta_{2}^{1}}_{ 1}$ & $4$ & $1, 1, 1, 1$ & $0, 0, \frac{1}{4}, \frac{1}{4}$ & $2^B_1\boxtimes^t \text{Rep}(\Z_2)$ \\
$4^{\zeta_{2}^{1}}_{-1}$ & $4$ & $1, 1, 1, 1$ & $0, 0, \frac{3}{4}, \frac{3}{4}$ & $2^B_{-1}\boxtimes \text{Rep}(\Z_2)$\\
 $4^{\zeta_{2}^{1}}_{-1}$ & $4$ & $1, 1, 1, 1$ & $0, 0, \frac{3}{4}, \frac{3}{4}$ & $2^B_{-1}\boxtimes^t \text{Rep}(\Z_2)$ \\
$4^{\zeta_{2}^{1}}_{ 14/5}$ & $7.2360$ & $1, 1,\zeta_{3}^{1},\zeta_{3}^{1}$ & $0, 0, \frac{2}{5}, \frac{2}{5}$ & $2^B_{ 14/5}\boxtimes \text{Rep}(\Z_2)$ \\
$4^{\zeta_{2}^{1}}_{-14/5}$ & $7.2360$ & $1, 1,\zeta_{3}^{1},\zeta_{3}^{1}$ & $0, 0, \frac{3}{5}, \frac{3}{5}$ & $2^B_{-14/5}\boxtimes \text{Rep}(\Z_2)$ \\
$4^{\zeta_{2}^{1}}_{ 0}$ & $10$ & $1, 1, 2, 2$ & $0, 0, \frac{1}{5}, \frac{4}{5}$ & 
SB:$K=$ $\begin{pmatrix}
 2 & -3 \\
 -3 & 2 \\
\end{pmatrix}
$
\\
$4^{\zeta_{2}^{1}}_{ 4}$ & $10$ & $1, 1, 2, 2$ & $0, 0, \frac{2}{5}, \frac{3}{5}$ & SB:$K=$ $
\begin{pmatrix}
2 & 1 & 0 & 0 \\ 
1 & 2 & 0 & 1 \\ 
0 & 0 & 2 & 1\\
0 & 1 & 1 & 2\\
\end{pmatrix}
$\\
 \hline 
\end{tabular} 
\end{table}

Table~\ref{SETZ3} lists all bosonic topological phases with $\Z_3$ symmetry, for
$N=4,5,6$ and $D^2\leq 100$,
$N=7$ and $D^2\leq 60$,
$N=8$ and $D^2\leq 40$.
\begin{table*}[h] 
\caption{Bosonic topological phases with $\Z_3$ symmetry.  } 
\label{SETZ3} 
\medskip

\centering
\begin{tabular}{ |c|c|l|l|l| } 
\hline 
$N^{|\Th|}_{c}$ & $D^2$ & $d_1,d_2,\cdots$ & $s_1,s_2,\cdots$ & comment/$K$-matrix \\
\hline 
$3^{\zeta_{4}^{1}}_{ 0}$ & $3$ & $1, 1, 1$ & $0, 0, 0$ & $\cE=\Rp(\Z_3)$\\
\hline
$4^{\zeta_{4}^{1}}_{ 4}$ & $12$ & $1, 1, 1, 3$ & $0, 0, 0, \frac{1}{2}$ & 
SB:$K=$ $\begin{pmatrix}
2 & 1 & 1 & 1 \\ 
1 & 2 & 0 & 0 \\ 
1 & 0 & 2 & 0\\
1 & 0 & 0 & 2\\
\end{pmatrix}
$
\\
\hline
$6^{\zeta_{4}^{1}}_{ 1}$ & $6$ & $1, 1, 1, 1, 1, 1$ & $0, 0, 0, \frac{1}{4}, \frac{1}{4}, \frac{1}{4}$ & $2^B_1\boxtimes \text{Rep}(\Z_3)$ \\
$6^{\zeta_{4}^{1}}_{-1}$ & $6$ & $1, 1, 1, 1, 1, 1$ & $0, 0, 0, \frac{3}{4}, \frac{3}{4}, \frac{3}{4}$ & $2^B_{-1}\boxtimes \text{Rep}(\Z_3)$ \\
$6^{\zeta_{4}^{1}}_{ 14/5}$ & $10.854$ & $1, 1, 1,\zeta_{3}^{1},\zeta_{3}^{1},\zeta_{3}^{1}$ & $0, 0, 0, \frac{2}{5}, \frac{2}{5}, \frac{2}{5}$ &  $2^B_{ 14/5}\boxtimes \text{Rep}(\Z_3)$\\
$6^{\zeta_{4}^{1}}_{-14/5}$ & $10.854$ & $1, 1, 1,\zeta_{3}^{1},\zeta_{3}^{1},\zeta_{3}^{1}$ & $0, 0, 0, \frac{3}{5}, \frac{3}{5}, \frac{3}{5}$ &  $2^B_{-14/5}\boxtimes \text{Rep}(\Z_3)$\\
\hline
$8^{\zeta_{4}^{1}}_{ 3}$ & $24$ & $1, 1, 1, 1, 1, 1, 3, 3$ & $0, 0, 0, \frac{3}{4}, \frac{3}{4}, \frac{3}{4}, \frac{1}{4}, \frac{1}{2}$ & $2^{ B}_{-1}\boxtimes 4^{\zeta_{4}^{1}}_{ 4}$\\
$8^{\zeta_{4}^{1}}_{-3}$ & $24$ & $1, 1, 1, 1, 1, 1, 3, 3$ & $0, 0, 0, \frac{1}{4}, \frac{1}{4}, \frac{1}{4}, \frac{1}{2}, \frac{3}{4}$ & $2^{ B}_{ 1}\boxtimes 4^{\zeta_{4}^{1}}_{ 4}$\\
$8^{\zeta_{4}^{1}}_{ 6/5}$ & $43.416$ & $1, 1, 1,\zeta_{3}^{1},\zeta_{3}^{1},\zeta_{3}^{1}, 3,\frac{3+\sqrt{45}}{2}$ & $0, 0, 0, \frac{3}{5}, \frac{3}{5}, \frac{3}{5}, \frac{1}{2}, \frac{1}{10}$ & $2^{ B}_{-14/5}\boxtimes 4^{\zeta_{4}^{1}}_{ 4}$\\
$8^{\zeta_{4}^{1}}_{-6/5}$ & $43.416$ & $1, 1, 1,\zeta_{3}^{1},\zeta_{3}^{1},\zeta_{3}^{1}, 3,\frac{3+\sqrt{45}}{2}$ & $0, 0, 0, \frac{2}{5}, \frac{2}{5}, \frac{2}{5}, \frac{1}{2}, \frac{9}{10}$ & $2^{ B}_{ 14/5}\boxtimes 4^{\zeta_{4}^{1}}_{ 4}$\\
 \hline 
\end{tabular} 
\end{table*}

Table~\ref{SETS3-567} lists all bosonic topological phases with $S_3$ symmetry,
for $N=4,5,6$ and $D^2\leq 100$,
$N=7$ and $D^2\leq 60$,
$N=8$ and $D^2\leq 40$.
\begin{table*}[h] 
\caption{Bosonic topological phases with $S_3$ symmetry.  } 
\label{SETS3-567} 
\medskip

\centering
\begin{tabular}{ |c|c|l|l|l| } 
\hline 
$N^{|\Th|}_{c}$ & $D^2$ & $d_1,d_2,\cdots$ & $s_1,s_2,\cdots$ & comment/$K$-matrix \\
\hline 
$3^{\sqrt{6}}_{ 0}$ & $6$ & $1, 1, 2$ & $0, 0, 0$ & $\cE=\Rp(S_3)$\\
\hline
$5^{\sqrt{6}}_{ 4}$ & $24$ & $1, 1, 2, 3, 3$ & $0, 0, 0, \frac{1}{2}, \frac{1}{2}$ & 
SB:$4^{ B}_{ 4}$ 
\\
$5^{\sqrt{6}}_{ 4}$ & $24$ & $1, 1, 2, 3, 3$ & $0, 0, 0, \frac{1}{2}, \frac{1}{2}$ & SB:$4^{ B}_{ 4}$ 
 $K=
\begin{pmatrix}
2 & 1 & 1 & 1 \\ 
1 & 2 & 0 & 0 \\ 
1 & 0 & 2 & 0\\
1 & 0 & 0 & 2\\
\end{pmatrix}
$
\\
$5^{\sqrt{6}}_{ 4}$ & $24$ & $1, 1, 2, 3, 3$ & $0, 0, 0, \frac{1}{2}, \frac{1}{2}$ & SB:$4^{ B}_{ 4}$ \\
\hline
$6^{\sqrt{6}}_{ 1}$ & $12$ & $1, 1, 2, 1, 1, 2$ & $0, 0, 0, \frac{1}{4}, \frac{1}{4}, \frac{1}{4}$ & $2^{ B}_{ 1}\boxtimes \Rp(S_3)$\\
$6^{\sqrt{6}}_{ 1}$ & $12$ & $1, 1, 2, 1, 1, 2$ & $0, 0, 0, \frac{1}{4}, \frac{1}{4}, \frac{1}{4}$ & SB:$2^{ B}_{ 1}$ \\
$6^{\sqrt{6}}_{-1}$ & $12$ & $1, 1, 2, 1, 1, 2$ & $0, 0, 0, \frac{3}{4}, \frac{3}{4}, \frac{3}{4}$ & $2^{ B}_{-1}\boxtimes \Rp(S_3)$\\
$6^{\sqrt{6}}_{-1}$ & $12$ & $1, 1, 2, 1, 1, 2$ & $0, 0, 0, \frac{3}{4}, \frac{3}{4}, \frac{3}{4}$ & SB:$2^{ B}_{-1}$ \\
$6^{\sqrt{6}}_{ 2}$ & $18$ & $1, 1, 2, 2, 2, 2$ & $0, 0, 0, \frac{1}{3}, \frac{1}{3}, \frac{1}{3}$ & SB:$3^{ B}_{ 2}$ \\
$6^{\sqrt{6}}_{ 2}$ & $18$ & $1, 1, 2, 2, 2, 2$ & $0, 0, 0, \frac{1}{3}, \frac{1}{3}, \frac{1}{3}$ & SB:$3^{ B}_{ 2}$ \\
$6^{\sqrt{6}}_{-2}$ & $18$ & $1, 1, 2, 2, 2, 2$ & $0, 0, 0, \frac{2}{3}, \frac{2}{3}, \frac{2}{3}$ & SB:$3^{ B}_{-2}$ \\
$6^{\sqrt{6}}_{-2}$ & $18$ & $1, 1, 2, 2, 2, 2$ & $0, 0, 0, \frac{2}{3}, \frac{2}{3}, \frac{2}{3}$ & SB:$3^{ B}_{-2}$ \\
$6^{\sqrt{6}}_{ 14/5}$ & $21.708$ & $1, 1, 2,\zeta_{3}^{1},\zeta_{3}^{1},\zeta_{8}^{4}$ & $0, 0, 0, \frac{2}{5}, \frac{2}{5}, \frac{2}{5}$ & $2^{ B}_{ 14/5}\boxtimes \Rp(S_3)$\\
$6^{\sqrt{6}}_{-14/5}$ & $21.708$ & $1, 1, 2,\zeta_{3}^{1},\zeta_{3}^{1},\zeta_{8}^{4}$ & $0, 0, 0, \frac{3}{5}, \frac{3}{5}, \frac{3}{5}$ & $2^{ B}_{-14/5}\boxtimes \Rp(S_3)$\\
 \hline 
\end{tabular} 
\end{table*}

Table~\ref{toplst} lists fermionic topological phases with $\Z_2^f$ symmetry.
The corresponding SPT is trivial and $c_\text{min}=1/2$. In the labels we use
$c$ mod $c_\text{min}$.
 For fermionic phases we
always have $\Theta=0$. Thus we add additional labels to distinguish entries.
$\Th_2$ in the table is defined as
$\Th_2 \equiv D^{-1}\sum_i \ee^{\ii 4\pi s_i} d_i^2$.
Also $\angle\Th_2 := \text{Im}\ln\Th_2$.
The table contains all fermionic topological orders with
$N=2$, $N=4$ and $D^2 \leq 600$, $N=6$ and $D^2 \leq 400$. They all have modular
extensions and are all realizable by (2+1)D fermion systems.
\begin{table*}[h] 
\caption{Fermionic topological phases with $\Z_2^f$ symmetry.  } 
\label{toplst} 
\medskip

\centering
\begin{tabular}{ |c|c|l|l|l| } 
\hline 
$N^F_c(\stck{|\Th_2|}{\angle\Th_2/2\pi})$ & $D^2$ & $d_1,d_2,\cdots$ & $s_1,s_2,\cdots$ & comment/$K$-matrix \\
 \hline 
$2_{ 0}^F(\stck{\zeta_2^1}{ 0})$ & $ 2$ & $1, 1$ & $0, \frac{1}{2}$ & trivial
$\cF_0=\sRep(\Z_2^f)$\\
\hline
$4_{ 0}^F(\stck{ 0}{0})$ & $ 4$ & $1, 1, 1, 1$ & $0, \frac{1}{2}, \frac{1}{4},-\frac{1}{4}$ & $\cF_0 \boxtimes 2_{ 1}^B$, $ K=\bpm 2\ 2\\2\ 1\epm$\\
$4_{ 1/5}^F(\stck{\zeta_2^1\zeta_3^1}{ 3/20})$ & $7.2360$ & $1, 1,\zeta_{3}^{1},\zeta_{3}^{1}$ & $0, \frac{1}{2}, \frac{1}{10},-\frac{2}{5}$ & $\cF_0 \boxtimes 2_{-14/5}^B$\\
$4_{-1/5}^F(\stck{\zeta_2^1\zeta_3^1}{-3/20})$ & $7.2360$ & $1, 1,\zeta_{3}^{1},\zeta_{3}^{1}$ & $0, \frac{1}{2},-\frac{1}{10}, \frac{2}{5}$ & $\cF_0 \boxtimes 2_{ 14/5}^B$\\
$4_{1/4}^F(\stck{\zeta_{6}^{3}}{ 1/2})$ & $13.656$ & $1,
1,\zeta_{6}^{2},\zeta_{6}^{2}=1+\sqrt{2}$ & $0, \frac{1}{2}, \frac{1}{4},-\frac{1}{4}$ &
Modular extension $SU(2)_6$ \\
\hline
$6_{ 0}^F(\stck{\zeta_2^1}{ 1/4})$ & $ 6$ & $1, 1, 1, 1, 1, 1$ & $0, \frac{1}{2}, \frac{1}{6},-\frac{1}{3}, \frac{1}{6},-\frac{1}{3}$ & $\cF_0 \boxtimes 3_{-2}^B$, $K=(3)$\\
$6_{ 0}^F(\stck{\zeta_2^1}{-1/4})$ & $ 6$ & $1, 1, 1, 1, 1, 1$ & $0, \frac{1}{2},-\frac{1}{6}, \frac{1}{3},-\frac{1}{6}, \frac{1}{3}$ & $\cF_0 \boxtimes 3_{ 2}^B$, $K=(-3)$\\
$6_{ 0}^F(\stck{\zeta_{6}^{3}}{ 1/16})$ & $ 8$ & $1, 1, 1, 1,\zeta_2^1,\zeta_2^1$ & $0, \frac{1}{2}, 0, \frac{1}{2}, \frac{1}{16},-\frac{7}{16}$ & $\cF_0 \boxtimes 3_{ 1/2}^B$\\
$6_{ 0}^F(\stck{\zeta_{6}^{3}}{-1/16})$ & $ 8$ & $1, 1, 1, 1,\zeta_2^1,\zeta_2^1$ & $0, \frac{1}{2}, 0, \frac{1}{2},-\frac{1}{16}, \frac{7}{16}$ & $\cF_0 \boxtimes 3_{-1/2}^B$\\
$6_{ 0}^F(\stck{1.0823}{ 3/16})$ & $ 8$ & $1, 1, 1, 1,\zeta_2^1,\zeta_2^1$ & $0, \frac{1}{2}, 0, \frac{1}{2}, \frac{3}{16},-\frac{5}{16}$ & $\cF_0 \boxtimes 3_{ 3/2}^B$\\
$6_{ 0}^F(\stck{1.0823}{-3/16})$ & $ 8$ & $1, 1, 1, 1,\zeta_2^1,\zeta_2^1$ & $0, \frac{1}{2}, 0, \frac{1}{2},-\frac{3}{16}, \frac{5}{16}$ & $\cF_0 \boxtimes 3_{-3/2}^B$\\
$6_{ 1/7}^F(\stck{\zeta_2^1\zeta_5^2}{-5/14})$ & $18.591$ & $1, 1,\zeta_{5}^{1},\zeta_{5}^{1},\zeta_{5}^{2},\zeta_{5}^{2}$ & $0, \frac{1}{2}, \frac{5}{14},-\frac{1}{7},-\frac{3}{14}, \frac{2}{7}$ & $\cF_0 \boxtimes 3_{ 8/7}^B$\\
$6_{-1/7}^F(\stck{\zeta_2^1\zeta_5^2}{ 5/14})$ & $18.591$ & $1, 1,\zeta_{5}^{1},\zeta_{5}^{1},\zeta_{5}^{2},\zeta_{5}^{2}$ & $0, \frac{1}{2},-\frac{5}{14}, \frac{1}{7}, \frac{3}{14},-\frac{2}{7}$ & $\cF_0 \boxtimes 3_{-8/7}^B$\\
$6_{0}^F(\stck{ \zeta_{10}^{5}}{-1/12})$ & $ 44.784$ & $1,
1,\zeta_{10}^{2},\zeta_{10}^{2},\zeta_{10}^{4},\zeta_{10}^{4}$ & $0,
\frac{1}{2}, \frac{1}{3},-\frac{1}{6}, 0, \frac{1}{2}$ & Modular extension
$\ov{SU(2)_{10}}$ \\
$6_{0}^F(\stck{ \zeta_{10}^{5}}{ 1/12})$ & $ 44.784$ & $1,
1,\zeta_{10}^{2},\zeta_{10}^{2},\zeta_{10}^{4},\zeta_{10}^{4}$ & $0,
\frac{1}{2},-\frac{1}{3}, \frac{1}{6}, 0, \frac{1}{2}$ & Modular extension $SU(2)_{10}$ \\
 \hline 
\end{tabular} 
\end{table*}

Table~\ref{SETZ2Z2f} lists fermionic topological phases with $\Z_2\times
\Z_2^f$ symmetry. The corresponding SPT is classified by $\Z_8$ and
$c_\text{min}=1/2$.
The list contains all topological orders with 
$N=6$ and $D^2\leq 300$,
$N=8$ and $D^2\leq 60$,
$N=10$ and $D^2\leq 20$.
\begin{table*}[h] 
\caption{Fermionic topological phases with $\Z_2\times \Z_2^f$ symmetry.  } 
\label{SETZ2Z2f} 
\medskip

\centering
\begin{tabular}{ |c|c|l|l|l| } 
\hline 
$N^{|\Th|}_{c}$ & $D^2$ & $d_1,d_2,\cdots$ & $s_1,s_2,\cdots$ & comment/$K$-matrix \\
\hline 
$4^{ 0}_{0}({ 2\atop  0})$ & $4$ & $1, 1, 1, 1$ & $0, 0, \frac{1}{2}, \frac{1}{2}$ & $\cE=\sRp(\Z_2\times \Z_2^f)$\\
\hline
$6^{ 0}_{0}$ & $12$ & $1, 1, 1, 1, 2, 2$ & $0, 0, \frac{1}{2}, \frac{1}{2}, \frac{1}{6}, \frac{2}{3}$ & 
SB:$K=$ $\bpm -1 & -2 \\ -2&-1\epm$\\
$6^{ 0}_{0}$ & $12$ & $1, 1, 1, 1, 2, 2$ & $0, 0, \frac{1}{2}, \frac{1}{2}, \frac{1}{3}, \frac{5}{6}$ & SB:$K=$ $\bpm 1 & 2 \\ 2&1\epm$\\
\hline
$8^{ 0}_{0}({ 0\atop 0})$ & $8$ & $1, 1, 1, 1, 1, 1, 1, 1$ & $0, 0, \frac{1}{2}, \frac{1}{2}, \frac{1}{4}, \frac{1}{4}, \frac{3}{4}, \frac{3}{4}$ & $2^{ B}_{ 1}\boxtimes \sRp(\Z_2\times \Z_2^f)$\\
$8^{ 0}_{0}({ 0\atop 0})$ & $8$ & $1, 1, 1, 1, 1, 1, 1, 1$ & $0, 0, \frac{1}{2}, \frac{1}{2}, \frac{1}{4}, \frac{1}{4}, \frac{3}{4}, \frac{3}{4}$ & SB:$4^{ F}_{0}({ 0\atop 0})$ \\
$8^{ 0}_{-14/5}({\zeta_{8}^{4}\atop  3/20})$ & $14.472$ & $1, 1, 1, 1,\zeta_{3}^{1},\zeta_{3}^{1},\zeta_{3}^{1},\zeta_{3}^{1}$ & $0, 0, \frac{1}{2}, \frac{1}{2}, \frac{1}{10}, \frac{1}{10}, \frac{3}{5}, \frac{3}{5}$ & $2^{ B}_{-14/5}\boxtimes \sRp(\Z_2\times \Z_2^f)$\\
$8^{ 0}_{14/5}({\zeta_{8}^{4}\atop -3/20})$ & $14.472$ & $1, 1, 1, 1,\zeta_{3}^{1},\zeta_{3}^{1},\zeta_{3}^{1},\zeta_{3}^{1}$ & $0, 0, \frac{1}{2}, \frac{1}{2}, \frac{2}{5}, \frac{2}{5}, \frac{9}{10}, \frac{9}{10}$ & $2^{ B}_{ 14/5}\boxtimes \sRp(\Z_2\times \Z_2^f)$\\
$8^{ 0}_{0}({ 2\atop  0})$ & $20$ & $1, 1, 1, 1, 2, 2, 2, 2$ & $0, 0, \frac{1}{2}, \frac{1}{2}, \frac{1}{10}, \frac{2}{5}, \frac{3}{5}, \frac{9}{10}$ & SB:$10^{ F}_{0}({\zeta_{2}^{1}\atop  0})$ \\
$8^{ 0}_{0}({ 2\atop  1/2})$ & $20$ & $1, 1, 1, 1, 2, 2, 2, 2$ & $0, 0, \frac{1}{2}, \frac{1}{2}, \frac{1}{5}, \frac{3}{10}, \frac{7}{10}, \frac{4}{5}$ & SB:$10^{ F}_{0}({\zeta_{2}^{1}\atop  1/2})$ \\
$8^{ 0}_{1/4}({\zeta_{2}^{1}\zeta_{6}^{3} \atop  1/2})$ & $27.313$ & $1, 1, 1, 1,\zeta_{6}^{2},\zeta_{6}^{2},\zeta_{6}^{2},\zeta_{6}^{2}$ & $0, 0, \frac{1}{2}, \frac{1}{2}, \frac{1}{4}, \frac{1}{4}, \frac{3}{4}, \frac{3}{4}$ & SB:$4^{ F}_{1/4}({\zeta_{6}^{3}\atop  1/2})$ \\
$8^{ 0}_{1/4}({\zeta_{2}^{1}\zeta_{6}^{3}\atop  1/2})$ & $27.313$ & $1, 1, 1, 1,\zeta_{6}^{2},\zeta_{6}^{2},\zeta_{6}^{2},\zeta_{6}^{2}$ & $0, 0, \frac{1}{2}, \frac{1}{2}, \frac{1}{4}, \frac{1}{4}, \frac{3}{4}, \frac{3}{4}$ & SB:$4^{ F}_{1/4}({\zeta_{6}^{3}\atop  1/2})$ \\
\hline
$10^{ 0}_{0}({ 4\atop  0})$ & $16$ & $1, 1, 1, 1, 1, 1, 1, 1, 2, 2$ & $0, 0, \frac{1}{2}, \frac{1}{2}, 0, 0, \frac{1}{2}, \frac{1}{2}, 0, \frac{1}{2}$ & SB:$8^{ F}_{0}({\sqrt{8}\atop  0})$ \\
$10^{ 0}_{0}({ 4\atop  0})$ & $16$ & $1, 1, 1, 1, 1, 1, 1, 1, 2, 2$ & $0, 0, \frac{1}{2}, \frac{1}{2}, 0, 0, \frac{1}{2}, \frac{1}{2}, 0, \frac{1}{2}$ & SB:$8^{ F}_{0}({\sqrt{8}\atop  0})$ \\
$10^{ 0}_{0}({\sqrt{8}\atop  1/8})$ & $16$ & $1, 1, 1, 1, 1, 1, 1, 1, 2, 2$ & $0, 0, \frac{1}{2}, \frac{1}{2}, 0, 0, \frac{1}{2}, \frac{1}{2}, \frac{1}{8}, \frac{5}{8}$ & SB:$8^{ F}_{0}({ 2\atop  1/8})$ \\
$10^{ 0}_{0}({\sqrt{8}\atop  1/8})$ & $16$ & $1, 1, 1, 1, 1, 1, 1, 1, 2, 2$ & $0, 0, \frac{1}{2}, \frac{1}{2}, 0, 0, \frac{1}{2}, \frac{1}{2}, \frac{1}{8}, \frac{5}{8}$ & SB:$8^{ F}_{0}({ 2\atop  1/8})$ \\
$10^{ 0}_{0}({ 0\atop 0})$ & $16$ & $1, 1, 1, 1, 1, 1, 1, 1, 2, 2$ & $0, 0, \frac{1}{2}, \frac{1}{2}, 0, 0, \frac{1}{2}, \frac{1}{2}, \frac{1}{4}, \frac{3}{4}$ & SB:$8^{ F}_{0}({ 0\atop 0})$ \\
$10^{ 0}_{0}({ 0\atop 0})$ & $16$ & $1, 1, 1, 1, 1, 1, 1, 1, 2, 2$ & $0, 0, \frac{1}{2}, \frac{1}{2}, 0, 0, \frac{1}{2}, \frac{1}{2}, \frac{1}{4}, \frac{3}{4}$ & SB:$8^{ F}_{0}({ 0\atop 0})$ \\
$10^{ 0}_{0}({\sqrt{8}\atop -1/8})$ & $16$ & $1, 1, 1, 1, 1, 1, 1, 1, 2, 2$ & $0, 0, \frac{1}{2}, \frac{1}{2}, 0, 0, \frac{1}{2}, \frac{1}{2}, \frac{3}{8}, \frac{7}{8}$ & SB:$8^{ F}_{0}({ 2\atop -1/8})$ \\
$10^{ 0}_{0}({\sqrt{8}\atop -1/8})$ & $16$ & $1, 1, 1, 1, 1, 1, 1, 1, 2, 2$ & $0, 0, \frac{1}{2}, \frac{1}{2}, 0, 0, \frac{1}{2}, \frac{1}{2}, \frac{3}{8}, \frac{7}{8}$ & SB:$8^{ F}_{0}({ 2\atop -1/8})$ \\
 \hline 
\end{tabular} 
\end{table*}

Table~\ref{SETZ4f} lists fermionic topological phases with $\Z_4^f$ symmetry.
The corresponding SPT is trivial and $c_\text{min}=1$.
The list contains all topological orders with 
$N=6$ and $D^2\leq 100$,
$N=8$ and $D^2\leq 60$,
$N=10$ and $D^2\leq 20$.
\begin{table*}[h] 
\caption{Fermionic topological phases with $ \Z_4^f$ symmetry.  } 
\label{SETZ4f} 
\medskip

\centering
\begin{tabular}{ |c|c|l|l|l| } 
\hline 
$N^{|\Th|}_{c}$ & $D^2$ & $d_1,d_2,\cdots$ & $s_1,s_2,\cdots$ & comment/$K$-matrix \\
\hline 
$4^{ 0}_{0}$ & $4$ & $1, 1, 1, 1$ & $0, 0, \frac{1}{2}, \frac{1}{2}$ & $\cE=\sRp(\Z_4^f)$\\
\hline
$6^{ 0}_{0}$ & $12$ & $1, 1, 1, 1, 2, 2$ & $0, 0, \frac{1}{2}, \frac{1}{2}, \frac{1}{6}, \frac{2}{3}$ & 
 $K=
-\begin{pmatrix}
1&2 \\
2&1\\ 
\end{pmatrix}
$
\\
$6^{ 0}_{0}$ & $12$ & $1, 1, 1, 1, 2, 2$ & $0, 0, \frac{1}{2}, \frac{1}{2}, \frac{1}{3}, \frac{5}{6}$ & 
 $K=
\begin{pmatrix}
1&2 \\
2&1\\ 
\end{pmatrix}
$
\\
\hline
$8^{ 0}_{0}$ & $8$ & $1, 1, 1, 1, 1, 1, 1, 1$ & $0, 0, \frac{1}{2}, \frac{1}{2}, \frac{1}{4}, \frac{1}{4}, \frac{3}{4}, \frac{3}{4}$ & $2^{ B}_{-1}\boxtimes \sRp(\Z_4^f)$\\
$8^{ 0}_{0}$ & $8$ & $1, 1, 1, 1, 1, 1, 1, 1$ & $0, 0, \frac{1}{2}, \frac{1}{2}, \frac{1}{4}, \frac{1}{4}, \frac{3}{4}, \frac{3}{4}$ & $2^{ B}_{ 1}\boxtimes \sRp(\Z_4^f)$\\
$8^{ 0}_{-14/5}$ & $14.472$ & $1, 1, 1, 1,\zeta_{3}^{1},\zeta_{3}^{1},\zeta_{3}^{1},\zeta_{3}^{1}$ & $0, 0, \frac{1}{2}, \frac{1}{2}, \frac{1}{10}, \frac{1}{10}, \frac{3}{5}, \frac{3}{5}$ & $2^{ B}_{-14/5}\boxtimes \sRp(\Z_4^f)$\\
$8^{ 0}_{14/5}$ & $14.472$ & $1, 1, 1, 1,\zeta_{3}^{1},\zeta_{3}^{1},\zeta_{3}^{1},\zeta_{3}^{1}$ & $0, 0, \frac{1}{2}, \frac{1}{2}, \frac{2}{5}, \frac{2}{5}, \frac{9}{10}, \frac{9}{10}$ & $2^{ B}_{ 14/5}\boxtimes \sRp(\Z_4^f)$\\
$8^{ 0}_{0}$ & $20$ & $1, 1, 1, 1, 2, 2, 2, 2$ & $0, 0, \frac{1}{2}, \frac{1}{2}, \frac{1}{10}, \frac{2}{5}, \frac{3}{5}, \frac{9}{10}$ &  SB:$10^{ F}_{0}({\zeta_{2}^{1}\atop  0})$ \\
$8^{ 0}_{0}$ & $20$ & $1, 1, 1, 1, 2, 2, 2, 2$ & $0, 0, \frac{1}{2}, \frac{1}{2}, \frac{1}{5}, \frac{3}{10}, \frac{7}{10}, \frac{4}{5}$ & SB:$10^{ F}_{0}({\zeta_{2}^{1}\atop  1/2})$ \\
\hline
$10^{ 0}_{0}({ 4\atop  0})$ & $16$ & $1, 1, 1, 1, 1, 1, 1, 1, 2, 2$ & $0, 0, \frac{1}{2}, \frac{1}{2}, 0, 0, \frac{1}{2}, \frac{1}{2}, 0, \frac{1}{2}$ & SB:$8^{ F}_{0}({\sqrt{8}\atop  0})$ \\
$10^{ 0}_{0}({ 4\atop  0})$ & $16$ & $1, 1, 1, 1, 1, 1, 1, 1, 2, 2$ & $0, 0, \frac{1}{2}, \frac{1}{2}, 0, 0, \frac{1}{2}, \frac{1}{2}, 0, \frac{1}{2}$ & SB:$8^{ F}_{0}({\sqrt{8}\atop  0})$ \\
$10^{ 0}_{0}({\sqrt{8}\atop  1/8})$ & $16$ & $1, 1, 1, 1, 1, 1, 1, 1, 2, 2$ & $0, 0, \frac{1}{2}, \frac{1}{2}, 0, 0, \frac{1}{2}, \frac{1}{2}, \frac{1}{8}, \frac{5}{8}$ & SB:$8^{ F}_{0}({ 2\atop  1/8})$ \\
$10^{ 0}_{0}({\sqrt{8}\atop  1/8})$ & $16$ & $1, 1, 1, 1, 1, 1, 1, 1, 2, 2$ & $0, 0, \frac{1}{2}, \frac{1}{2}, 0, 0, \frac{1}{2}, \frac{1}{2}, \frac{1}{8}, \frac{5}{8}$ & SB:$8^{ F}_{0}({ 2\atop  1/8})$ \\
$10^{ 0}_{0}({ 0\atop 0})$ & $16$ & $1, 1, 1, 1, 1, 1, 1, 1, 2, 2$ & $0, 0, \frac{1}{2}, \frac{1}{2}, 0, 0, \frac{1}{2}, \frac{1}{2}, \frac{1}{4}, \frac{3}{4}$ & SB:$8^{ F}_{0}({ 0\atop 0})$ \\
$10^{ 0}_{0}({ 0\atop 0})$ & $16$ & $1, 1, 1, 1, 1, 1, 1, 1, 2, 2$ & $0, 0, \frac{1}{2}, \frac{1}{2}, 0, 0, \frac{1}{2}, \frac{1}{2}, \frac{1}{4}, \frac{3}{4}$ & SB:$8^{ F}_{0}({ 0\atop 0})$ \\
$10^{ 0}_{0}({\sqrt{8}\atop -1/8})$ & $16$ & $1, 1, 1, 1, 1, 1, 1, 1, 2, 2$ & $0, 0, \frac{1}{2}, \frac{1}{2}, 0, 0, \frac{1}{2}, \frac{1}{2}, \frac{3}{8}, \frac{7}{8}$ & SB:$8^{ F}_{0}({ 2\atop -1/8})$ \\
$10^{ 0}_{0}({\sqrt{8}\atop -1/8})$ & $16$ & $1, 1, 1, 1, 1, 1, 1, 1, 2, 2$ & $0, 0, \frac{1}{2}, \frac{1}{2}, 0, 0, \frac{1}{2}, \frac{1}{2}, \frac{3}{8}, \frac{7}{8}$ & SB:$8^{ F}_{0}({ 2\atop -1/8})$ \\
 \hline 
\end{tabular} 
\end{table*}



\end{document}